\title{The Equivalence Problem of E-Pattern Languages with Length Constraints is Undecidable} 
\titlerunning{On Pattern Languages with Length Constraints} 
\author{Dirk Nowotka}{Kiel University, Department of Computer Science, Germany}{dn@informatik.uni-kiel.de}{}{}
\author{Max Wiedenhöft}{Kiel University, Department of Computer Science, Germany}{maw@informatik.uni-kiel.de}{}{This work was supported by the DFG project number 437493335.}
\authorrunning{D. Nowotka and M. Wiedenhöft} 
\keywords{Patterns, Pattern Languages, Length Constraints, Regular Constraints, Decidability, Undecidability, Membership, Inclusion, Equivalence}
\DeclareMathOperator{\Fact}{Fact}
\DeclareMathOperator{\Pref}{Pref}
\DeclareMathOperator{\Suff}{Suff}
\DeclareMathOperator{\var}{var}
\DeclareMathOperator{\enc}{enc}
\def\N{\mathbb{N}}
\def\ta{\mathtt{a}}
\def\tb{\mathtt{b}}
\def\Pat{\mathtt{Pat}}
\def\PatLen{\mathtt{Pat}_{\Sigma,\mathcal{C}_{Len}}}
\def\PatReg{\mathtt{Pat}_{\Sigma,\mathcal{C}_{Reg}}}
\def\PatRegLen{\mathtt{Pat}_{\Sigma,\mathcal{C}_{Reg,Len}}}
\def\ValC{\mathtt{ValC}}
\def\ValCP{\mathtt{ValC}^{\rightleftarrows}}
\newtheorem{question}{Question}
\newcommand{\al}{\operatorname{alph}}
\newcolumntype{C}{>{\hspace{3pt}}c<{\hspace{3pt}}}
\begin{document}

\maketitle

%%%%%%%%%%%%%%%%%%%%%%%%
% ##### ABSTRACT ##### %
%%%%%%%%%%%%%%%%%%%%%%%%
\begin{abstract}
Patterns are words with terminals and variables. The language of a pattern is the set of words obtained by uniformly substituting all variables with words that contain only terminals. Length constraints restrict valid substitutions of variables by associating the variables of a pattern with a system (or disjunction of systems) of linear diophantine inequalities. Pattern languages with length constraints contain only words in which all variables are substituted to words with lengths that fulfill such a given set of length constraints. We consider membership, inclusion, and equivalence problems for erasing and non-erasing pattern languages with length constraints. 

Our main result shows that the erasing equivalence problem - one of the most prominent open problems in the realm of patterns - becomes undecidable if length constraints are allowed in addition to variable equality. 

Additionally, it is shown that the terminal-free inclusion problem, a prominent problem which has been shown to be undecidable in the binary case for patterns without any constraints, is also generally undecidable for all larger alphabets in this setting.

Finally, we also show that considering regular constraints, i.e., associating variables also with regular languages as additional restrictions together with length constraints for valid substitutions, results in undecidability of the non-erasing equivalence problem. This sets a first upper bound on constraints to obtain undecidability in this case, as this problem is trivially decidable in the case of no constraints and as it has unknown decidability if only regular- or only length-constraints are considered.
\end{abstract}

\newpage

%%%%%%%%%%%%%%%%%%%%%%%%%%%%
% ##### INTRODUCTION ##### %
%%%%%%%%%%%%%%%%%%%%%%%%%%%%
\section{Introduction}
A \emph{pattern} is a finite word consisting only of symbols from a finite set of \emph{letters} $\Sigma = \{\ta_1,...,\ta_\sigma\}$, also called \emph{terminals}, and from an infinite set of \emph{variables} $X = \{x_1, x_2, ... \}$ such that we have $\Sigma \cap X = \emptyset$. It is a natural and compact device to define formal languages. From patterns, we obtain words consisting only of terminals using a \emph{substitution} $h$, a terminal preserving morphism that maps all variables in a pattern to words over the terminal alphabet. The \emph{language} of a pattern is the set of all words obtainable using arbitrary substitutions. 

We differentiate between two kinds of substitutions. In the original definition of patterns and pattern languages introduced by Angluin \cite{DBLP:journals/jcss/Angluin80}, only words obtained by \emph{non-erasing substitutions} are considered. Here, all variables are required to be mapped to non-empty words. The resulting languages are called \emph{non-erasing (NE) pattern languages}. Later, so called \emph{erasing-/extended-} or just \emph{E-pattern languages} have been introduced by Shinohara \cite{Shinohara1983}. Here, variables may also be substituted by the empty word $\varepsilon$. Consider, for example, the pattern $\alpha := x_1 \ta x_2 \tb x_1$. Then, if we map $x_1$ to $\ta\tb$ and $x_2$ to $\tb\tb\ta\ta$ using a substitution $h$, we obtain the word $h(\alpha) = \ta\tb\ta\tb\tb\ta\ta\tb\ta\tb$. Considering the E-pattern language of $\alpha$, we could also map $x_1$ to the empty word and obtain any word in the language $\{\ta\}\cdot\Sigma^*\cdot\{\tb\}$.

Due to its practical and simple definition, patterns and their corresponding languages occur in numerous areas in computer science and discrete mathematics. These include, for example, unavoidable patterns~\cite{Jiang1994,lothaire1997}, algorithmic learning theory~\cite{DBLP:journals/jcss/Angluin80,FERNAU201844,Shinohara1995}, word equations~\cite{lothaire1997}, theory of extended regular expressions with back references~\cite{FREYDENBERGER20191}, or database theory~\cite{FreydenbergerP21,SchmidSchweikardtPODS2022}.

There are three main decision problems regarding patterns and pattern languages. Those are the \emph{membership problem} (and its variations~\cite{gawrychowski_et_al:LIPIcs.MFCS.2021.48,Manea2022, Fleischmann2023}), the \emph{inclusion problem}, and the \emph{equivalence problem}. All are considered in the erasing (E) and non-erasing (NE) cases. The membership problem determines if a word belongs to the language of a pattern. This problem has been shown to be NP-complete for both, erasing- and non-erasing, pattern languages~\cite{DBLP:journals/jcss/Angluin80,Jiang1994}. The inclusion problem determines whether the language of one pattern is a subset of the language of another pattern. It has been shown to be generally undecidable by Jiang et al. in \cite{Jiang1995}. Freydenberger and Reidenbach~\cite{Freydenberger2010} as well as Bremer and Freydenberger~\cite{BREMER201215} improved that result and showed that it is undecidable for all bounded alphabets of size $|\Sigma| \geq 2$ for both erasing and non-erasing pattern languages. The equivalence problem asks whether the languages of two patterns are equal. For NE-pattern languages, this problem is trivially decidable and characterized by the equality of patterns up to a renaming of their variables~\cite{DBLP:journals/jcss/Angluin80}. The decidability of the erasing case is one of the major open problems in the field~\cite{Jiang1995,Reidenbach2004-1,Reidenbach2004-2,Ohlebusch1996,Reidenbach2007}. For terminal-free patterns, however, i.e., patterns without any terminal letters, the inclusion problem as well as the equivalence problem for E-pattern languages have been characterized and shown to be NP-complete~\cite{Jiang1995,DBLP:journals/ipl/EhrenfeuchtR79a}. In addition to that, in the case of terminal-free NE-pattern languages, Saarela~\cite{saarela:LIPIcs.ICALP.2020.140} has shown that the inclusion problem is undecidable in the case of a binary underlying alphabet.

Over time, various extensions to patterns and pattern languages have been introduced, either, to obtain additional expressibility due to some practical context or to get closer to an answer for the remaining open problems. Some examples are the bounded scope coincidence degree, patterns with bounded treewidth, $k$-local patterns, or strongly-nested patterns (see \cite{Day2018} and references therein). Koshiba~\cite{Koshiba1995} introduced so called \emph{typed patterns} that restrict substitutions of variables to types, i.e., arbitrary recursive languages. Geilke and Zilles~\cite{Geilke2011} extended this recently to the notion of \emph{relational patterns} and \emph{relational pattern languages}. In a similar more recent context and with specific relational constraints such as equal length, besides string equality, Freydenberger discusses in \cite{Freydenberger2018}, building on \cite{10.1145/2389241.2389250,FREYDENBERGER2013892,Freydenberger2017}, so called conjunctive regular path queries (CRPQs) with string equality and with length equality, which can be understood as systems of (relational) patterns with regular constraints (restricting variable substitutions to words of given regular languages) and with equal length constraints between variables.

In \cite{Nowotka2024}, a special form of typed patterns has been considered, i.e., patterns with regular constraints (also comparable to singleton sets of H-Systems \cite{FREYDENBERGER2013892}). Here, like mentioned before, variables may be restricted to arbitrary regular languages and the same variable may occur more than once. It has been shown that this notion suffices to obtain undecidability for both main open problems regarding pattern languages, i.e., the equivalence problem of E-pattern languages and the inclusion problem of terminal-free NE-pattern languages with an alphabet greater or equal to $3$. Another natural extension other than regular constraints is the notion of length constraints. Here, instead of restricting the choice of words for the substitution of variables, length constraints just restrict the lengths of substitution of variables in relation to each other. In the field of word equations, length constraints have been considered as a natural extension for a long time and, e.g., answering the decidability of the question whether word equations with length constraints have a solution, is a long outstanding problem (see, e.g., \cite{Lin2018QuadraticWE} and the references therein).

In this paper, we consider that natural extension of length constraints on patterns, resulting in the class of patterns called \emph{patterns with length constraints}. In general, we say that a \emph{length constraint} $\ell$ is a disjunction of systems of linear (diophantine) inequalities over the variables of $X$. We denote the \emph{set of all length constraints} by $\mathcal{C}_{Len}$. A \emph{pattern with length constraints} $(\alpha,\ell_\alpha)\in (\Sigma\cup X)^*\times\mathcal{C}_{Len}$ is a pattern associated with a length constraint. We say that a substitution $h$ is $\ell_\alpha$-valid if all variables are substituted according to $\ell_\alpha$. Now, the language of $(\alpha,\ell_\alpha)$ is defined analogously to pattern languages but restricted to $\ell_\alpha$-valid substitutions in the erasing- and non-erasing cases.

We examine erasing (E) and non-erasing (NE) pattern languages with length constraints. It can be shown, following existing results for patterns without additional constraints, that the membership problem for both cases in NP-complete. The inclusion problem is shown to be undecidable in both cases, too, notably even for terminal-free pattern languages, which is a difference to the decidability of the inclusion problem in the erasing case for pattern languages without any constraints, and which answers the case of alphabet sizes greater or equal to $3$ for non-erasing patterns without constraints. The main result of this paper is the undecidability of the equivalence problem for erasing pattern languages with length constraints in both cases, terminal-free and general, giving an answer to a problem of which the decidability has been an open problem for a long time in the case of no constraints. The final result shows that regular constraints and length constraints combined suffice to show undecidability of the equivalence problem for non-erasing pattern languages, a problem that is trivially decidable in case of no constraints and still open in the cases of just regular- or just length-constraints.

We begin by introducing the necessary notation in Section \ref{section:prelims} and then continue in Section \ref{section:patlang-len} with an examination of patterns with length constraints and their corresponding languages. Here, we will briefly discuss all results regarding the related decision problems (membership, inclusion, equivalence) while referring, due to extensiveness of the proofs, to the appendix for more details. In Section \ref{section:patlang-reglen}, we continue with an examination of patterns with regular and length constraints, also giving a full picture of the related decision problems. Finally, in Section \ref{section:conclusion}, we close this paper with a summary and discussion of the obtained results, the methods that were used, and the open problems that remain.

%%%%%%%%%%%%%%%%%%%%%%%
% ##### PRELIMS ##### %
%%%%%%%%%%%%%%%%%%%%%%%
\section{Preliminaries}
\label{section:prelims}
Let $\N$ denote the natural numbers.
For $n,m \in \mathbb{N}$ set $[m,n] := \{k \in \mathbb{N} \mid m \leq k \leq n\}$. 
Denote $[n] := [1,n]$ and $[n]_0 := [0,n]$.
The powerset of any set $A$ is denoted by $\mathcal{P}(A)$. 
An \emph{alphabet} $\Sigma$ is a non-empty finite set whose elements are called \emph{letters}.  
A \emph{word} is a finite sequence of letters from $\Sigma$. 
Let $\Sigma^*$ be the set of all finite words over $\Sigma$, thus it is a free monoid with concatenation as operation and the empty 
word $\varepsilon$ as the neutral element. Set $\Sigma^+ := \Sigma^* \setminus \{\varepsilon\}$.
We call the number of letters in a word $w \in \Sigma^*$ \emph{length} of $w$, denoted by $|w|$.
Therefore, we have $|\varepsilon| = 0$.
Let $\Sigma^k$ denote the set of all words of length $k\in\N$ (resp. $\Sigma^{\leq k}$ or $\Sigma^{\geq k}$).
If $w = xyz$ for some $x,y,z\in\Sigma^*$, we call $x$ a \emph{prefix} of $w$, $y$ a \emph{factor} of $w$,
and $z$ a \emph{suffix} of $w$ and denote the sets of all prefixes, factors, and suffixes of $w$ by $\Pref(w)$, $\Fact(w)$,
and $\Suff(w)$ respectively.
For words $w,u\in\Sigma^*$, let $|w|_u$ denote the number of distinct occurrences of $u$ in $w$ as a factor.
Denote $\Sigma^k := \{w \in \Sigma^* \mid |w| = k\}$.
For $w \in \Sigma^*$, let $w[i]$ denote $w$'s $i^{th}$ letter for all $i \in [\vert w \vert]$. 
For reasons of compactness, we denote $w[i] \cdots w[j]$ by $w[i \cdots j]$ for all $i,j \in [\vert w \vert]$ with $i < j$.
Set $\al(w) :=  \{\ta \in \Sigma \mid \exists i \in [\vert w \vert] : w[i] = \ta\}$ as $w$’s alphabet.

Now, we introduce the notion of patterns and pattern languages with and without additional constraints (i.e., length constraints, length constraints, and a combination of both). After that, we briefly introduce the machine models used in the proofs of the main results of this paper.

\subsection{Patterns and Pattern Languages with Constraints}

Let $X$ be a countable set of variables such that $\Sigma \cap X = \emptyset$.
A \emph{pattern} is then a non-empty, finite word over $\Sigma \cup X$.
The set of all patterns over $\Sigma \cup X$ is denoted by $Pat_\Sigma$. 
For example, $x_1 \ta x_2 \tb \ta x_2 x_3$ is a pattern over $\Sigma = \{\ta,\tb\}$ with $x_1,x_2,x_3\in X$.
For a pattern $\alpha\in Pat_\Sigma$, let $\var(\alpha) := \{\ x \in X\ |\ |\alpha|_x \geq 1\ \}$ denote the set of variables occurring in $p$.
A \emph{substitution of $\alpha$} is a morphism $h : (\Sigma \cup X)^* \to \Sigma^*$ such that $h(\ta) = \ta$ for all $\ta \in \Sigma$ and 
$h(x) \in \Sigma^*$ for all $x \in X$. 
If we have $h(x) \neq \varepsilon$ for all $x \in \var(\alpha)$, we call $h$ a \emph{non-erasing substitution} for $\alpha$. 
Otherwise, $h$ is an \emph{erasing substitution} for $\alpha$. The set of all substitutions w.r.t.~$\Sigma$ is denoted by $H_\Sigma$.
If $\Sigma$ is clear from the context, we may write just $H$.
Given a pattern $\alpha\in Pat_\Sigma$, its erasing pattern language $L_E(\alpha)$ and its non-erasing pattern language $L_{NE}(\alpha)$
are defined respectively by
\begin{align*}
	L_{E}(\alpha) &:= \{\ h(\alpha)\ |\ h\in H, h(x) \in \Sigma^* \text{ for all } x\in\var(\alpha)\}, \text{ and } \\
	L_{NE}(\alpha) &:= \{\ h(\alpha)\ |\ h\in H, h(x) \in \Sigma^+ \text{ for all } x\in\var(\alpha)\}.
\end{align*}

A \emph{length constraint} $\ell$ is a disjunction of systems of linear diophantine inequalities over variables of $X$.
We denote the \emph{set of all length constraints} by $\mathcal{C}_{Len}$. 
A \emph{pattern with length constraints} is a pair $(\alpha,\ell_\alpha)\in\Pat_\Sigma\times\mathcal{C}_{Len}$ where all variables occurring in $\ell_\alpha$ must occur in $\alpha$.
We denote the \emph{set of all patterns with length constraints} by $\PatLen$.
For some $(\alpha,\ell_\alpha)\in\PatLen$ and $h\in H$, we say that $h$ is a \emph{$\ell_\alpha$-valid substitution} if $\ell_\alpha$ is satisfied when associating each variable $x\in\var(\alpha)$ in $\ell_\alpha$ with the value $|h(x)|$, i.e., the length of the substitution of the variable $x$. Consider the following example.
\begin{example}
    Let $\Sigma = \{\ta,\tb\}$ and $\alpha = x_1\ \ta\ x_2\ \ta\ x_1$. Assume we have the length constraint $\ell_\alpha$ defined by the following system of linear diophantine inequalities:
    \begin{align*}
        2x_1 + x_2 \leq 5\\
        x_2 \geq 1
    \end{align*}
    Then, we know that any $\ell_\alpha$-valid substitution $h\in H$ cannot have $h(x_1) = u$ for some $u\in\Sigma^*$ with $|u| \geq 3$, as this would already imply $2|u| = 2|h(x)| \geq 2\cdot3 = 6$ and $6 \geq 5$. Also, we see that $h(x_2)\neq\varepsilon$ as the second constraint demands a substitution of length at least 1. So, for example we could have $h(\alpha) = \tb\tb\ta\tb\ta\tb\tb$ or $h(\alpha) = \tb\ta\tb\ta\tb\ta\tb$ but not $h(\alpha) = \ta\ta$ or $h(\alpha) = \tb\tb\ta\ta\ta\ta\tb\tb$. 
\end{example}
We denote the set of all $\ell_\alpha$-valid substitutions by $H_{\ell_{\alpha}}$.
The notion of pattern languages is extended by the following. For any $(\alpha,\ell_\alpha)\in\PatLen$ we denote by 
    $$L_E(\alpha,\ell_\alpha) := \{\ h(\alpha)\ |\ h\in H_{\ell_{\alpha}}, h(x)\in\Sigma^*\text{ for all } x\in\var(\alpha)\ \}$$
the \emph{erasing pattern language with length constraints} of $(\alpha,\ell_\alpha)$ and by
    $$L_{NE}(\alpha,\ell_\alpha) := \{\ h(\alpha)\ |\ h\in H_{\ell_{\alpha}}, h(x)\in\Sigma^+\text{ for all } x\in\var(p)\ \}$$
the \emph{non-erasing pattern language with length constraints} of $(\alpha,\ell_\alpha)$.

Similar to length constraints, we can define \emph{regular constraints} for variables in a pattern.
Let $\mathcal{L}_{Reg}$ be the set of all regular languages. We call a mapping $r : X \rightarrow \mathcal{L}_{Reg}$
a \emph{regular constraint} on $X$. If not stated otherwise, we always have $r(x) = \Sigma^*$.
We denote the \emph{set of all regular constraints} by $\mathcal{C}_{Reg}$.
For some $r\in\mathcal{C}_{Reg}$ we define the \emph{language of a variable} $x\in X$ by $L_r(x) = r(x)$.
If $r$ is clear by the context, we omit it and just write $L(x)$.
A \emph{pattern with regular constraints} is a pair $(\alpha,r_\alpha) \in Pat_\Sigma\times\mathcal{C}_{Reg}$.
We denote the \emph{set of all patterns with regular constraints} by $\PatReg$.
For some $(\alpha,r_\alpha)\in \PatReg$ and $h\in H$, we say that $h$ is a \emph{$r_\alpha$-valid substitution}
if $h(x) \in L(x)$ for all $x\in\var(\alpha)$. The set of all $r_\alpha$-valid substitutions is denoted by $H_{r_p}$.
Given some $(\alpha,r_\alpha)\in\PatReg$, we analogously define the \emph{erasing- and non-erasing pattern languages with regular constraints} $L_E(\alpha,r_\alpha)$ and $L_{NE}(\alpha,r_\alpha)$ over $H_{r_\alpha}$ as we did for length constraints.

Combining both, we say that a triple $(\alpha,r_\alpha,\ell_\alpha)\in\Pat\times\mathcal{C}_{Reg}\times\mathcal{C}_{Len}$ is a \emph{pattern with regular and length constraints} and denote the \emph{set of all patterns with regular and length constraints} by $\PatRegLen$. Given some $(\alpha,r_\alpha,\ell_\alpha)\in\PatRegLen$, we say that a substitution $h\in H$ is \emph{$r_\alpha$-$\ell_\alpha$-valid} if it is $r_\alpha$-valid and $\ell_\alpha$-valid and denote the set of all $r_\alpha$-$\ell_\alpha$-valid substitutions by $H_{r_\alpha,\ell_\alpha}$. Additionally, we analogously define the \emph{erasing- and non-erasing pattern languages with regular and length constraints} $L_E(\alpha,r_\alpha,\ell_\alpha)$ and $L_{NE}(\alpha,r_\alpha,\ell_\alpha)$ over $H_{r_\alpha,\ell_\alpha}$ as we did in the previous two cases for length constraints and regular constraints.

In the proofs of our main results Theorem \ref{theorem:patlang-len-erasing-equivalence-undecidable}, Theorem \ref{theorem:patlang-len-nonerasing-inclusion-tf-undecidability}, and Theorem \ref{theorem:patlang-reglen-nonerasing-equivalence-undecidability}, we use two different automata models with undecidable emptiness problems to obtain each result. We will use the notion of \emph{nondeterministic 2-counter automata without input} (see, e.g., \cite{Iberra1978}), as well as, the notion of a very specific universal Turing machine $U$ as it is used in \cite{BREMER201215}. As mentioned in the introduction, due to their extensive lengths, the proofs of each of the main theorems is found in the appendix. In the main body, only a rough explanation, not relying on a formal definition of the machine types, of each proof idea is provided. Hence, the formal definition of the notion of \emph{nondeterministic 2-counter automata without input} is found in Appendix~\ref{section:defAut}. As the construction and the arguments in the proof of Theorem \ref{theorem:patlang-len-nonerasing-inclusion-tf-undecidability} work independently from the formal definition of $U$, we refer to \cite{BREMER201215} for more details. Regarding both machines, assuming $A$ is a machine of one of the previously mentioned types, in the following, $\ValC(A)$ denotes the set of all binary encodings (with a specific format) of valid computation of $A$.

We continue with our overview of the results regarding pattern languages with length constraints.

%%%%%%%%%%%%%%%%%%%%%%%%%%%%%%%%%%
% ##### LENGTH CONSTRAINTS ##### %
%%%%%%%%%%%%%%%%%%%%%%%%%%%%%%%%%%
\section{Results for Pattern Languages with Length Constraints}
\label{section:patlang-len}
To begin developing an understanding of the additional expressiveness gained by allowing for length constraints, notice the following observation for patterns with length or with regular constraints which does not necessarily hold for patterns without any constraints.

\begin{lemma}
    For each pattern with length constraints $(\alpha,\ell_\alpha)\in\PatLen$ (and for each pattern with regular constraints $(\beta,r_\beta)\in \PatReg$), there exists some adapted set of length constraints $\ell_\alpha'\in\mathcal{C}_{Len}$ (resp., some adapted set of regular constraints $r_\beta'\in\mathcal{C}_{Reg}$) such that $L_{NE}(\alpha,\ell_\alpha) = L_{E}(\alpha,\ell_\alpha')$ (and $L_{NE}(\beta,r_\beta) = L_E(\beta,r_\beta)$ resp.).
\end{lemma}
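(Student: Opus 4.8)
The plan is to handle both claims by the same device: simulate the ``non-erasing'' requirement $h(x)\neq\varepsilon$ for each variable $x\in\var(\alpha)$ internally, using the constraints already available, and otherwise leave the substitution freedom of the erasing language untouched. For the length-constraint case, I would take $\ell_\alpha'$ to be the conjunction of $\ell_\alpha$ together with the inequalities $\len{x}\geq 1$ for every $x\in\var(\alpha)$ (formally, a single system of linear diophantine inequalities obtained by adding these rows, or, if $\ell_\alpha$ is a disjunction of systems, by adding these rows to each disjunct). Then I would argue that a substitution $h$ is $\ell_\alpha'$-valid if and only if it is $\ell_\alpha$-valid and maps every variable of $\alpha$ to a nonempty word; hence $H_{\ell_\alpha'}=H_{\ell_\alpha}\cap\{h : h(x)\in\Sigma^+\text{ for all }x\in\var(\alpha)\}$. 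Plugging this into the definitions of $L_E(\alpha,\ell_\alpha')$ and $L_{NE}(\alpha,\ell_\alpha)$ gives the two sets the same defining condition on $h$, so the languages coincide.

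For the regular-constraint case the construction is the analogous one: set $r_\beta'(x):=r_\beta(x)\cap\Sigma^+$ for every $x\in\var(\beta)$ and $r_\beta'(x):=\Sigma^*$ otherwise (the intersection of a regular language with $\Sigma^+$ is again regular, so $r_\beta'\in\mathcal{C}_{Reg}$ is well defined). Again an $r_\beta'$-valid substitution is exactly an $r_\beta$-valid substitution that is non-erasing on $\var(\beta)$, and so $L_{NE}(\beta,r_\beta)=L_E(\beta,r_\beta')$ by the same direct comparison of definitions. I note in passing that the excerpt's statement ``$L_{NE}(\beta,r_\beta)=L_E(\beta,r_\beta)$'' contains an evident typo for ``$=L_E(\beta,r_\beta')$''; I would state and prove the corrected equality.

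Concretely the write-up has three short steps, done once for $\ell$ and once for $r$: (i) exhibit the adapted constraint and check it lies in the right class ($\mathcal{C}_{Len}$ resp.\ $\mathcal{C}_{Reg}$); (ii) prove the set equality $H_{\ell_\alpha'}=\{h\in H_{\ell_\alpha}:h(x)\in\Sigma^+\text{ for }x\in\var(\alpha)\}$, which is immediate from the definition of $\ell_\alpha$-validity since appending the rows $\len{x}\ge1$ is satisfied exactly when each $\len{h(x)}\ge1$; (iii) conclude equality of languages by substituting this into the two language definitions. There is essentially no obstacle here — the only point requiring a line of care is that the constraint $\len{x}\ge1$ (resp.\ the intersection with $\Sigma^+$) must be imposed for \emph{all} variables occurring in $\alpha$, not merely those already mentioned in $\ell_\alpha$ (resp.\ already constrained by $r_\beta$), since the erasing language otherwise permits erasing precisely those unconstrained variables; getting the quantifier right is the whole content of the lemma.
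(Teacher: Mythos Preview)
Your proposal is correct and follows essentially the same approach as the paper: adjoin the constraint $x\geq 1$ for each $x\in\var(\alpha)$ to $\ell_\alpha$, respectively intersect each $r_\beta(x)$ with $\Sigma^+$, and observe that this forces non-erasing substitutions inside the erasing language. Your write-up is in fact slightly more careful than the paper's (you explicitly handle the disjunction case and flag the typo $r_\beta$ vs.\ $r_\beta'$), but the underlying construction is identical.
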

\begin{proof}
    Indeed, given some pattern with length constraints $(\alpha,\ell_\alpha)\in\PatLen$, we can define the length constraint $\ell_\alpha'$ by using all constraints in $\ell_\alpha$ and additionally, for each $x\in\var(\alpha)$, adding the constraint $x \geq 1$ to $\ell_\alpha'$. Then, $L_{E}(\alpha,\ell_\alpha') = L_{NE}(\alpha,\ell_\alpha)$.
    
    We obtain the same result for pattern languages with regular constraints by intersecting the language of all variables with $\Sigma^+$, i.e., given any language $L(x)$ for some variable $x\in X$ that is defined by some regular constraint $r\in\mathcal{C}_{Reg}$, we can define a regular constraint $r'$ that defines the language $L'(x) = L(x) \cap \Sigma^+$. Then, given some pattern $\alpha\in\Pat$, we obtain $L_{NE}(\alpha,r) = L_E(\alpha,r')$.
    \end{proof}

The following statement then immediately follows by the previous lemma.

\begin{corollary}\label{corollary:patlang-len-erasing-atleastashardas-nonerasing}
    Solving any problem for erasing pattern languages with length (or regular) constraints is at least as hard as solving the same problem for non-erasing pattern languages with length constraints.
\end{corollary}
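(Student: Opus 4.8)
The plan is to observe that each of the three decision problems under consideration (membership, inclusion, equivalence) is stated purely in terms of the pattern languages involved, so it suffices to exhibit an effective, instance-wise transformation that turns a non-erasing instance into an erasing instance with literally the same underlying languages. First I would fix a decision problem $P$ for non-erasing pattern languages with length constraints; an instance of $P$ consists of one pattern with length constraints $(\alpha,\ell_\alpha)$ together with a word $w$ (membership), or of two such patterns $(\alpha,\ell_\alpha)$ and $(\beta,\ell_\beta)$ (inclusion, equivalence). I would then apply the construction of the preceding Lemma to each pattern occurring in the instance, replacing $(\alpha,\ell_\alpha)$ by $(\alpha,\ell_\alpha')$, where $\ell_\alpha'$ adds the inequality $x\geq 1$ for every $x\in\var(\alpha)$; the word $w$, if present, is left unchanged.

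Next I would verify the two properties needed for this to be a reduction. Correctness is immediate from the Lemma: $L_E(\alpha,\ell_\alpha')=L_{NE}(\alpha,\ell_\alpha)$ (and likewise for $\beta$), so $w\in L_{NE}(\alpha,\ell_\alpha)$ iff $w\in L_E(\alpha,\ell_\alpha')$, $L_{NE}(\alpha,\ell_\alpha)\subseteq L_{NE}(\beta,\ell_\beta)$ iff $L_E(\alpha,\ell_\alpha')\subseteq L_E(\beta,\ell_\beta')$, and equality is preserved in the same way. Effectivity is also clear: one merely appends $|\var(\alpha)|$ new inequalities of constant size, which can be done in logarithmic space (and certainly in polynomial time) in the size of the instance, so the transformation is a genuine many-one reduction of $P$ on non-erasing pattern languages with length constraints to $P$ on erasing pattern languages with length constraints. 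The case of regular constraints is handled in exactly the same manner, using the second part of the Lemma (intersecting each variable language with $\Sigma^+$), and the resulting automata are obtained effectively from the originals.

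Consequently, any algorithm solving $P$ for erasing pattern languages with length (resp. regular) constraints yields one for the non-erasing variant by first running the transformation, which establishes the claimed hardness relationship. I do not anticipate a genuine obstacle here; the only point requiring a line of care is to state explicitly that the reduction leaves the queried word untouched in the membership case and maps patterns to patterns in the inclusion and equivalence cases, so that "the same problem" is indeed what is being reduced to, and that the reduction is resource-bounded enough to make "at least as hard" meaningful both computability-theoretically and complexity-theoretically.
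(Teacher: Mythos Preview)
Your proposal is correct and matches the paper's approach: the paper states the corollary as an immediate consequence of the preceding Lemma without giving further details, and you have simply spelled out the straightforward many-one reduction that the Lemma provides. Your added remarks on effectivity and on handling each of the three decision problems separately are accurate elaborations of what the paper leaves implicit.
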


As shown in \cite{Nowotka2024}, considering regular constraints, problem cases over all patterns can be reduced to problems involving terminal-free patterns. The same does not hold in general in the case of length constraints, witnessed by the following proposition.

\begin{proposition}
    There exists $(\alpha,\ell_\alpha)\in\PatLen$ such that no $(\beta,\ell_\beta)\in\PatLen$ with a terminal-free pattern $\beta\in X^*$ exists, for which we have that $L_X(\alpha,\ell_\alpha) = L_X(\beta,\ell_\beta)$ for $X\in\{E,NE\}$.
\end{proposition}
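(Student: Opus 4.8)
The plan is to exhibit a concrete pattern with length constraints whose language contains a specific terminal letter in a position that no terminal-free pattern can ever force. The simplest candidate is $\alpha = \ta$ (or, to make the argument cleaner, a one-variable pattern such as $\alpha = \ta\,x_1$ with the trivial length constraint, or $\alpha = x_1\ta x_1$ with a constraint bounding $x_1$). I would actually take $\alpha = \ta$ with an empty (vacuously satisfied) length constraint $\ell_\alpha$, so that $L_E(\alpha,\ell_\alpha) = L_{NE}(\alpha,\ell_\alpha) = \{\ta\}$, a singleton language.

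Next I would argue that no terminal-free pattern with length constraints can have language exactly $\{\ta\}$. The key observation is that if $\beta\in X^*$ is terminal-free and $h$ is any $\ell_\beta$-valid substitution witnessing a nonempty language, then $h(\beta)$ is a concatenation of images of variables, all drawn from $\Sigma^*$; in particular the constant morphism sending every variable of $\beta$ to $\varepsilon$ would, if $\ell_\beta$-valid, produce $\varepsilon$, and more importantly, for any letter $\tb\in\Sigma$ with $\tb\neq\ta$, the morphism $h'$ obtained from $h$ by replacing each letter of $\Sigma$ by $\tb$ throughout the images (a relettering that preserves all lengths, hence $\ell_\beta$-validity) yields a word $h'(\beta)\in\{\tb\}^*$ which is also in $L_X(\beta,\ell_\beta)$. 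Since $|\Sigma|\geq 2$, if $L_X(\beta,\ell_\beta)$ contains $\ta$ it must also contain a word over $\{\tb\}$ of the same length, namely $\tb^{|\ta|}=\tb$, so $\tb\in L_X(\beta,\ell_\beta)$ as well. Hence $L_X(\beta,\ell_\beta)\neq\{\ta\}$, and in particular it cannot equal $L_X(\alpha,\ell_\alpha)=\{\ta\}$. The same relettering argument simultaneously handles the $E$ and $NE$ cases, since relettering changes neither lengths nor emptiness of variable images.

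I would then note one edge case to dispatch: the claim implicitly assumes $|\Sigma|\geq 2$; for a unary alphabet the proposition is false (terminal-free patterns suffice), so the statement should be read — and I would state it — under the standing assumption that $\Sigma$ is not unary, which is the relevant regime for all the undecidability results in the paper anyway. With that caveat, the relettering closure property of terminal-free pattern languages (with length constraints) is the whole content: any $L_X(\beta,\ell_\beta)$ is closed under the action of letter-permutations and, more strongly, letter-identifications, on the alphabet, whereas $\{\ta\}$ is not.

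The main obstacle is essentially bookkeeping rather than conceptual: one must verify carefully that the relettering operation I use (mapping $\Sigma$ onto a single chosen letter $\tb$, applied to the images $h(x)$) preserves $\ell_\beta$-validity. This is immediate because $\ell_\beta$ constrains only the values $|h(x)|$, and relettering is length-preserving; one must also check it preserves the erasing/non-erasing distinction, which again holds because $|h(x)|=0$ if and only if $|h'(x)|=0$. Beyond that, I would spell out that the single chosen witness $h$ for $\ta\in L_X(\beta,\ell_\beta)$ has each $|h(x)|\leq |h(\beta)|=1$, so $h(\beta)$ is a length-$1$ word, making the image under the relettering exactly $\tb$ — a trivial finish. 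No genuine difficulty arises; the proposition is a clean separation witnessed by a constant-length singleton language.
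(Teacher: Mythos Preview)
Your proof is correct and uses essentially the same idea as the paper: terminal-free pattern languages with length constraints are closed under relettering, since length constraints depend only on the lengths $|h(x)|$. The paper chooses the witness $\alpha = x_1\ta$ (giving the infinite language $\Sigma^*\cdot\{\ta\}$, respectively $\Sigma^+\cdot\{\ta\}$) and swaps only the final letter of the relevant variable's image, whereas your choice $\alpha = \ta$ with language $\{\ta\}$ and a global relettering is arguably cleaner; both arguments require $|\Sigma|\geq 2$, which the paper also assumes explicitly.
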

\begin{proof}
    Assume $|\Sigma| \geq 2$ and assume w.l.o.g. $\ta,\tb\in\Sigma$ with $\ta \neq \tb$. Let $(\alpha,\ell_\alpha)\in\PatLen$ such that $\alpha = x_1\ta$ and $\ell_\alpha$ is an empty length constraint. So, $L_E(\alpha,\ell_\alpha) = L_E(\alpha) = \Sigma^*\cdot\{\ta\}$ ($L_{NE}(\alpha,\ell_\alpha) = L_{NE}(\alpha) = \Sigma^+\cdot\{\ta\}$). Suppose there exists some $\beta\in X^*$ and length constraint $\ell_\beta$ such that $L_E(\beta,\ell_\beta) = L_E(\alpha,\ell_\alpha)$ ($L_{NE}(\beta,\ell_\beta) = L_{NE}(\alpha,\ell_\alpha)$). W.l.o.g. we continue with the erasing case. The following also holds for the non-erasing case with small changes. Let $w\in L_E(\alpha,\ell_\alpha)$. Then $w = u\ta$ for some $u\in\Sigma^*$. As $L_E(\alpha,\ell_\alpha) = L_E(\beta,\ell_\beta)$, there exists some $h\in H_{\ell_\beta}$ such that $h(\beta) = w = u\ta$. Then, $\beta = \beta_1 x \beta_2$ for $x\in X$ and $\beta_1,\beta_2\in X^*$ as well as $u = u_1u_2$ for some $u_1,u_2\in\Sigma^*$ such that $h(\beta_1) = u_1$, $h(x) = u_2\ta$ and $h(\beta_2) = \varepsilon$. But then, we also find some $h'\in H_{\ell_\beta}$ with $h(x) = u_1\tb$ as $\ta$ and $\tb$ are obtained at the same position by the same variable.
    Then, $h'(\beta) = v\tb$ for some $v\in\Sigma^*$. As $v\tb \notin \Sigma^* \cdot \{\ta\}$ but $L_E(\alpha,\ell_\alpha) = \Sigma^* \cdot \{\ta\}$, we know that $\beta$ cannot exist.
\end{proof}

With that, we know that we cannot restrict ourselves to only show hardness or undecidability for the general case, as the terminal free case may result in something else. We begin with the membership problems for both, the erasing- and non-erasing pattern languages. Those have been shown to be NP-complete for patterns without constraints in the terminal-free and general cases (see, e.g.,\cite{DBLP:journals/jcss/Angluin80,Jiang1994,Schmid2012}). Hence, we observe the following for patterns with length constraints.

\begin{corollary}\label{corollary:patlan-len-membership-npc}
	Let $(\alpha,\ell_\alpha) \in \PatLen$ and $w\in\Sigma^*$.
	The decision problem of whether $w \in L_X(\alpha,\ell_\alpha)$ for $X \in \{E,NE\}$ is NP-complete, even if the considered pattern $\alpha$ is terminal-free.
\end{corollary}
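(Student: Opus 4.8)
The statement is a corollary of the known NP-completeness of membership for ordinary (constraint-free) erasing and non-erasing pattern languages, so the plan is to bootstrap from that rather than re-prove it. Membership in NP is the only genuinely new part, and even that is routine; hardness is inherited essentially for free.

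\textbf{Membership in NP.} Given $(\alpha,\ell_\alpha)\in\PatLen$ and $w\in\Sigma^*$, a nondeterministic algorithm first guesses, for each variable $x\in\var(\alpha)$, a length $n_x\in[0,|w|]$ in the erasing case (resp. $n_x\in[1,|w|]$ in the non-erasing case), then guesses a concrete substitution $h$ with $|h(x)|=n_x$ encoded as a factor decomposition of $w$. The guessed data has size polynomial in $|w|+|\alpha|$. It then verifies in polynomial time that (i) $h(\alpha)=w$ by walking through $\alpha$ and matching letters/variable-images against $w$, and (ii) the assignment $x\mapsto n_x$ satisfies the length constraint $\ell_\alpha$. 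For (ii) note that $\ell_\alpha$ is a \emph{disjunction of systems} of linear diophantine inequalities: the algorithm additionally guesses which disjunct to use and then checks each inequality of that system by plugging in the guessed values $n_x$, which is a polynomial-time arithmetic computation since each $n_x\le|w|$. Hence the whole verification is polynomial, and the problem lies in NP for both $X\in\{E,NE\}$.

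\textbf{NP-hardness.} This is inherited from the constraint-free case. Given a pattern $\alpha\in\Pat_\Sigma$ (terminal-free, if we want the hardness to hold already for terminal-free patterns) and a word $w$, we form the instance $(\alpha,\ell_\alpha)$ where $\ell_\alpha$ is the empty length constraint (imposing no restriction), so $L_E(\alpha,\ell_\alpha)=L_E(\alpha)$ and, using the device of the earlier lemma in the non-erasing case, $L_{NE}(\alpha,\ell_\alpha)=L_{NE}(\alpha)$. Thus $w\in L_X(\alpha)$ iff $w\in L_X(\alpha,\ell_\alpha)$, which is a trivial polynomial-time many-one reduction from the known NP-hard problems of membership for terminal-free erasing and non-erasing pattern languages (see \cite{DBLP:journals/jcss/Angluin80,Jiang1994,Schmid2012}). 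Combining membership in NP with this hardness gives NP-completeness in all four cases.

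\textbf{Main obstacle.} There is no serious obstacle; the only point requiring a moment's care is making sure the length-constraint check in (ii) is genuinely polynomial, i.e., that one may assume the coefficients and constants in $\ell_\alpha$ are given in binary and the values substituted are bounded by $|w|$, so that each inequality is checked by a single arithmetic comparison on numbers of polynomial bit-length. Everything else is the standard guess-and-verify argument for pattern membership plus the observation that adding a satisfiable-or-empty constraint never changes the language.
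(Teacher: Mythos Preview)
Your proposal is correct and follows essentially the same approach as the paper: NP-hardness is inherited from the constraint-free case via an empty length constraint, and NP-membership is obtained by guessing a substitution bounded in size by $|w|$ and verifying both $h(\alpha)=w$ and $\ell_\alpha$-validity in polynomial time. Your write-up is in fact more careful than the paper's one-paragraph justification (you make explicit the guess of the disjunct and the bit-length argument); the only superfluous remark is the appeal to ``the earlier lemma'' in the non-erasing hardness case, since with an empty constraint $L_{NE}(\alpha,\ell_\alpha)=L_{NE}(\alpha)$ holds directly by definition.
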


Indeed, NP-hardness is immediately obtained by the NP-hardness of patterns without any constraints. Given some $(\alpha,\ell_\alpha)\in\PatLen$, NP containment follows by the fact that any valid certificate results in a substitution $h\in H_{\ell_\alpha}$ of length at most $|h(\alpha)| = |w|$ and that we can check in polynomial time whether some given substitution is $\ell_\alpha$-valid. This can be done by checking whether the lengths of the variable substitutions $|h(x)|$ for all $x\in\var(\alpha)$ satisfy $\ell_\alpha$. This concludes the membership problem for all cases.

Another result that can be immediately obtained is the general undecidability of the inclusion problem for pattern languages with length constraints. For pattern languages without any constraints, this problem has been generally shown to be undecidable, first, for unbounded alphabets by Jiang et al. \cite{Jiang1995} and later on for all bounded alphabets of sizes $|\Sigma| \geq 2$ by Freydenberger and Reidenbach in \cite{Freydenberger2010} as well as Bremer and Freydenberger in \cite{BREMER201215}. So, we have the following.

\begin{theorem}\label{theorem:pattern-inclusion-undecidability}\cite{Jiang1995, Freydenberger2010, BREMER201215}
	Let $\alpha,\beta\in Pat_\Sigma$.
	In general, for all alphabets $\Sigma$ with $|\Sigma| \geq 2$, it is undecidable to answer whether $L_X(\alpha) \subseteq L_X(\beta)$ for $X\in\{E,NE\}$.
\end{theorem}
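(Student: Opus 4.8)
The statement I need to establish is Theorem~\ref{theorem:pattern-inclusion-undecidability}: for every alphabet $\Sigma$ with $|\Sigma| \geq 2$, the inclusion problem $L_X(\alpha) \subseteq L_X(\beta)$ is undecidable for $X \in \{E,NE\}$. Since this is an attributed theorem (to Jiang et al., Freydenberger--Reidenbach, Bremer--Freydenberger), my plan is not to reprove it from scratch but to record how the existing arguments combine. The plan is to treat the three cases according to alphabet size and erasing/non-erasing mode, and to invoke the known reductions.

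\emph{First}, I would recall the original result of Jiang, Salomaa, Salomaa, and Yu~\cite{Jiang1995}, which shows that inclusion of pattern languages is undecidable for unbounded alphabets, typically via a reduction from a known undecidable problem about Turing machines or word equations, encoding computations so that $L(\alpha) \subseteq L(\beta)$ holds iff a certain machine has no accepting computation. \emph{Second}, for the bounded-alphabet refinement, I would invoke the work of Freydenberger and Reidenbach~\cite{Freydenberger2010} and of Bremer and Freydenberger~\cite{BREMER201215}, who push the undecidability down to every fixed alphabet of size at least $2$ in both the erasing and non-erasing settings; the key device there is a careful binary encoding (the universal Turing machine $U$ referenced earlier in the preliminaries, whose valid computations form the set $\ValC(U)$) together with patterns that ``test'' whether a candidate word is a well-formed encoding of a halting computation. \emph{Third}, I would note the standard bridge between small and large alphabets: an instance over a $2$-letter alphabet can be lifted to any larger $\Sigma$ (the extra letters simply never appear, and inclusion over $\{\ta,\tb\}$ is equivalent to inclusion over $\Sigma$ for patterns using only $\ta,\tb$), so undecidability for $|\Sigma|=2$ yields undecidability for all $|\Sigma|\geq 2$.

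\emph{The main obstacle} in reconstructing these proofs is the encoding management in the bounded-alphabet case: one must design $\beta$ so that $L_X(\beta)$ is exactly the complement (within a suitable ``frame'' language) of the set of encodings of halting computations, while $L_X(\alpha)$ captures that frame; the subtlety is that pattern languages are not closed under complement, so the ``missing word'' witnessing non-inclusion has to be forced to be precisely an accepting computation, and this requires intricate combinatorial gadgets handling the $\varepsilon$-substitutions in the erasing case versus the non-emptiness requirement in the non-erasing case. Since, however, the statement is quoted verbatim from~\cite{Jiang1995, Freydenberger2010, BREMER201215}, for the purposes of this paper it suffices to cite these works; the detailed constructions are exactly what those papers provide, and the later sections of the present paper build on the same $\ValC(U)$ machinery rather than on re-deriving Theorem~\ref{theorem:pattern-inclusion-undecidability} itself.
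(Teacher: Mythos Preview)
Your proposal is essentially aligned with the paper: the paper does not prove Theorem~\ref{theorem:pattern-inclusion-undecidability} at all but simply states it with citations to \cite{Jiang1995, Freydenberger2010, BREMER201215} and immediately uses it as a black box, exactly as you conclude in your final paragraph.

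One technical point worth correcting, though it does not affect the conclusion: your ``third'' step, the bridge from $|\Sigma|=2$ to larger alphabets by saying ``the extra letters simply never appear, and inclusion over $\{\ta,\tb\}$ is equivalent to inclusion over $\Sigma$ for patterns using only $\ta,\tb$'', is not valid for pattern languages. The language $L_X(\alpha)$ depends on the ambient alphabet because variables may be substituted by words containing the new letters; so $L_X(\alpha)$ over $\Sigma$ strictly contains $L_X(\alpha)$ over $\{\ta,\tb\}$, and inclusion over the small alphabet need not transfer to the larger one (nor vice versa). This is precisely why \cite{Freydenberger2010} and \cite{BREMER201215} carry out separate constructions for each fixed alphabet size rather than reducing to the binary case, and it is also why the present paper, in Appendices~\ref{section:proof1} and~\ref{section:proof2}, explicitly treats larger alphabets with additional predicates instead of invoking such a lift. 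If you want to sketch the attribution, it is cleaner to say that the cited works directly establish undecidability for every fixed $|\Sigma|\ge 2$ and leave it at that.
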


Out of that, in the general case, we immediately obtain the following for pattern languages with length constraints.

\begin{corollary}
    Let $(\alpha,\ell_\alpha),(\beta,\ell_\beta)\in\PatLen$. For all alphabets $\Sigma$ with $|\Sigma|\geq2$ it is undecidable to answer whether $L_X(\alpha,\ell_\alpha) \subseteq L_X(\beta,\ell_\beta)$ for $X\in\{E,NE\}$.
\end{corollary}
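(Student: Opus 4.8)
The plan is to reduce the inclusion problem for constraint-free pattern languages, which by Theorem~\ref{theorem:pattern-inclusion-undecidability} is undecidable for every alphabet $\Sigma$ with $|\Sigma|\geq 2$ and for both $X\in\{E,NE\}$, to the inclusion problem for pattern languages with length constraints. The whole point is that constraint-free patterns embed into $\PatLen$ via a trivially satisfied constraint, so no real work is needed beyond making this precise.

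Concretely, given an instance $\alpha,\beta\in Pat_\Sigma$ of the constraint-free inclusion problem, I would map it to the instance $(\alpha,\ell_\alpha),(\beta,\ell_\beta)\in\PatLen$, where $\ell_\alpha$ and $\ell_\beta$ are each taken to be the length constraint whose single disjunct is the empty system of inequalities. An empty system imposes no restriction at all, so it is satisfied (vacuously) by every substitution; moreover the requirement that every variable occurring in $\ell_\alpha$ also occur in $\alpha$ is trivially met, since $\ell_\alpha$ mentions no variables at all (likewise for $\ell_\beta$). Hence $H_{\ell_\alpha}=H=H_{\ell_\beta}$, and unfolding the definitions gives $L_X(\alpha,\ell_\alpha)=L_X(\alpha)$ and $L_X(\beta,\ell_\beta)=L_X(\beta)$ for $X\in\{E,NE\}$.

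Consequently, $L_X(\alpha,\ell_\alpha)\subseteq L_X(\beta,\ell_\beta)$ holds if and only if $L_X(\alpha)\subseteq L_X(\beta)$, so any decision procedure for the inclusion problem over $\PatLen$ would, applied to these instances, decide the constraint-free inclusion problem. Since the latter is undecidable for all $\Sigma$ with $|\Sigma|\geq 2$ in both the erasing and non-erasing cases, so is the former, which is exactly the claim. There is essentially no obstacle: the only thing to check is that the trivial constraint is well-formed and genuinely places no restriction on substitutions, after which the statement is immediate from Theorem~\ref{theorem:pattern-inclusion-undecidability} (and, via Corollary~\ref{corollary:patlang-len-erasing-atleastashardas-nonerasing}, the erasing case could alternatively be reduced from the non-erasing one if one preferred to single out $X=NE$).
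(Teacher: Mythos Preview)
Your proposal is correct and matches the paper's approach: the corollary is stated immediately after Theorem~\ref{theorem:pattern-inclusion-undecidability} with the remark that it follows ``immediately'' in the general case, and your embedding of constraint-free patterns via an empty length constraint is exactly the trivial reduction the paper has in mind.
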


That leaves us with the terminal free cases of the inclusion problem as well as the general and terminal-free cases of the equivalence problems for both, erasing and non-erasing pattern languages with length constraints. The first and most significant main result shows that the most prominent open problem for pattern languages, i.e., the equivalence problem for erasing patterns, is undecidable for pattern languages with length constraints, even if we are restricted to terminal-free patterns and use no disjunctions in the length constraints.

\begin{theorem}\label{theorem:patlang-len-erasing-equivalence-undecidable}
    Let $(\alpha,\ell_\alpha),(\beta,\ell_\beta)\in\PatLen$. The problem of deciding whether we have $L_E(\alpha,\ell_\alpha) = L_E(\beta,\ell_\beta)$ is undecidable for all alphabets $\Sigma$ with $|\Sigma| \geq 2$, even if $\alpha$ and $\beta$ are restricted to be terminal-free and even if $\ell_\alpha$ and $\ell_\beta$ have no disjunctions.
\end{theorem}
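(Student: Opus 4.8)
The plan is to reduce the emptiness problem of nondeterministic $2$-counter automata without input to the equivalence problem for erasing terminal-free pattern languages with (disjunction-free) length constraints. Fix such an automaton $A$; we construct two patterns with length constraints $(\alpha,\ell_\alpha)$ and $(\beta,\ell_\beta)$, both terminal-free, so that $L_E(\alpha,\ell_\alpha) = L_E(\beta,\ell_\beta)$ if and only if $A$ has no accepting computation, i.e.\ $\ValC(A)=\emptyset$. The guiding idea, in the spirit of the regular-constraint construction of \cite{Nowotka2024} but now using only length arithmetic, is that a single variable $x$ of "large" length can be forced, via linear constraints tying $|x|$ to the lengths of a block of auxiliary variables, to encode a purported computation of $A$; the point is that over a binary alphabet a word and its length are different things, so length constraints alone cannot pin down the \emph{contents} of a substitution, only its combinatorial shape. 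One therefore arranges $\beta$ to generate a "universe" language $U$ (all words, suitably padded) and $\alpha$ to generate $U$ minus exactly those words that would witness a valid computation — or, more robustly, arranges $\alpha$ and $\beta$ so that the only possible discrepancy between the two languages is realized by a word coding an accepting run of $A$.

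First I would set up the encoding: choose a binary coding $\enc$ of configurations of $A$ and of runs, with a rigid block structure (fixed-width fields for the state and unary-or-binary counter values, separated by markers realized purely through length bookkeeping). Second, I would build a terminal-free pattern $\beta$ together with $\ell_\beta$ whose language is a clean "control" language — essentially $\Sigma^{*}$ intersected with a set of length conditions describing well-formed frames — using a handful of variables whose lengths are linked by equalities and inequalities so that $L_E(\beta,\ell_\beta)$ is exactly the set of length-consistent frame words. Third, I would build $\alpha$ with $\ell_\alpha$ to generate the same frame words \emph{except} that its variable structure additionally allows, but never forces, the "defect" corresponding to a step of $A$ not being respected; concretely $\alpha$ repeats variable blocks so that consecutive configuration-frames can be compared, and $\ell_\alpha$ enforces the counter-update arithmetic (increment/decrement/zero-test) as linear relations among the relevant block lengths, while the state-transition and the equality of the non-updated counter are handled by reusing the \emph{same} variable in adjacent blocks (this is the "variable equality" the theorem statement alludes to). The net effect is $L_E(\alpha,\ell_\alpha) \subseteq L_E(\beta,\ell_\beta)$ always, with equality iff no word survives as an encoding of an accepting computation, i.e.\ iff $\ValC(A)=\emptyset$.

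The delicate points are two. The first, standard but fiddly, is making the construction terminal-free: without terminal letters one cannot directly write markers or a fixed state symbol, so all "positional" information must be simulated by length constraints and by the repetition pattern of variables, which forces a careful choice of which blocks are genuine variables, which are copies, and how $\varepsilon$-substitutions (allowed in the erasing case) are prevented from collapsing the frame — one adds constraints $y\ge 1$ for the structural variables as in the lemma above, while leaving the "content" variables free. The second and main obstacle is the soundness direction: I must show that \emph{every} word in $L_E(\beta,\ell_\beta)\setminus L_E(\alpha,\ell_\alpha)$ (should one exist) genuinely encodes a valid computation of $A$, i.e.\ that the length constraints plus variable-equality constraints are strong enough that a "fake" word claiming to be a computation but cheating on some transition cannot be produced by $\alpha$ yet also cannot fail to be produced by $\beta$ for some spurious reason. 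This amounts to proving a tight normal-form / pumping-style lemma about which substitutions are $\ell_\alpha$-valid, and it is where the bulk of the technical work lies; I expect to handle it by a case analysis on how a given frame word can be parsed against the variable structure of $\alpha$, showing each parsing either respects all of $A$'s transitions or violates a length equation. Once that is in place, undecidability for $|\Sigma|\ge 2$ follows from the undecidability of emptiness for $2$-counter automata, and since the construction used no disjunctions and no terminals, the stated refinements come for free; the case of larger alphabets is inherited by padding or by the monotonicity already recorded in Corollary~\ref{corollary:patlang-len-erasing-atleastashardas-nonerasing} and Theorem~\ref{theorem:pattern-inclusion-undecidability}-style arguments.
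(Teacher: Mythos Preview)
Your high-level plan (reduce emptiness of nondeterministic $2$-counter automata; build two terminal-free patterns whose languages differ exactly on encodings of accepting runs) matches the paper, but the architecture you sketch is inverted and, as written, does not work.

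You propose that $\beta$ generates a ``universe'' of frame words and that $\alpha$ generates this universe \emph{minus} the valid computations, so that always $L_E(\alpha,\ell_\alpha)\subseteq L_E(\beta,\ell_\beta)$. Yet your description of $\alpha$ (``repeats variable blocks so that consecutive configuration-frames can be compared'', ``$\ell_\alpha$ enforces the counter-update arithmetic'', ``state-transition \ldots handled by reusing the same variable in adjacent blocks'') is a recipe for a pattern whose substitutions are forced to be \emph{valid} runs, not \emph{invalid} ones. This is the opposite of what you need. More fundamentally, ``all words except the valid runs'' is an existentially-flavoured complement: a word is in it iff it has \emph{some} local error \emph{somewhere}. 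A single fixed pattern with a single conjunctive system of length constraints has only boundedly many variable occurrences; it cannot, by reusing variables across adjacent blocks, inspect unboundedly many consecutive configuration pairs, nor can it express ``there is an error at some position'' without a disjunction over positions and error types. Your second ``delicate point'' names this obstacle but does not resolve it.

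The paper's construction goes the other way around. The pattern $\alpha$ is the simple one (essentially $y_1\,x_v\alpha_1x_v\alpha_2x_v\,y_2\,x_v z z' z' z\,x_v$ with $|x_v|=5$, $|z|=|z'|=1$), while $\beta$ is the complex ``checker'': it carries $\mu{+}1$ predicate sub-patterns $\hat\beta_i=x_i\gamma_i x_i\delta_i x_i$, each $\gamma_i,\delta_i$ designed (as in Freydenberger--Reidenbach) to match one type of local defect of an encoded run, with free variables on either side absorbing the rest. The implicit disjunction over predicates is achieved, without any disjunction in $\ell_\beta$, by the erasing morphism together with the linear constraints $\sum_i z_i=1$, $\sum_i x_i=5$, $x_i-5z_i=0$, $z_i-z_i'=0$: these force \emph{exactly one} index $i$ to be active and all other blocks to collapse to $\varepsilon$. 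The new ingredient, and the reason the argument upgrades from inclusion to equivalence, is that these constraints also guarantee $L_E(\beta,\ell_\beta)\subseteq L_E(\alpha,\ell_\alpha)$ unconditionally; a further predicate $\pi_{\mu+1}$ handles the degenerate case $h(z)=h(z')$. None of this machinery appears in your sketch. Finally, your appeal to Corollary~\ref{corollary:patlang-len-erasing-atleastashardas-nonerasing} for larger alphabets is misplaced: that corollary compares the erasing and non-erasing settings, not alphabet sizes; the paper handles $|\Sigma|\ge 3$ by an explicit extension of the predicate list.
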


The proof of Theorem \ref{theorem:patlang-len-erasing-equivalence-undecidable}, due to its extent, can be found in Appendix \ref{section:proof1}. Here, we give a rough sketch of the idea. Based on the proofs in, e.g., \cite{Jiang1995} and \cite{Freydenberger2010}, we reduce the emptiness problem for nondeterministic 2-counter automaton without input to the equivalence problem of erasing pattern languages with length constraints. Given such an automaton $A$, we construct two patterns with length constraints $(\alpha,\ell_\alpha)$ and $(\beta,\ell_\beta)$ such that $L_E(\alpha,\ell_\alpha) = L_E(\beta,\ell_\beta)$ if and only if $\ValC(A) = \emptyset$. We mention, here, that the constructed length constraints $\ell_\alpha$ and $\ell_\beta$ respectively consist only of a single system of linear inequalities, hence, they use no disjunctions and are, by that, rather simple. The first pattern $(\alpha,\ell_\alpha)$ can be understood as a pattern that allows for arbitrary substitutions, restricted to some structural boundaries. The pattern $(\beta,\ell_\beta)$ can be understood as a \emph{tester} pattern that has a certain number of sub-patterns, called \emph{predicates}. Out of these, one has to always be picked to obtain some substitution with a specific structure similar to the one given for the words obtained from $(\alpha,\ell_\alpha)$. First, the proof shows that using the construction given in Appendix \ref{section:proof1}, we have that all words in $L_E(\beta,\ell_\beta)$ are actually also in in $L_E(\alpha,\ell_\alpha)$, i.e., $L_E(\beta,\ell_\beta) \subseteq L_E(\alpha,\ell_\alpha)$. This is a significant difference to the constructions established in the undecidability proofs for the inclusion problem for pattern languages in \cite{Jiang1995,Freydenberger2010} and the main reason for this idea to actually work for the equivalence problem instead of only the inclusion problem. This is similar to the idea of the construction in \cite{Nowotka2024} for patterns with regular constraints, where the same approach but with differently constructed patterns and with another type of constraints has been used. Similar to the constructions given in \cite{Jiang1995,Freydenberger2010}, it is shown that a word in $L_E(\alpha,\ell_\alpha)$, with specific structure, always needs the use of a predicate in $(\beta,\ell_\beta)$ to obtain that word in $L_E(\beta,\ell_\beta)$. Using that, predicates can be constructed that allow for all words with the given specific structure and without an encoding of a valid computation of $A$ as a factor to be obtained in $L_E(\beta,\ell_\beta)$. This leaves only those words with specific structure and with encodings of valid computations from $A$ as words that only occur in $L_E(\alpha,\ell_\alpha)$. Hence, if no valid computations exist, i.e., $\ValC(A) = \emptyset$, then all words in $L_E(\alpha,\ell_\alpha)$ are also in $L_E(\beta,\ell_\beta)$, as there always exists a predicate in $(\beta,\ell_\beta)$ to obtain these words. Together with the property that $L_E(\beta,\ell_\beta) \subseteq L_E(\alpha,\ell_\alpha)$, we have that the equivalence problem, i.e., the property whether $L_E(\alpha,\ell_\alpha) = L_E(\beta,\ell_\beta)$, decides whether $\ValC(A)$ is empty or not, hence deciding the emptiness problem of nondeterministic 2-counter automata without input. The main body of the proof shows the case for an alphabet $\Sigma$ of size $|\Sigma| = 2$. An explanation of how to extend this construction to arbitrary alphabets is provided in the end of Appendix \ref{section:proof1} as well.

Due to Theorem \ref{theorem:patlang-len-erasing-equivalence-undecidable}, the undecidability of the inclusion problem for terminal-free erasing pattern languages immediately follows. 

\begin{corollary}
    Let $(\alpha,\ell_\alpha),(\beta,\ell_\beta)\in\PatLen$ for terminal-free patterns $\alpha,\beta\in X^*$. The problem of answering whether we have $L_E(\alpha,\ell_\alpha) \subseteq L_E(\beta,\ell_\beta)$ is undecidable for all alphabets $\Sigma$ with $|\Sigma| \geq 2$.
\end{corollary}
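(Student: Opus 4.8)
The plan is to derive this corollary directly from Theorem~\ref{theorem:patlang-len-erasing-equivalence-undecidable} via the standard observation that equality of two languages is the conjunction of two inclusions, together with the extra fact established in the proof sketch of that theorem, namely that the two constructed patterns always satisfy one of the two inclusions unconditionally. Concretely, given a nondeterministic 2-counter automaton $A$ without input, the construction behind Theorem~\ref{theorem:patlang-len-erasing-equivalence-undecidable} produces terminal-free patterns with length constraints $(\alpha,\ell_\alpha)$ and $(\beta,\ell_\beta)$ (both with disjunction-free length constraints) such that $L_E(\beta,\ell_\beta)\subseteq L_E(\alpha,\ell_\alpha)$ holds always, and $L_E(\alpha,\ell_\alpha)=L_E(\beta,\ell_\beta)$ if and only if $\ValC(A)=\emptyset$. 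First I would recall this construction and isolate exactly these two properties.

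Next, I would combine them: since the inclusion $L_E(\beta,\ell_\beta)\subseteq L_E(\alpha,\ell_\alpha)$ is guaranteed by the construction, the remaining inclusion $L_E(\alpha,\ell_\alpha)\subseteq L_E(\beta,\ell_\beta)$ holds if and only if the two languages are equal, which by Theorem~\ref{theorem:patlang-len-erasing-equivalence-undecidable} happens if and only if $\ValC(A)=\emptyset$. Thus a decision procedure for the terminal-free erasing inclusion problem for pattern languages with length constraints would decide whether $\ValC(A)=\emptyset$, i.e., it would solve the emptiness problem for nondeterministic 2-counter automata without input, which is undecidable. This reduction is computable because the patterns and length constraints are obtained effectively from $A$. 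To cover all alphabets $\Sigma$ with $|\Sigma|\geq 2$, I would invoke the same alphabet-extension argument used at the end of Appendix~\ref{section:proof1} for Theorem~\ref{theorem:patlang-len-erasing-equivalence-undecidable}, so the inclusion problem is undecidable for every such $\Sigma$.

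The only subtlety — and hence the main thing to be careful about — is that the reduction relies not merely on the \emph{statement} of Theorem~\ref{theorem:patlang-len-erasing-equivalence-undecidable} but on the structural property $L_E(\beta,\ell_\beta)\subseteq L_E(\alpha,\ell_\alpha)$ of the particular construction used to prove it; a black-box use of the equivalence undecidability alone does not immediately yield undecidability of inclusion. Fortunately, as emphasized in the proof sketch, this one-sided inclusion is exactly the distinguishing feature of the construction (and precisely what makes the argument work for equivalence in the first place), so no additional work is needed beyond pointing to it. I would therefore phrase the proof of the corollary as: apply the construction from the proof of Theorem~\ref{theorem:patlang-len-erasing-equivalence-undecidable}, note $L_E(\beta,\ell_\beta)\subseteq L_E(\alpha,\ell_\alpha)$ always, and conclude that $L_E(\alpha,\ell_\alpha)\subseteq L_E(\beta,\ell_\beta)$ iff $L_E(\alpha,\ell_\alpha)=L_E(\beta,\ell_\beta)$ iff $\ValC(A)=\emptyset$, which is undecidable; extending to all $|\Sigma|\geq 2$ as in Appendix~\ref{section:proof1}.
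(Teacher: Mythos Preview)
Your argument is correct, but it is more involved than necessary, and one remark in it is mistaken. The paper's proof is a pure black-box contrapositive: if the terminal-free erasing inclusion problem were decidable, one could decide both $L_E(\alpha,\ell_\alpha)\subseteq L_E(\beta,\ell_\beta)$ and $L_E(\beta,\ell_\beta)\subseteq L_E(\alpha,\ell_\alpha)$, hence decide equivalence, contradicting Theorem~\ref{theorem:patlang-len-erasing-equivalence-undecidable}. No appeal to the internal structure of the construction (in particular, no use of the unconditional inclusion $L_E(\beta,\ell_\beta)\subseteq L_E(\alpha,\ell_\alpha)$) is needed.

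Your claim that ``a black-box use of the equivalence undecidability alone does not immediately yield undecidability of inclusion'' is therefore incorrect: equivalence is the conjunction of two inclusion instances, so decidability of inclusion trivially implies decidability of equivalence. What your route buys is a \emph{many-one} reduction from the emptiness problem directly to a single inclusion instance, whereas the paper's argument is a Turing reduction via two inclusion queries; both suffice for undecidability, but the paper's version is shorter and uses only the statement of Theorem~\ref{theorem:patlang-len-erasing-equivalence-undecidable}.
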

\begin{proof}
    Suppose it was decidable. Then, we could decide whether $L_E(\alpha,\ell_\alpha) \subseteq L_E(\beta,\ell_\beta)$ and whether $L_E(\beta,\ell_\beta) \subseteq L_E(\alpha,\ell_\alpha)$ and by that decide if we have $L_E(\alpha,\ell_\alpha) = L_E(\beta,\ell_\beta)$. That is a contradiction to Theorem \ref{theorem:patlang-len-erasing-equivalence-undecidable}.
\end{proof}

The second main result of the paper, and the last one regarding patterns with only length constraints, shows that the inclusion problem for terminal-free non-erasing pattern languages with length constraints is also undecidable for all alphabets $\Sigma$ with $|\Sigma| > 2$. In \cite{saarela:LIPIcs.ICALP.2020.140}, Saarela has shown this problem to be undecidable for patterns without any constraints in the binary case.

\begin{theorem}\cite{saarela:LIPIcs.ICALP.2020.140}
    Let $\alpha,\beta\in X^*$ be two terminal-free patterns. Answering whether $L_{NE}(\alpha) \subseteq L_{NE}(\beta)$ is undecidable for alphabets $\Sigma$ with $|\Sigma| = 2$.
\end{theorem}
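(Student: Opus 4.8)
The plan is to reduce a known undecidable problem about the free semigroup to the inclusion question, exploiting the fact that for terminal-free patterns the inclusion $L_{NE}(\alpha)\subseteq L_{NE}(\beta)$ is exactly a universally-existentially quantified statement about the solvability of a single word equation. Writing $\var(\alpha)=\{x_1,\dots,x_m\}$ and $\var(\beta)=\{y_1,\dots,y_n\}$, the inclusion holds if and only if for every assignment of nonempty words to $x_1,\dots,x_m$ there is an assignment of nonempty words to $y_1,\dots,y_n$ satisfying the word equation $\beta=\alpha$, where the $x_i$ act as universally quantified parameters and the $y_j$ as existentially quantified unknowns. Thus an algorithm for the inclusion problem would decide a fragment of the positive $\forall\exists$ theory of the free semigroup, namely the fragment whose matrix is a single equation of pattern shape. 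Conceptually this connects to Durnev's undecidability of the positive $\forall\exists$ theory of a free semigroup, but that restriction to a single pattern-shaped equation is severe, so the real content is to show that this restricted fragment already encodes an undecidable problem. The strategy is to reduce from the emptiness problem for nondeterministic $2$-counter automata without input (the same source of undecidability used elsewhere in this paper via $\ValC(A)$), or equivalently from the Post Correspondence Problem.

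Next I would fix the reduction. From a machine $A$ I build patterns so that $L_{NE}(\alpha)$ consists of all words having the correct local block shape of an encoding of a computation, while $\beta$ is a \emph{tester} pattern assembled from finitely many sub-patterns (predicates). Each predicate is designed to cover exactly those shape-correct words that violate one of the local consistency conditions of a valid computation, so that $\beta$ covers a given $h(\alpha)$ precisely when $h(\alpha)$ does \emph{not} encode a valid computation. With such a construction the inclusion $L_{NE}(\alpha)\subseteq L_{NE}(\beta)$ holds if and only if no shape-correct word encodes a valid computation, i.e.\ if and only if $\ValC(A)=\emptyset$, and undecidability of the latter then transfers. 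Since every working symbol and every separator must be simulated over $\Sigma=\{\ta,\tb\}$, I would fix a block code sending each working symbol to a codeword of a common length over $\{\ta,\tb\}$ and carry all positional bookkeeping through repeated variables rather than through terminals.

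The principal obstacle, and the reason the terminal-free binary case is genuinely harder than the cases with terminals, is that there are no constant letters available to anchor positions or to pin a variable to a prescribed word. In the constructions with terminals one forces a variable to realise a fixed block simply by surrounding it with that block; here the same effect must be produced indirectly, using the universal quantifier over substitutions of $\alpha$ together with variable repetition in $\beta$ to \emph{probe} and rigidify the factorisation. Concretely, one must arrange that the only way a predicate of $\beta$ can match $h(\alpha)$ for \emph{all} probing substitutions $h$ is the intended one, which forces the codeword boundaries to line up correctly. Over a binary alphabet this is delicate because short factors over two letters admit many spurious factorisations (for instance long runs $\ta^k$ split in numerous ways), so the code and the predicates must be chosen to eliminate this combinatorial cross-talk. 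The non-erasing restriction cuts both ways: it removes the empty-substitution corner cases, but it also forbids deleting padding, so synchronisation of block lengths has to be handled by explicit length-matching variables.

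Finally I would verify both directions. For the forward direction, if $A$ has no valid computation then every shape-correct word is caught by some predicate, giving $L_{NE}(\alpha)\subseteq L_{NE}(\beta)$; one checks that predicate membership really does supply a nonempty substitution of $\beta$ realising $h(\alpha)$. For the converse, a valid computation yields a shape-correct word $w=h(\alpha)$ that, by the design of the predicates and the anti-cross-talk property of the binary code, cannot be produced by any nonempty substitution of $\beta$, so $w\in L_{NE}(\alpha)\setminus L_{NE}(\beta)$ and the inclusion fails. Together these show that the inclusion problem decides whether $\ValC(A)=\emptyset$, establishing undecidability already for $|\Sigma|=2$.
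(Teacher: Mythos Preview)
The paper does not actually prove this theorem; it is quoted with a citation to Saarela and used as a black box (to derive Corollary~\ref{cor:result-incl-tf-binary-undec-src}). So there is no ``paper's own proof'' to compare against here.

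That said, what you have written is a research outline rather than a proof. You correctly identify that terminal-free NE-inclusion is equivalent to a $\forall\exists$ sentence whose matrix is a single word equation, and you correctly locate the obstacle: with no terminals available, one cannot anchor positions or force a variable to a prescribed block, and over a two-letter alphabet every local gadget admits many unintended factorisations. But you then stop exactly where the work begins. Sentences such as ``the code and the predicates must be chosen to eliminate this combinatorial cross-talk'' and ``one must arrange that the only way a predicate of $\beta$ can match $h(\alpha)$ for all probing substitutions $h$ is the intended one'' name the problem without solving it. The entire content of the theorem is the construction that realises these desiderata; absent that construction, nothing has been shown. In particular, the predicate-style architecture you sketch (in the spirit of Jiang et al.\ and Freydenberger--Reidenbach) relies crucially on terminal markers such as $0\#^30$ to isolate the predicate block inside $\beta$, and you give no mechanism to replace those markers by variable repetitions in a way that survives \emph{every} nonempty substitution of $\alpha$.

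For context, Saarela's actual argument does not proceed via the $\ValC(A)$/predicate machinery at all; it goes through the theory of word equations directly, reducing from an undecidable $\forall\exists$ fragment and using a specific combinatorial property of binary words to simulate the missing constants. If you want to pursue your route, you would need to exhibit concrete terminal-free binary patterns and prove the rigidity lemma you allude to; as written, the proposal is a plausible plan with the decisive step missing.
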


Hence, the undecidability of the inclusion problem for pattern languages with length constraints follows immediately in this case.

\begin{corollary}\label{cor:result-incl-tf-binary-undec-src}
    Let $(\alpha,\ell_\alpha),(\beta,\ell_\beta)\in\PatLen$ for terminal-free patterns $\alpha,\beta\in X^*$. Answering whether we have $L_{NE}(\alpha,\ell_\alpha) \subseteq L_{NE}(\beta,\ell_\beta)$ is undecidable for alphabets $\Sigma$ with $|\Sigma| = 2$.
\end{corollary}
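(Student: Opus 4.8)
The plan is to observe that this is a direct specialisation of the theorem of Saarela \cite{saarela:LIPIcs.ICALP.2020.140}: patterns without constraints are exactly patterns carrying the trivial (empty) length constraint, so the undecidable instances in that theorem transfer verbatim into $\PatLen$.

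Concretely, given an instance $(\alpha,\beta)$ of the inclusion problem for terminal-free non-erasing pattern languages over a binary alphabet $\Sigma$ — the problem shown undecidable by Saarela — I would form the pair $(\alpha,\ell_\alpha),(\beta,\ell_\beta)\in\PatLen$ where $\ell_\alpha$ and $\ell_\beta$ are empty systems of linear diophantine inequalities, i.e., length constraints imposing no restriction at all (equivalently, one may take the tautological inequality $x \geq 0$ for some $x\in\var(\alpha)$, and likewise for $\beta$, if an explicitly non-empty constraint is preferred). Every substitution is then $\ell_\alpha$-valid and $\ell_\beta$-valid, so $H_{\ell_\alpha} = H = H_{\ell_\beta}$, whence $L_{NE}(\alpha,\ell_\alpha) = L_{NE}(\alpha)$ and $L_{NE}(\beta,\ell_\beta) = L_{NE}(\beta)$. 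Consequently $L_{NE}(\alpha,\ell_\alpha)\subseteq L_{NE}(\beta,\ell_\beta)$ holds if and only if $L_{NE}(\alpha)\subseteq L_{NE}(\beta)$ holds, so any decision procedure for the constrained inclusion problem would decide the unconstrained one, contradicting Saarela's theorem. This establishes the claimed undecidability for $|\Sigma| = 2$.

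No genuine obstacle is expected here; the corollary is immediate once one checks that the definition of $\mathcal{C}_{Len}$ indeed admits the empty constraint, so that the unconstrained terminal-free patterns embed into $\PatLen$ without changing their non-erasing languages. The only modelling subtlety is the mandatory condition that all variables occurring in $\ell_\alpha$ occur in $\alpha$, which is vacuously satisfied by the empty constraint (and satisfied by the $x\geq 0$ variant by construction).
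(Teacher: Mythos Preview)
Your proposal is correct and matches the paper's approach: the paper simply states that the corollary follows immediately from Saarela's theorem, and your argument---equipping the unconstrained patterns with empty length constraints so that $L_{NE}(\alpha,\ell_\alpha)=L_{NE}(\alpha)$ and $L_{NE}(\beta,\ell_\beta)=L_{NE}(\beta)$---is precisely the intended justification.
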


In the case of length constraints, a general extension to all alphabet sizes is possible. We obtain the following.

\begin{theorem}\label{theorem:patlang-len-nonerasing-inclusion-tf-undecidability}
    Let $(\alpha,\ell_\alpha),(\beta,\ell_\beta)\in\PatLen$ for terminal-free patterns $\alpha,\beta\in X^*$. Answering whether we have $L_{NE}(\alpha,\ell_\alpha) \subseteq L_{NE}(\beta,\ell_\beta)$ is undecidable for alphabets $\Sigma$ with $|\Sigma| \geq 2$, even if $\ell_\alpha$ and $\ell_\beta$ have no disjunctions.
\end{theorem}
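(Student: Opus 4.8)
The plan is to reduce from the binary case of Corollary~\ref{cor:result-incl-tf-binary-undec-src}, i.e., to show that if the terminal-free non-erasing inclusion problem with length constraints is decidable for some alphabet $\Sigma$ with $|\Sigma| \geq 3$, then it is decidable for the binary alphabet, contradicting that corollary. So fix $\Sigma_2 = \{\ta,\tb\}$ and a larger alphabet $\Sigma = \{\ta_1,\dots,\ta_\sigma\}$ with $\sigma \geq 3$; given an instance $(\alpha,\ell_\alpha), (\beta,\ell_\beta) \in \PatLen$ over $\Sigma_2$ with terminal-free $\alpha,\beta$, I would construct an instance over $\Sigma$ that has the same answer. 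The key idea is the standard trick used for pushing inclusion-undecidability from $|\Sigma|=2$ to larger alphabets (as in~\cite{BREMER201215} and related work): encode each of the two letters $\ta,\tb$ of $\Sigma_2$ as a length-homogeneous block of letters of $\Sigma$, and use a fresh ``separator'' letter (or a fixed prefix letter) so that a substitution over $\Sigma$ is forced to respect the block structure. Concretely, pick a fixed letter $\tc \in \Sigma$ and encode $\ta \mapsto \tc\tc$, $\tb \mapsto \tc\td$ for another letter $\td \in \Sigma$, $\td \neq \tc$; more robustly, use blocks of length $2$ of the form $\tc x$ with $x$ ranging over all of $\Sigma$, so that any word in which $\tc$ appears exactly in the odd positions decodes uniquely. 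The point of length constraints here is that we can \emph{enforce} this block structure purely by length: by doubling every length variable we can simulate ``every variable image has even length,'' and the terminal-free pattern cannot on its own force a letter to appear — but the inclusion only needs to be tested against words that \emph{do} have the block form, and those are exactly the images of the encoded patterns.

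More precisely, here is the construction I would carry out. First, define an injective coding morphism $\varphi : \Sigma_2^* \to \Sigma^*$ that sends $\ta$ and $\tb$ to two distinct words of the same length $k$ over $\Sigma$ (taking $k=1$ and $\varphi(\ta) = \ta_1, \varphi(\tb) = \ta_2$ is the cleanest; longer blocks are only needed if one wants to use a marker). Because terminal-free patterns have no letters, a purely terminal-free reduction cannot ``force'' an image to lie in $\varphi(\Sigma_2^*)$, so instead I would argue as follows. Given $(\alpha,\ell_\alpha)$, form $(\alpha', \ell_{\alpha'})$ over $\Sigma$ where $\alpha' = \alpha$ (same sequence of variables) and $\ell_{\alpha'}$ is obtained from $\ell_\alpha$ by multiplying every occurrence of every variable by $k$ (equivalently, replace each variable-length value $|h(x)|$ by $k|h(x)|$); do the same for $\beta$. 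Then I claim $L_{NE}(\alpha,\ell_\alpha) \subseteq L_{NE}(\beta,\ell_\beta)$ over $\Sigma_2$ if and only if $L_{NE}(\alpha',\ell_{\alpha'}) \subseteq L_{NE}(\beta',\ell_{\beta'})$ over $\Sigma$. For the forward direction, any $\ell_{\alpha'}$-valid non-erasing substitution $h'$ over $\Sigma$ has, for each variable, an image of length divisible by $k$; partition each image into $k$-blocks. If all blocks lie in $\{\varphi(\ta),\varphi(\tb)\}$ this is literally $\varphi \circ h$ for an $\ell_\alpha$-valid $h$ over $\Sigma_2$, and inclusion downstairs gives a $\beta$-substitution, which lifts. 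The subtle case is images containing blocks \emph{not} in $\{\varphi(\ta),\varphi(\tb)\}$, or containing letters of $\Sigma \setminus \Sigma_2$ at all — here I would instead use a different, more careful encoding (this is the main obstacle, see below). The reverse direction: from inclusion over $\Sigma$, restrict to substitutions whose images are concatenations of $\varphi(\ta),\varphi(\tb)$; these are in bijection (via $\varphi$) with $\ell_\alpha$-valid substitutions over $\Sigma_2$, and the $\beta'$-substitution witnessing membership over $\Sigma$ need not a priori be block-structured, so again one needs to recover a block-structured witness.

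The main obstacle, then, is exactly that a terminal-free pattern cannot syntactically constrain which letters of the larger alphabet appear, so the naive ``one letter per letter'' coding does not obviously give an ``if and only if''. I see two ways around this. The first is to keep $\alpha,\beta$ terminal-free but to add to the length constraints a dummy variable, or to exploit that the inclusion $L_{NE}(\alpha') \subseteq L_{NE}(\beta')$ must in particular hold for the generator words that use only $\ta_1,\ta_2$ — and then to show, by a combinatorial argument on $\beta'$, that a witness substitution for such a word can be taken to use only $\ta_1,\ta_2$ as well (a ``letter projection'' argument: apply the projection $\Sigma \to \{\ta_1,\ta_2\}$ erasing or collapsing the other letters and check it preserves both the pattern value and $\ell$-validity, the latter because lengths only go down under erasing-projection, which interacts badly with lower-bound constraints — hence one prefers a length-preserving projection $\pi:\Sigma\to\{\ta_1,\ta_2\}$, which clearly preserves all length constraints and satisfies $\pi(h'(\beta')) = (\pi\circ h')(\beta')$). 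This length-preserving projection argument is the clean route: it shows directly that $L_{NE}(\alpha',\ell_{\alpha'})$ over $\Sigma$ includes into $L_{NE}(\beta',\ell_{\beta'})$ over $\Sigma$ iff it does over $\Sigma_2$, because $\pi$ is a retraction onto $\Sigma_2^*$ commuting with everything in sight. The second route, if one wants the statement literally with no hidden letters, is to cite the fact that the proof of Corollary~\ref{cor:result-incl-tf-binary-undec-src} (via Saarela) produces patterns for which the larger alphabet only adds words that are trivially handled, but the projection argument is self-contained and I would write it up that way. Finally, I would remark that since the length constraints $\ell_\alpha,\ell_\beta$ in the binary instance have no disjunctions (Saarela's instance has none, and multiplying by $k$ preserves this), the resulting instance over $\Sigma$ also has no disjunctions, giving the ``even if $\ell_\alpha$ and $\ell_\beta$ have no disjunctions'' part of the statement.
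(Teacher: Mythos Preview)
Your projection argument has a genuine gap in the direction you call ``forward'' (inclusion over $\Sigma_2$ $\Rightarrow$ inclusion over $\Sigma$). The length-preserving projection $\pi:\Sigma\to\Sigma_2$ does satisfy $\pi(h(\gamma))=(\pi\circ h)(\gamma)$ for terminal-free $\gamma$, so from $w\in L_{NE}^\Sigma(\alpha,\ell_\alpha)$ you correctly get $\pi(w)\in L_{NE}^{\Sigma_2}(\alpha,\ell_\alpha)\subseteq L_{NE}^{\Sigma_2}(\beta,\ell_\beta)$, i.e.\ a substitution $g$ over $\Sigma_2$ with $g(\beta)=\pi(w)$. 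But you then need $w\in L_{NE}^\Sigma(\beta,\ell_\beta)$, and there is no way to lift $g$ back to a $g'$ with $g'(\beta)=w$: whenever a variable $y$ occurs more than once in $\beta$, the corresponding factors of $w$ at the different occurrences may be \emph{different} words over $\Sigma$ that merely have the same $\pi$-image. The slogan ``$\pi$ commutes with everything in sight'' hides that $\pi$ is not injective, so factorizations of $\pi(w)$ do not pull back to factorizations of $w$. In short, your retraction argument proves only the easy direction (inclusion over $\Sigma$ $\Rightarrow$ inclusion over $\Sigma_2$), which is not the one you need for the reduction.

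This is not a repairable technicality: the implication ``inclusion over $\Sigma_2$ $\Rightarrow$ inclusion over $\Sigma$'' is \emph{false} in general for terminal-free NE-pattern languages, and this is precisely why extending Saarela's binary undecidability to larger alphabets is still open in the unconstrained setting. The paper's proof therefore does not reduce from the binary case at all. Instead it gives, for each alphabet size, a direct reduction from the emptiness problem of a universal Turing machine (adapting the Bremer--Freydenberger construction): terminals $0,\#$ are replaced by variables $x_0,x_\#$ pinned to length $1$ by the length constraints, and for every extra letter $\ta_i$ a further length-$1$ variable $x_{\ta_i}$ is introduced. Additional predicate sub-patterns are then added to $\beta$ so that any $\alpha$-substitution in which two of these ``letter variables'' coincide, or in which some $h(x_{\ta_i})$ occurs inside the computation part, is automatically captured by $\beta$; only substitutions that assign pairwise distinct letters and keep the computation part over $\{h(x_0),h(x_\#)\}$ are forced through the original binary predicates. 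This is exactly the work your projection tries to shortcut, and it genuinely requires the length constraints (which is why the unconstrained case for $|\Sigma|\ge 3$ remains open).
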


The proof of Theorem \ref{theorem:patlang-len-nonerasing-inclusion-tf-undecidability} can be found in Appendix \ref{section:proof2}. Again, here, we only give a rough sketch of the idea. The binary case follows by Corollary \ref{cor:result-incl-tf-binary-undec-src}, i.e., the result by Saarela \cite{saarela:LIPIcs.ICALP.2020.140}. The general proof idea is based on the construction done by Freydenberger and Bremer \cite{BREMER201215} for the binary case and significantly adapted in its extension to arbitrary alphabets. As the extension to larger alphabets is strongly based on the construction for the binary case, a complete construction is given for both cases to aid understandability. We reduce the emptiness problem of a specific universal Turing Machine $U$ (the one mentioned in the preliminaries) to the inclusion problem of non-erasing pattern languages with length constraints for terminal-free patterns. The idea is relatively similar to the idea for the proof of Theorem \ref{theorem:patlang-len-erasing-equivalence-undecidable} (or the proof in \cite{BREMER201215}) in that is utilizes a pattern $(\alpha,\ell_\alpha)$ in which all words with a specific structure can be obtained, and a pattern $(\beta,\ell_\beta)$ that requires the use of predicates to obtain words from $L_{NE}(\alpha,\ell_\alpha)$. Again, we mention that the constructed length constraints $\ell_\alpha$ and $\ell_\beta$ respectively consist only of a single system of linear inequalities, thus, they use no disjunctions and are, by that, rather simple. As this result only shows undecidability of the inclusion problem, we do not have the property $L_{NE}(\beta,\ell_\beta)\subseteq L_{NE}(\alpha,\ell_\alpha)$ and only use the case distinction whether $L_{NE}(\alpha,\ell_\alpha) \subseteq L_{NE}(\beta,\ell_\beta)$ to decide the emptiness of $\ValC(U)$. For the binary case, we essentially take the constructed patterns from \cite{BREMER201215} but swap out any occurrence of the letter $0$ with a new variable $x_0$ and any occurrence of the letter $x_\#$ with a new variable $x_\#$ (also similar, to some extend, to the idea in \cite{saarela:LIPIcs.ICALP.2020.140}). In that sense, the patterns look very similar to the constructed patterns in \cite{BREMER201215}, but do not contain any terminals. The addition of length constraints allows for the length of any of the substitutions for $x_0$ and $x_\#$ to be bounded by length $1$. Adaptations to the constructed patterns allow to show a property that any word obtained from $(\alpha,\ell_\alpha)$ can be obtained from $(\beta,\ell_\beta)$ if $x_0$ and $x_\#$ are substituted by the same letter. If $x_0$ and $x_\#$ are substituted by different letters, it is shown that essentially the same logic as in the proof done in \cite{BREMER201215} can be applied here, and by that, the binary case is concluded. Now, assume $\Sigma = \{0,\#,\ta_1,...,\ta_\sigma\}$ for some $\sigma > 0$. For each new letter $\ta_i$ with $i\in[\sigma]$, a new variable $x_{\ta_i}$ is introduced and bounded in its length by $1$ by using the length constraints in both patterns. Then, factors consisting of only these variables are prepended to the patterns $\alpha$ and $\beta$ as well as in certain positions in the middle parts of both patterns. This allows for an extension of the property that if any two of the variables $x_0$, $x_\#$, and $x_{\ta_1}$ to $x_{\ta_\sigma}$ are substituted by equal letters in a substitution for $(\alpha,\ell_\alpha)$, then the resulting word is always contained in $L_{NE}(\beta,\ell_\beta)$. In addition to that, predicates are added that can be used if all variables $x_0$, $x_\#$, and $x_{\ta_1}$ to $x_{\ta_\sigma}$ are substituted by different letters, but any of the letters obtained from the variables $x_{\ta_1}$ to $x_{\ta_\sigma}$ appear later on in the part that may contain the encodings of valid computations from $U$. By that, using these and the previous predicates, only if a substitution on $(\alpha,\ell_\alpha)$ replaces $x_0$, $x_\#$, and $x_{\ta_1}$ to $x_{\ta_\sigma}$ with pairwise distinct letters and a specific middle part in $\alpha$ is substituted by the encoding of a valid computation from $U$-using the letters obtained by the substitutions of $x_0$ and $x_\#$-we cannot obtain that word from $(\beta,\ell_\beta)$. Hence, if and only if $\ValC(U)$ has no valid computations, we have that $L_{NE}(\alpha,\ell_\alpha)\subseteq L_{NE}(\beta,\ell_\beta)$.

\begin{remark}
    Indeed, by Corollary \ref{corollary:patlang-len-erasing-atleastashardas-nonerasing}, Theorem \ref{theorem:patlang-len-nonerasing-inclusion-tf-undecidability} also implies the undecidability of the inclusion problem of terminal-free erasing pattern languages with length constraints. Hence, this is an additional approach to obtain that result other than Theorem \ref{theorem:patlang-len-erasing-equivalence-undecidable}.
\end{remark}

Whether the equivalence problem of non-erasing pattern languages with length constraints is decidable, is left as an open question with no specific conjecture so far. Interestingly, also in the case of regular constraints, no definite answer for the decidability of this problem could be given, yet (cf. \cite{Nowotka2024}). A summary of the current results can be found in Table \ref{tab:complexity-full} at the end of the final discussion.

\label{section:patlang-len}

%%%%%%%%%%%%%%%%%%%%%%%%%%%%%%%%%%%%%%%%%%%%%%
% ##### REGULAR AND LENGTH CONSTRAINTS ##### %
%%%%%%%%%%%%%%%%%%%%%%%%%%%%%%%%%%%%%%%%%%%%%%
\section{Result for Pattern Languages with Regular and Length Constraints}
\label{section:patlang-reglen}
As there are still open problems for pattern languages with regular (or length constraints), i.e., the equivalence problem for non-erasing pattern languages with regular (or length-constraints), finding an answer to this problem for a larger class of languages using both constraint-types is well motivated. Indeed, considering pattern languages with regular and length constraints suffices to obtain the undecidability of that problem. That is noticeable, since this problem is trivially decidable for patterns without any constraints. We begin with mentioning all results for decision problems on pattern languages with regular and length constraints that we can immediately obtain from pervious results. 

For the membership, we immediately obtain NP-completeness for all cases, i.e., erasing- and non-erasing as well as terminal-free and general.

\begin{corollary}
    Let $(\alpha,r_\alpha,\ell_\alpha)\in\PatRegLen$ and $w\in\Sigma^*$. The decision problem of whether $w\in L_{X}(\alpha,r_\alpha,\ell_\alpha)$ for $X\in\{E,NE\}$ is NP-complete, even if the considered pattern $\alpha$ is terminal-free.
\end{corollary}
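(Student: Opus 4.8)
The plan is to adapt the argument already sketched for Corollary~\ref{corollary:patlan-len-membership-npc}, adding a single extra verification step to handle the regular constraints. For the hardness direction, I would observe that a pattern with regular and length constraints subsumes a plain pattern: taking $r_\alpha(x)=\Sigma^*$ for every $x$ and letting $\ell_\alpha$ be empty recovers the ordinary (terminal-free) membership problem, which is NP-hard~\cite{DBLP:journals/jcss/Angluin80,Jiang1994,Schmid2012}; equivalently one may invoke Corollary~\ref{corollary:patlan-len-membership-npc} directly, and since the hardness instances there are already terminal-free, so are ours.

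For membership in NP, I would use as a certificate a substitution $h$ restricted to $\var(\alpha)$. Any $h$ with $h(\alpha)=w$ satisfies $\sum_{x\in\var(\alpha)}|\alpha|_x\,|h(x)|\le|h(\alpha)|=|w|$, so each $h(x)$ has length at most $|w|$ and the certificate has size polynomial in the input (in the non-erasing case one additionally records that each $h(x)\neq\varepsilon$, checkable in linear time). A nondeterministic machine guesses such an $h$ and then verifies three conditions in polynomial time: (i) $h(\alpha)=w$, by expanding $\alpha$ under $h$ and comparing letter by letter; (ii) $\ell_\alpha$-validity, exactly as in the length-constraint-only case, by plugging the integers $|h(x)|$ into the disjunction of systems of linear inequalities and evaluating, which is polynomial in the size of $\ell_\alpha$ and in $\log|w|$; and (iii) $r_\alpha$-validity, i.e.\ $h(x)\in r_\alpha(x)$ for every $x\in\var(\alpha)$, which is finitely many membership tests in regular languages and hence solvable in time polynomial in $|h(x)|$ and in the size of the automata describing the constraints. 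Combining (i)--(iii) with the polynomial certificate bound places the problem in NP, and together with the hardness observation this yields NP-completeness for all four variants (erasing/non-erasing, general/terminal-free).

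The only point requiring a word of care — and the nearest thing to an obstacle — is the representation of the regular constraints as part of the input. As long as each $r_\alpha(x)$ is given by an NFA, or by a regular expression (which converts to an NFA with at most polynomial blow-up via Thompson's construction), step (iii) runs in polynomial time by the standard simulation/product algorithm, so the whole verification stays polynomial; this is essentially the same observation that makes $\ell_\alpha$-validity efficiently checkable, now applied to regular rather than length constraints. No genuinely new technical difficulty arises beyond what was already present for patterns with only length constraints.
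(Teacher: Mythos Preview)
Your argument is correct and follows essentially the same route as the paper: hardness is inherited from the unconstrained (terminal-free) membership problem by taking trivial constraints, and containment in NP is obtained by guessing a short substitution and verifying in polynomial time that it matches $w$, satisfies $\ell_\alpha$, and satisfies $r_\alpha$. If anything, your write-up is more explicit than the paper's one-line justification, in particular your remark on the representation of the regular constraints, which the paper leaves implicit.
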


Indeed, NP-hardness follows, as before, from the NP-hardness of pattern languages without any constraints. NP-containment follows from the fact NP-containment can be checked in the cases of pattern languages with regular-constraint as well as pattern languages with length-constraints and that no additional properties have to be checked.

As this class of pattern languages utilizes regular constraints, we immediately obtain the property mentioned before, and shown in \cite{Nowotka2024}, that we can always find a terminal-free pattern with regular constraints that expresses the same language as a general pattern with regular constraints. 

\begin{proposition}\label{proposition:patlan-reg-tf-same-as-general}\cite{Nowotka2024}
    Let $(\alpha,r_\alpha)\in\PatReg$ be some pattern with regular constraints. Then, there exists some terminal free pattern $\beta\in X^*$ and a regular constraint $r_\beta$ such that $L_X(\alpha,r_\alpha) = L_X(\beta,r_\beta)$ for $X\in\{E,NE\}$.
\end{proposition}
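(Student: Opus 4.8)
\textbf{Proof plan for Proposition~\ref{proposition:patlan-reg-tf-same-as-general}.}
The plan is to eliminate the terminal letters of $\alpha$ one occurrence at a time (or all at once) by replacing each terminal occurrence with a fresh variable whose regular constraint forces it to be exactly that letter. Concretely, let $\alpha = \alpha_1 \cdots \alpha_n$ with each $\alpha_j \in \Sigma \cup X$. For every position $j$ with $\alpha_j = \ta \in \Sigma$, introduce a brand-new variable $y_j \in X$ not occurring in $\alpha$ (and pairwise distinct across such positions), and set $\beta$ to be the word obtained from $\alpha$ by substituting $y_j$ for $\alpha_j$ at every such position. For the variables already present in $\alpha$, let $r_\beta$ agree with $r_\alpha$; for each new variable $y_j$ with $\alpha_j = \ta$, set $r_\beta(y_j) := \{\ta\}$, which is a (finite, hence) regular language; all other variables keep the default constraint $\Sigma^*$. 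Then $\beta \in X^*$ is terminal-free and $(\beta, r_\beta) \in \PatReg$.

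The key step is to verify $L_X(\alpha, r_\alpha) = L_X(\beta, r_\beta)$ for $X \in \{E, NE\}$. For the inclusion ``$\subseteq$'': given an $r_\alpha$-valid substitution $h$ for $\alpha$ (erasing or non-erasing as appropriate), define $h'$ on $\var(\beta)$ by $h'(x) := h(x)$ for the original variables and $h'(y_j) := \alpha_j$ for each new variable. Since each $r_\beta(y_j) = \{\alpha_j\}$ is a singleton of a single letter, $h'(y_j) \in r_\beta(y_j)$ holds, and $h'(y_j) \in \Sigma^+$ as well, so $h'$ is $r_\beta$-valid in both the erasing and non-erasing senses. By construction $h'(\beta) = h(\alpha)$, since replacing $y_j$ back by $\alpha_j$ at exactly the positions where terminals were removed reconstructs $\alpha$ and the morphism values on the shared variables coincide. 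For ``$\supseteq$'': given an $r_\beta$-valid substitution $h'$ for $\beta$, the constraint $r_\beta(y_j) = \{\alpha_j\}$ forces $h'(y_j) = \alpha_j$, so $h'(\beta) = h(\alpha)$ where $h$ is the restriction of $h'$ to the original variables; $h$ is $r_\alpha$-valid since $r_\beta$ agrees with $r_\alpha$ on those variables, and $h$ inherits (non-)erasingness from $h'$. Hence the two languages coincide.

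There is essentially no obstacle here: the argument is a routine bookkeeping verification, and the only point needing a word of care is that singleton letter-languages are regular so that $r_\beta \in \mathcal{C}_{Reg}$ is well-defined, and that the newly introduced variables are chosen fresh and pairwise distinct so that distinct terminal occurrences are not inadvertently forced to be equal (which would happen only if the same fresh variable were reused). One may also note, as a stylistic alternative, that it suffices to introduce a single fresh variable $y_\ta$ per terminal letter $\ta \in \al(\alpha)$ with $r_\beta(y_\ta) := \{\ta\}$ and reuse it at all occurrences of $\ta$; this still works precisely because the constraint pins the value to $\ta$ regardless of repetition. Since this is exactly the statement proved in \cite{Nowotka2024}, the proof may simply be cited, but the construction above makes the result self-contained within the present setting.
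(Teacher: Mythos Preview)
Your proof is correct and follows essentially the same approach the paper indicates: the paper does not give its own proof of this proposition (it simply cites \cite{Nowotka2024}), but in the paragraph immediately following it the paper sketches exactly this construction---introducing a fresh variable $x_{\ta}$ with $L(x_{\ta})=\{\ta\}$ for each terminal letter $\ta$---which is your ``stylistic alternative''. Your primary per-occurrence variant is an equally valid and equally routine way to carry this out.
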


Using the same logic as used to show that property, we immediately obtain the same for pattern languages with regular and length constraints, as, given some pattern with regular and length constraints $(\alpha,r_\alpha,\ell_\alpha)\in\PatRegLen$, one can introduce for each terminal letter $\mathtt{a}$ a new variable $x_\mathtt{a}$ and set $L(x_{\mathtt{a}}) = \{\mathtt{a}\}$ while not changing any length constraints.

\begin{corollary}\label{corollary:patlan-reglen-tf-same-as-general}
    Let $(\alpha,r_\alpha,\ell_\alpha)\in\PatRegLen$ be some pattern with regular- and length constraints. Then, there exists some terminal free pattern $\beta\in X^*$, a regular constraint $r_\beta$, and a length constraint $\ell_\beta$ such that $L_X(\alpha,r_\alpha,\ell_\beta) = L_X(\beta,r_\beta,\ell_\beta)$ for $X\in\{E,NE\}$.
\end{corollary}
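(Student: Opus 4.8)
The plan is to reuse the renaming construction behind Proposition~\ref{proposition:patlan-reg-tf-same-as-general} essentially unchanged, simply carrying the length constraint along untouched. Given $(\alpha,r_\alpha,\ell_\alpha)\in\PatRegLen$, I would pick, for each terminal $\mathtt a\in\al(\alpha)$, a fresh variable $x_{\mathtt a}\in X\setminus\var(\alpha)$, all pairwise distinct, and let $\beta$ be the word obtained from $\alpha$ by replacing every occurrence of each $\mathtt a$ with $x_{\mathtt a}$. Then $\beta\in X^*$ is terminal-free and $\var(\beta)=\var(\alpha)\cup\{x_{\mathtt a}\mid \mathtt a\in\al(\alpha)\}$. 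I would set $r_\beta(x_{\mathtt a}):=\{\mathtt a\}$ for the new variables, $r_\beta(x):=r_\alpha(x)$ for $x\in\var(\alpha)$, $r_\beta(x):=\Sigma^*$ otherwise, and $\ell_\beta:=\ell_\alpha$; this is a legal length constraint for $\beta$ because every variable it mentions lies in $\var(\alpha)\subseteq\var(\beta)$.

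The heart of the argument is then two inclusions, each obtained by transporting substitutions across the renaming. For the forward direction, from an $r_\alpha$-$\ell_\alpha$-valid $h$ for $\alpha$ I would define $h'$ for $\beta$ by $h'(x_{\mathtt a}):=\mathtt a$ and $h'(x):=h(x)$ for $x\in\var(\alpha)$; one checks $h'$ is $r_\beta$-valid (since $\mathtt a\in\{\mathtt a\}=r_\beta(x_{\mathtt a})$ and $h'(x)=h(x)\in r_\alpha(x)=r_\beta(x)$) and $\ell_\alpha$-valid (since $|h'(x)|=|h(x)|$ on $\var(\alpha)$ and $\ell_\alpha$ constrains only those variables), and that $h'(\beta)=h(\alpha)$ because the map $x_{\mathtt a}\mapsto\mathtt a$ exactly undoes the renaming letter by letter. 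For the reverse direction, any $r_\beta$-$\ell_\alpha$-valid $h'$ for $\beta$ must satisfy $h'(x_{\mathtt a})=\mathtt a$, the unique word in $r_\beta(x_{\mathtt a})$, so setting $h$ to agree with $h'$ on $\var(\alpha)$ yields an $r_\alpha$-$\ell_\alpha$-valid substitution with $h(\alpha)=h'(\beta)$. Since $h'(x_{\mathtt a})=\mathtt a\neq\varepsilon$, non-erasingness is preserved in both directions, so the equality $L_X(\alpha,r_\alpha,\ell_\beta)=L_X(\beta,r_\beta,\ell_\beta)$ holds simultaneously for $X=E$ and $X=NE$.

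I do not expect a genuine obstacle here: the proof is pure bookkeeping, and the only points needing a moment's care are that $\ell_\alpha$ stays well-formed for $\beta$ (it does, as $\var(\alpha)\subseteq\var(\beta)$), that the length constraint never refers to the new variables and is therefore undisturbed, and that the new variables are always mapped to single letters, which makes the $X=NE$ case go through verbatim. If one prefers the length constraint to mention every variable of $\beta$, one may harmlessly append the inequalities $x_{\mathtt a}\geq 1$ (or $x_{\mathtt a}=1$) to $\ell_\beta$ for each new variable, since $r_\beta$ already forces $|h'(x_{\mathtt a})|=1$.
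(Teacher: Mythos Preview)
Your proposal is correct and follows exactly the approach the paper sketches: introduce a fresh variable $x_{\mathtt a}$ for each terminal $\mathtt a$, set $r_\beta(x_{\mathtt a})=\{\mathtt a\}$, and leave the length constraint unchanged. You have simply filled in the routine verification that the paper omits, and your remarks on well-formedness of $\ell_\beta$ and on the non-erasing case are precisely the small points one should check.
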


Additionally, as the inclusion problem is undecidable for all cases (E, NE, terminal-free, general) for pattern languages with regular or with length constraints, the same follows immediately for pattern languages with regular and length constraints.

\begin{corollary}
    Let $(\alpha,r_\alpha,\ell_\alpha),(\beta,r_\beta,\ell_\beta)\in\PatRegLen$. It is undecidable to answer whether $L_X(\alpha,r_\alpha,\ell_\alpha)\subseteq L_X(\beta,r_\beta,\ell_\beta)$ for $X\in\{E,NE\}$ for all alphabets $\Sigma$ with $|\Sigma|\geq 2$, even if $\alpha$ and $\beta$ are restricted to be terminal-free.
\end{corollary}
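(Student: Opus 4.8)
The plan is to reduce the inclusion problem for terminal-free pattern languages with length constraints only -- which is already known to be undecidable -- to the inclusion problem for terminal-free pattern languages with regular \emph{and} length constraints, via the trivial embedding that equips a pattern with length constraints with the all-permitting (default) regular constraint.

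First I would recall that, by Theorem~\ref{theorem:patlang-len-nonerasing-inclusion-tf-undecidability}, the problem of deciding $L_{NE}(\alpha,\ell_\alpha)\subseteq L_{NE}(\beta,\ell_\beta)$ is undecidable for terminal-free patterns $\alpha,\beta\in X^*$ and every $\Sigma$ with $|\Sigma|\geq 2$; and by Corollary~\ref{corollary:patlang-len-erasing-atleastashardas-nonerasing} (cf.\ the remark above), or directly from the corollary to Theorem~\ref{theorem:patlang-len-erasing-equivalence-undecidable}, the same holds for the erasing case $L_{E}(\alpha,\ell_\alpha)\subseteq L_{E}(\beta,\ell_\beta)$. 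Thus, for each $X\in\{E,NE\}$ there is an undecidable family of instances consisting of pairs of terminal-free patterns with length constraints.

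Next I would observe that any $(\alpha,\ell_\alpha)\in\PatLen$ is canonically a pattern with regular and length constraints: take the default regular constraint $r_\alpha$ with $r_\alpha(x)=\Sigma^*$ for all $x\in X$, so that $(\alpha,r_\alpha,\ell_\alpha)\in\PatRegLen$. Since $h(x)\in\Sigma^*=r_\alpha(x)$ always holds, every substitution is $r_\alpha$-valid, hence $h$ is $r_\alpha$-$\ell_\alpha$-valid if and only if it is $\ell_\alpha$-valid; that is, $H_{r_\alpha,\ell_\alpha}=H_{\ell_\alpha}$. Consequently $L_X(\alpha,r_\alpha,\ell_\alpha)=L_X(\alpha,\ell_\alpha)$ for $X\in\{E,NE\}$, and $\alpha$ remains terminal-free.

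Finally I would note that the map $(\alpha,\ell_\alpha)\mapsto(\alpha,r_\alpha,\ell_\alpha)$ with the default $r_\alpha$ is computable and, by the previous paragraph, satisfies $L_X(\alpha,\ell_\alpha)\subseteq L_X(\beta,\ell_\beta)$ if and only if $L_X(\alpha,r_\alpha,\ell_\alpha)\subseteq L_X(\beta,r_\beta,\ell_\beta)$. This is a many-one reduction from the undecidable problem of the second paragraph, so the inclusion problem for terminal-free pattern languages with regular and length constraints is undecidable for every $\Sigma$ with $|\Sigma|\geq 2$ and each $X\in\{E,NE\}$. There is essentially no obstacle: the only content is the identity $H_{r_\alpha,\ell_\alpha}=H_{\ell_\alpha}$ for the default regular constraint, which is immediate from the definitions in Section~\ref{section:prelims}. (Alternatively one could reduce from the inclusion problem for terminal-free pattern languages with regular constraints of \cite{Nowotka2024} by attaching an empty length constraint; the two reductions are equally trivial.)
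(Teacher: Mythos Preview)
Your proposal is correct and follows essentially the same approach as the paper, which simply observes that undecidability of inclusion for pattern languages with length constraints alone (or regular constraints alone) immediately carries over to the combined class by attaching the trivial default constraint of the other kind. You have merely spelled out in detail what the paper states in one sentence.
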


The same follows for the equivalence problem of general and terminal-free erasing pattern languages with regular- and length-constraints, as also here we have the undecidability for both cases, regular constraints and length constraints, respectively. We obtain the following.

\begin{corollary}
    Let $(\alpha,r_\alpha,\ell_\alpha),(\beta,r_\beta,\ell_\beta)\in\PatRegLen$. It is undecidable to answer whether $L_E(\alpha,r_\alpha,\ell_\alpha)= L_E(\beta,r_\beta,\ell_\beta)$ for all alphabets $\Sigma$ with $|\Sigma|\geq 2$, even if $\alpha$ and $\beta$ are restricted to be terminal-free.
\end{corollary}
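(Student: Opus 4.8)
The plan is to reduce the equivalence problem for terminal-free erasing pattern languages with length constraints, which is undecidable by Theorem~\ref{theorem:patlang-len-erasing-equivalence-undecidable} for every alphabet $\Sigma$ with $|\Sigma|\geq 2$ (even without disjunctions in the length constraints), to the equivalence problem in the statement. The reduction is the canonical embedding of $\PatLen$ into $\PatRegLen$: given $(\alpha,\ell_\alpha)\in\PatLen$ with a terminal-free pattern $\alpha\in X^*$, I would map it to $(\alpha,r_\alpha,\ell_\alpha)\in\PatRegLen$, where $r_\alpha$ is the trivial regular constraint with $r_\alpha(x)=\Sigma^*$ for every variable $x$, and likewise for $(\beta,\ell_\beta)$. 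This map is clearly computable and keeps $\alpha$ and $\beta$ terminal-free.

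The only point requiring care is to check that this embedding preserves the erasing pattern language, and this is immediate from the definitions. A substitution $h\in H$ is $r_\alpha$-$\ell_\alpha$-valid iff it is both $r_\alpha$-valid and $\ell_\alpha$-valid; but $r_\alpha$-validity means $h(x)\in r_\alpha(x)=\Sigma^*$ for all $x\in\var(\alpha)$, which holds for every substitution. Hence $H_{r_\alpha,\ell_\alpha}=H_{\ell_\alpha}$, and therefore $L_E(\alpha,r_\alpha,\ell_\alpha)=L_E(\alpha,\ell_\alpha)$; the same equality holds for $(\beta,r_\beta,\ell_\beta)$.

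Putting this together: if the equivalence problem for terminal-free $L_E$ over $\PatRegLen$ were decidable for some $\Sigma$ with $|\Sigma|\geq 2$, then by passing through the embedding above one could decide whether $L_E(\alpha,\ell_\alpha)=L_E(\beta,\ell_\beta)$ for arbitrary terminal-free $(\alpha,\ell_\alpha),(\beta,\ell_\beta)\in\PatLen$ over that alphabet, contradicting Theorem~\ref{theorem:patlang-len-erasing-equivalence-undecidable}. Alternatively, one could run the analogous argument from the undecidability of the equivalence problem for terminal-free erasing pattern languages with regular constraints---which follows from \cite{Nowotka2024} together with Proposition~\ref{proposition:patlan-reg-tf-same-as-general}---using the embedding that equips a pattern with the empty length constraint; either route works. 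I do not expect any genuine obstacle here: the entire difficulty has already been absorbed into Theorem~\ref{theorem:patlang-len-erasing-equivalence-undecidable} (or into the regular-constraint result of \cite{Nowotka2024}), and the remaining step is the routine verification, carried out above, that the trivial regular constraint (resp.\ the empty length constraint) does not change the language.
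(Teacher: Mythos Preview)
Your proposal is correct and matches the paper's approach: the paper simply observes that undecidability for the combined class follows immediately from the undecidability in either of the single-constraint settings (Theorem~\ref{theorem:patlang-len-erasing-equivalence-undecidable} or the regular-constraint result of \cite{Nowotka2024}), which is precisely the trivial embedding you spell out. You have just made explicit the canonical-embedding step that the paper leaves implicit.
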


For all results obtained so far, no disjunction of systems of linear inequalities have had to be applied. For the equivalence problem of non-erasing pattern languages with regular and length constraints, making use of disjunctions in length constraints in addition to regular constraints allows us to obtain undecidability in this case. The following theorem concludes the third and final main result of this paper.

\begin{theorem}\label{theorem:patlang-reglen-nonerasing-equivalence-undecidability}
    Let $(\alpha,r_\alpha,\ell_\alpha),(\beta,r_\beta,\ell_\beta)\in\PatRegLen$. It is undecidable to answer whether $L_{NE}(\alpha,r_\alpha,\ell_\alpha) = L_{NE}(\beta,r_\beta,\ell_\beta)$ for all alphabets $\Sigma$ with $|\Sigma| \geq 2$, even if $\alpha$ and $\beta$ are restricted to be terminal-free.
\end{theorem}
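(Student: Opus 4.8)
\textbf{Proof plan for Theorem~\ref{theorem:patlang-reglen-nonerasing-equivalence-undecidability}.}
The plan is to reduce the emptiness problem for nondeterministic $2$-counter automata without input to the non-erasing equivalence problem, reusing the machinery already set up for Theorem~\ref{theorem:patlang-len-erasing-equivalence-undecidable}. Recall that in that earlier construction we obtained, from an automaton $A$, two patterns with (disjunction-free) length constraints $(\alpha,\ell_\alpha)$ and $(\beta,\ell_\beta)$ with $L_E(\beta,\ell_\beta)\subseteq L_E(\alpha,\ell_\alpha)$ always, and $L_E(\alpha,\ell_\alpha)=L_E(\beta,\ell_\beta)$ iff $\ValC(A)=\emptyset$; the pattern $\alpha$ produced all words of a prescribed structure, while $\beta$ was a tester consisting of a disjunction of \emph{predicates}, one of which must be matched, and the predicates jointly cover exactly the structurally-correct words that do \emph{not} contain a valid computation of $A$ as a factor. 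The difficulty in the non-erasing world is that the erasing construction crucially uses the ability to delete variables — both to realize the ``choose one predicate'' mechanism in $\beta$ and to let $\alpha$ generate words of varying shape. The key idea is that in the non-erasing setting we now have two extra tools that the erasing setting did not grant us: regular constraints on variables, and \emph{disjunctions} in the length constraints. A disjunction of systems of linear inequalities of the form ``$|y|=0$'' versus ``$|y|\ge 1$ and (the constraints that make $y$ behave as before)'' is exactly the device that simulates, at the level of a single still-present variable, the erasing/non-erasing dichotomy; and a regular constraint $r(y)=\{\varepsilon\}\cup\Sigma^+$ is vacuous, whereas $r(y)=\{\varepsilon\}$ forces deletion. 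Hence the first step is to take the patterns $(\alpha,\ell_\alpha),(\beta,\ell_\beta)$ from the proof of Theorem~\ref{theorem:patlang-len-erasing-equivalence-undecidable}, keep the terminal-free skeletons and the ``core'' length constraints, and for every variable $z$ that in the erasing proof was allowed to be mapped to $\varepsilon$, replace its effect by a fresh companion mechanism: add to the (now disjunctive) length constraint a clause ``$|z|=0$'' in parallel with the old clause, so that $\ell$-validity in the non-erasing language over the modified patterns reproduces exactly the set of $\ell$-valid erasing substitutions of the original patterns.

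The second step is to verify the two halves of the equivalence under this translation. For the easy inclusion, one shows that for every non-erasing $r_\beta$-$\ell_\beta$-valid substitution of the modified $\beta$ there is a corresponding substitution of $\alpha$ giving the same word, mirroring the $L_E(\beta,\ell_\beta)\subseteq L_E(\alpha,\ell_\alpha)$ argument of the first main theorem essentially line for line, since the disjunctive length clauses only \emph{restrict} substitutions and the regular constraints on the predicate-selector variables are chosen to be $\{\varepsilon\}$ exactly when the corresponding predicate is ``switched off''. For the interesting direction, one argues that a structurally-correct word containing a valid computation of $A$ as the designated factor still cannot be produced by any predicate of the modified $\beta$ — the predicates are built, as before, to avoid exactly those factors — so $L_{NE}(\alpha,r_\alpha,\ell_\alpha)=L_{NE}(\beta,r_\beta,\ell_\beta)$ forces $\ValC(A)=\emptyset$; conversely if $\ValC(A)=\emptyset$ every word of $\alpha$ is structurally correct and computation-free, hence matched by some predicate, giving equality. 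Because $\alpha,\beta$ were already terminal-free in Theorem~\ref{theorem:patlang-len-erasing-equivalence-undecidable} (or can be made so by Corollary~\ref{corollary:patlan-reglen-tf-same-as-general}), the terminal-free strengthening comes for free, as does the extension from $|\Sigma|=2$ to all $|\Sigma|\ge 2$ by the same alphabet-padding remarks used in that proof.

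The main obstacle I expect is not the high-level reduction but the bookkeeping of the \emph{predicate-selection} gadget in a non-erasing pattern: in the erasing tester $\beta$ one writes a concatenation of blocks and sets all-but-one block to $\varepsilon$, but non-erasingly every block must contribute at least one letter, so one must redesign $\beta$ so that the ``unselected'' blocks are absorbed harmlessly. The fix is to route each unselected block's mandatory letter into a position where $\alpha$ can freely match it — e.g.\ a long free variable region in $\alpha$ — or, cleaner, to let a single global variable $x_\#$ (with $r(x_\#)=\Sigma$, $|x_\#|=1$) carry a single filler letter per block and use the length constraint's disjunction to toggle exactly one block into ``active length'' and the rest into a fixed short padding length that $\alpha$ reproduces; proving that this padding never accidentally creates a new valid-computation factor, and never lets a computation-carrying word sneak into $L_{NE}(\beta,\dots)$, is the delicate verification. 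A secondary, more mechanical obstacle is confirming that the whole length constraint remains a finite disjunction of finite systems of linear Diophantine inequalities after all these clauses are added — this is immediate since we add only finitely many clauses per variable and there are finitely many variables, but it must be stated. Modulo that redesign of the selector gadget, the rest of the proof is a transcription of the Theorem~\ref{theorem:patlang-len-erasing-equivalence-undecidable} argument with ``erasing'' replaced by ``non-erasing plus a disjunctive $|z|=0$ clause''.
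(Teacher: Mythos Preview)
Your proposal has a genuine gap at its first concrete step. You propose to simulate erasure in the non-erasing setting by adding a disjunct ``$|z|=0$'' to the length constraint, or by setting $r(z)=\{\varepsilon\}$. Neither works: in $L_{NE}$ every substitution must send every variable into $\Sigma^+$ \emph{by definition}, independently of any constraints. A length clause $|z|=0$ is simply unsatisfiable by any non-erasing substitution, and $r(z)=\{\varepsilon\}$ likewise has empty intersection with $\Sigma^+$. So the ``transcription with $|z|=0$'' idea collapses immediately; you cannot recover the erasing semantics this cheaply.

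You do recognise the real obstacle in your final paragraph---unselected predicate blocks must still contribute letters and those letters must be absorbed---and your sketched fix (a constant-length padding per unselected block that $\alpha$ reproduces) is on the right track, but the part you wave away is exactly the hard part. The paper's construction solves this not by tweaking the Theorem~\ref{theorem:patlang-len-erasing-equivalence-undecidable} patterns but by a substantially different design borrowed from \cite{BREMER201215,Nowotka2024}: one sets $\alpha = t\,v\,0\,\alpha_1\,0\,v\,t^2$ where $t=\psi(\beta)$ is a fixed terminal word obtained by collapsing every variable of $\beta$ to $0$, and in $\beta=r_1\hat\beta_1 r_2\cdots r_\mu\hat\beta_\mu r_{\mu+1}$ each $r_i$ carries a three-element \emph{regular} constraint $\{0,\ \text{a specific suffix of }t,\ \text{a specific prefix of }t^2\}$. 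The disjunctive length constraint then forces exactly one adjacent pair $r_i,r_{i+1}$ to take the long values and all other $r_j$ and all variables in unselected $\gamma_j$ to take value $0$, so that the unselected blocks reproduce the $t$/$t^2$ padding exactly. Moreover the predicates $\gamma_i$ themselves are redesigned from scratch: they no longer rely on the $z_i,z_i'$ letter-fixing trick of Theorem~\ref{theorem:patlang-len-erasing-equivalence-undecidable} but instead use regular constraints on their internal variables to recognise bad structure, non-final last state, invalid counter change, and invalid transition directly as regular properties. None of this is a transcription of the erasing proof; it is a different gadget, and your plan as stated does not supply it.
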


Due to its extensive length, the proof of Theorem \ref{theorem:patlang-reglen-nonerasing-equivalence-undecidability} can be found in Appendix \ref{section:proof3}. Its logic works similar to the proof of Theorem \ref{theorem:patlang-len-erasing-equivalence-undecidable} and the proof of the main result in \cite{Nowotka2024}, but needs significant adaptations to work for non-erasing pattern languages. Again, here, we only give a rough sketch of the idea of the proof. Similar to the proof of Theorem \ref{theorem:patlang-len-erasing-equivalence-undecidable}, we reduce the emptiness problem for nondeterministic 2-counter automata to the equivalence problem of non-erasing pattern languages with regular and length constraints. So, given some nondeterministic 2-counter automaton without input $A$, we construct two patterns with regular and length constraints $(\alpha,r_\alpha,\ell_\alpha)$ and $(\beta,r_\beta,\ell_\beta)$ such that $L_{NE}(\alpha,r_\alpha,\ell_\alpha) = L_{NE}(\beta,r_\beta,\ell_\beta)$ if and only if $\ValC(A) = \emptyset$. In contrast to the proofs of Theorem \ref{theorem:patlang-len-erasing-equivalence-undecidable} and Theorem \ref{theorem:patlang-len-nonerasing-inclusion-tf-undecidability}, this time, the length constraints $\ell_\beta$ use a disjunction of systems of linear diophantine inequalities while $(\alpha,r_\alpha,\ell_\alpha)$ has no regular or length constraints at all. Again, $(\alpha,r_\alpha,\ell_\alpha)$ serves as a pattern to obtain any word with a specific structure and $(\beta,r_\beta,\ell_\beta)$ serves as a pattern to obtain words with that specific structure that fulfill given properties. These properties are, again, defined by sub-patterns in $(\beta,r_\beta,\ell_\beta)$ that are called predicates. The regular and length constraints $r_\beta$ and $\ell_\beta$ can be set in such a way that always a single predicate has to be chosen, while the variables outside of these predicates have to be substituted in a very fixed way. This results in words that have equal prefixes and suffixes in comparison to all words obtained from $(\alpha,r_\alpha,\ell_\alpha)$, while the middle part is then checked with the predicates. These predicates are constructed similar to the ones in \cite{Nowotka2024}, just with slight adaptations, as now we have to use non-erasing substitutions instead of erasing ones. In the end, it is shown, that the predicates cover all words where the middle part is not an encoding of a valid computation in $\ValC(A)$. Hence, if $\ValC(A) = \emptyset$, we have $L_{NE}(\alpha,r_\alpha,\ell_\alpha) \subseteq L_{NE}(\beta,r_\beta,\ell_\beta)$. Additionally, by the way these patterns with regular and length constraints are constructed, it is shown that we have $ L_{NE}(\beta,r_\beta,\ell_\beta) \subseteq L_{NE}(\alpha,r_\alpha,\ell_\alpha)$, similar to the idea of the proof of Theorem \ref{theorem:patlang-len-erasing-equivalence-undecidable}. Hence, we have $L_{NE}(\alpha,r_\alpha,\ell_\alpha) = L_{NE}(\eta,r_\beta,\ell_\beta)$ if and only if $\ValC(A) = \emptyset$. The main body shows this result for binary alphabets $\Sigma$ with $|\Sigma| = 2$. To extend this to arbitrary alphabet sizes $\Sigma$ with $|\Sigma| \geq 3$, a similar approach as in \cite{Nowotka2024} can be applied. It can be shown that the binary case of the equivalence problem for non-erasing pattern languages with regular- and length constraints can be reduced to the same problem over all other alphabet sizes that are larger. As the regular constraints $r_\alpha$ and $r_\beta$ allow for the effective restriction of the alphabets used of each variable, no matter which alphabet size is used for the patterns, the predicate checks can essentially be forced to be done over binary words. 

The construction of the proof for Theorem \ref{theorem:patlang-reglen-nonerasing-equivalence-undecidability} uses patterns that consist partly of terminal-symbols. Using Corollary \ref{corollary:patlan-reglen-tf-same-as-general}, however, we know that we can always find terminal-free patterns that produce the exact same languages. Hence, the result also follows immediately for terminal-free patterns.

This concludes all results regarding decision problems for pattern languages with regular- and length constraints. Notice that by these results all main decision problems (i.e., membership, inclusion, and equivalence) for this class of pattern languages have an answer (summarized in Table \ref{tab:complexity-full} at the end of the final discussion).\label{section:patlang-reglen}

%%%%%%%%%%%%%%%%%%%%%%%%%%
% ##### CONCLUSION ##### %
%%%%%%%%%%%%%%%%%%%%%%%%%%
\section{Further Discussion}
\label{section:conclusion}
In this paper, we extended the research regarding decision problems on pattern languages. To get closer to an answer for the main open questions, e.g., the equivalence problem for erasing pattern languages or the larger alphabet cases of the terminal-free inclusion problem for non-erasing pattern languages, extending the research from \cite{Nowotka2024}, we now introduced length constraints as an alternative approach. With the results provided in this paper, we see that nearly all problems regarding pattern langauges with length constraints, except membership, are indeed undecidable (in particular the open cases for patterns without constraints). As mentioned at the end of Section \ref{section:patlang-len}, similar to \cite{Nowotka2024}, the decidability of the equivalence problem for non-erasing pattern languages with length constraints remains open.

By the help of some core ideas from \cite{Jiang1995,Freydenberger2010,BREMER201215,saarela:LIPIcs.ICALP.2020.140}, adapted constructions using length (and regular) constraints could be constructed to obtain these results. Regarding our first main result, Theorem \ref{theorem:patlang-len-erasing-equivalence-undecidable}, similar to \cite{Nowotka2024}, the main novelty for the proof is found in the way the constructed checker pattern $(\beta,\ell_\beta)$ does not produce \emph{too much} anymore, while still being able to handle everything from $(\alpha,\ell_\alpha)$ in a controlled manner. Length constraints helped to reduce the overhead in $(\beta,\ell_\beta)$ to enable the approaches from the inclusion problem proof to be used for the equivalence problem. Now, due to such a construction being shown to actually work for two different constraint types, e.g., regular and length constraints respectively, finding a way to adapt the existing proofs for pattern languages without any constraints in such a way that the same can be achieved here, poses a final challenge. If this can actually be done, the main open problem for pattern languages would be settled. So far, we do not know how likely such a construction is to be found. In addition, the use of length constraints also helped to achieve undecidability for decidable problems regarding patterns without constraints, i.e., the terminal-free inclusion and equivalence problems for erasing pattern languages. Hence, the expressiveness obtained using these constraints and, thus, needed to obtain undecidability of the equivalence problem for erasing pattern languages using the approach posed in this paper, might actually be unfeasible. Regarding the terminal-free inclusion problem in the non-erasing case, due to the undecidability of the binary case having already been shown for patterns without any constraints \cite{saarela:LIPIcs.ICALP.2020.140}, it is not unlikely that undecidability for larger alphabets can be achieved. Using the approach proposed in the current paper, we would need to somehow be able to restrict substitutions of some variables to single letters and distinct symbols in one case, and enable certain wildcards in any other case, similar to \cite{saarela:LIPIcs.ICALP.2020.140}, where a property of binary words is used to \emph{flip a switch}, determining where a word obtained from $\alpha$ has to be obtained in $\beta$. So far, this remains open.

Regarding the equivalence problem for non-erasing pattern languages for patterns with length constraints, we observe that this problem must be harder than the trivially decidable variant for non-erasing pattern languages without any constraints. This results from the fact that there are several NP-hard problems that are expressible solely by unions of linear diophantine equations. Some of which are, e.g., the subst sum problem, the knapsack problem, or integer programming~\cite{Karp1972}. Just to give a rough idea, we can easily embed, say, a positive integer subset sum instance $(S,T)$ with $S = \{S_1, ..., S_n\}$, for $S_i\in\N$ with $i\in[n]$, and $T\in\N$ to an instance of the equivalence problem for non-erasing pattern languages with length constraints by setting the pattern $\alpha = x_1...x_n$, for variables $x_1,...,x_n\in X$, defining $\ell_\alpha$ by setting $(x_i = S_i+1) \lor (x_i = 1)$, for $i\in[n]$, and setting $\sum_{i\in[n]}x_i = T + n$. Then, we set $\beta = y$, for a variable $y\in X$, with $\ell_\beta$ being defined by $y = T + n$. Then, $L_{NE}(\beta,\ell_\beta)$ only has words of length $T+n$ and $L_{NE}(\alpha,\ell_\alpha)$ only contains some words if $(S,T)$ actually has a solution, as otherwise we cannot find a valid substitution for $\alpha$. If it contains words, they have length $T+n$. So, both languages are equal if and only if there exists a solution for $(S,T)$. Here, disjunctions allowed to embed the choice of setting any $x_i$ to either length $1$ or length $S_i+1$. Hence, we obtain the following.

\begin{proposition}
    Let $(\alpha,\ell_\alpha),(\beta,\ell_\beta)\in\PatLen$ and let $\Sigma$ be any alphabet with at least $|\Sigma| \geq 1$. Deciding whether we have $L_{NE}(\alpha,\ell_\alpha) = L_{NE}(\beta,\ell_\beta)$ is NP-hard, even when we are restricted to terminal-free patterns $\alpha$ and $\beta$.
\end{proposition}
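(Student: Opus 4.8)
The plan is to formalize the reduction from \textsc{Subset Sum} sketched in the discussion above into a polynomial-time many-one reduction to the equivalence problem for non-erasing pattern languages with length constraints, restricted to terminal-free patterns; since \textsc{Subset Sum} is NP-complete, this yields the claimed NP-hardness.

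First I fix a \textsc{Subset Sum} instance consisting of positive integers $S_1,\dots,S_n$ and a target $T$, where without loss of generality $1 \leq T \leq \sum_{i\in[n]} S_i$ (otherwise the instance is trivial). I build the terminal-free pattern $\alpha := x_1 x_2 \cdots x_n$ with $n$ pairwise distinct variables, equipped with the length constraint $\ell_\alpha$ consisting of the disjunction $(x_i = S_i+1) \lor (x_i = 1)$ for every $i\in[n]$ together with the equation $\sum_{i\in[n]} x_i = T+n$; and I build the terminal-free pattern $\beta := y$ with a single variable $y$ and length constraint $\ell_\beta$ given by $y = T+n$. Both objects lie in $\PatLen$, since every variable occurring in a constraint occurs in the corresponding pattern, and both are produced in polynomial time.

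The next step is to compute both languages. Since $n \geq 1$ we have $T+n \geq 1$, so $L_{NE}(\beta,\ell_\beta) = \Sigma^{T+n}$, which is nonempty even when $|\Sigma| = 1$. For $\alpha$, a non-erasing $\ell_\alpha$-valid substitution $h$ is exactly a choice of a set $I \subseteq [n]$ with $|h(x_i)| = S_i+1$ for $i \in I$ and $|h(x_i)| = 1$ for $i \notin I$, and for such $h$ the equation $\sum_{i\in[n]} |h(x_i)| = T+n$ is equivalent to $\sum_{i\in I} S_i = T$. Hence such a substitution exists if and only if the instance is positive; and when it is, given any $w \in \Sigma^{T+n}$ one may factor $w = w_1 \cdots w_n$ with $|w_i| = S_i+1$ for $i\in I$ and $|w_i| = 1$ otherwise (each $w_i$ nonempty), so that $h(x_i) := w_i$ gives $h(\alpha) = w$. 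Thus $L_{NE}(\alpha,\ell_\alpha) = \Sigma^{T+n}$ if the instance is positive and $L_{NE}(\alpha,\ell_\alpha) = \emptyset$ otherwise.

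Combining the two computations, $L_{NE}(\alpha,\ell_\alpha) = L_{NE}(\beta,\ell_\beta)$ holds if and only if the \textsc{Subset Sum} instance is positive: in the positive case both sides equal $\Sigma^{T+n}$, and in the negative case the left side is empty while the right side is not. This completes the reduction, and the terminal-free strengthening is automatic since $\alpha$ and $\beta$ contain no terminals. I do not anticipate a genuine obstacle here; the only points needing a little care are the bookkeeping observation that the disjunctive constraints pin each $|h(x_i)|$ into $\{1, S_i+1\}$, so that non-erasingness is respected for free, and a quick check that the argument degenerates correctly in the unary case where $\Sigma^{T+n}$ is a single word. I note in passing that this particular route genuinely uses disjunctions — the choice ``index $i$ contributes $S_i$ or not'' is naturally a disjunction — which is consistent with the contrast drawn elsewhere in the paper between constructions with and without disjunctions.
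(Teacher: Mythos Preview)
Your proposal is correct and follows essentially the same approach as the paper: the paper's proof is precisely the \textsc{Subset Sum} reduction you formalize, with the same patterns $\alpha = x_1\cdots x_n$ and $\beta = y$ and the same constraints $(x_i = S_i+1)\lor(x_i=1)$, $\sum_i x_i = T+n$, and $y = T+n$. Your write-up simply spells out in more detail the correctness argument (in particular the factorization of an arbitrary $w\in\Sigma^{T+n}$ and the observation that the unary case degenerates to a single word), but the underlying idea is identical.
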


Notice that the previous result depended on the fact that we allow disjunctions in the given length constraints. The following result shows that at least in the case of unary alphabets, where we basically restrict the problem to a number setting, we can obtain NP-hardness without using disjunctions by a reduction from the the $\mathtt{3SAT}$ problem. Due to length restrictions, the proof of Proposition~\ref{prop:pat-len-equiv-ne-hard} can be found in Appendix~\ref{section:proof4}.

\begin{proposition}\label{prop:pat-len-equiv-ne-hard}
    Let $(\alpha,\ell_\alpha),(\beta,\ell_\beta)\in\PatLen$ where $\ell_\alpha$ and $\ell_\beta$ each use no disjunctions. Deciding whether we have $L_{NE}(\alpha,\ell_\alpha) = L_{NE}(\beta,\ell_\beta)$ is NP-hard for alphabets $\Sigma$ with $|\Sigma| = 1$, even when we are restricted to terminal-free patterns $\alpha$ and $\beta$.
\end{proposition}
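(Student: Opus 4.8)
The plan is to reduce from $\mathtt{3SAT}$. Let $\varphi$ be a $\mathtt{3SAT}$ instance with variables $v_1,\dots,v_n$ and clauses $C_1,\dots,C_m$. Over the unary alphabet $\Sigma=\{\ta\}$, a word is determined entirely by its length, and for a terminal-free pattern $\gamma=y_1\cdots y_k$ under non-erasing substitution we have $L_{NE}(\gamma,\ell_\gamma)=\{\,\ta^N \mid N=\sum_{i}|h(y_i)|,\ h\in H_{\ell_\gamma},\ |h(y_i)|\ge 1\,\}$; that is, $L_{NE}(\gamma,\ell_\gamma)$ is exactly the set of $\ta^N$ for which the system $\ell_\gamma$ together with $y_i\ge 1$ for all $i$, plus the defining equation $N=\sum_i y_i$, has a solution in nonnegative integers. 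So the task is to encode ``$\varphi$ is satisfiable'' as the equality of two such length-sets.

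The key steps, in order, are as follows. First, I would introduce for each propositional variable $v_j$ a pair of pattern-variables $p_j,q_j$ intended to carry the truth value: using only non-disjunctive inequalities $p_j+q_j\le 3$ together with the implicit $p_j,q_j\ge 1$ one forces $(|h(p_j)|,|h(q_j)|)\in\{(1,1),(1,2),(2,1)\}$; an assignment is extracted by reading ``$v_j$ true'' as $|h(p_j)|=2$ and ``$v_j$ false'' as $|h(q_j)|=2$ (the case $(1,1)$ will be excluded, see below). Second, for each clause $C_i$ I would add a ``slack'' variable $s_i$ and an inequality that is satisfiable (given $s_i\ge 1$) precisely when at least one literal of $C_i$ is set to its satisfying value; e.g. if $C_i=(v_a\vee\neg v_b\vee v_c)$ one uses something of the form $s_i + (3-p_a) + q_b + (3-p_c) \le K$ for a suitable constant $K$, chosen so the inequality has a solution in $s_i\ge 1$ iff one of $p_a=2$, $q_b=2$, $p_c=2$ holds. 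Third, I would pin down the total length: let $\alpha$ be the concatenation of all the $p_j,q_j,s_i$ and add the single equation $\sum_j(p_j+q_j)+\sum_i s_i = L$ for the appropriate constant $L$, so that $L_{NE}(\alpha,\ell_\alpha)\subseteq\{\ta^{L'}\}$ is a singleton-or-empty set at one fixed length $L'$ (one may need one further auxiliary variable so that the ``all clauses satisfied'' configurations all have the \emph{same} total length, padding by slack). Finally, set $\beta=y$ a single variable with $\ell_\beta$ the equation $y=L'$, so $L_{NE}(\beta,\ell_\beta)=\{\ta^{L'}\}$; then $L_{NE}(\alpha,\ell_\alpha)=L_{NE}(\beta,\ell_\beta)$ iff $\varphi$ is satisfiable, and this equivalence is what gives NP-hardness. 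Throughout, the $(1,1)$ ``undefined'' case for $(p_j,q_j)$ is handled by forcing $p_j+q_j\ge 3$ as well, so exactly one of the two literal-values is selected.

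The main obstacle I expect is the \emph{rigidity forced by the absence of disjunctions combined with the non-erasing requirement and the need for a single target length}. Without disjunctions, every inequality must hold simultaneously, so encoding a clause as ``$\ell_1 \vee \ell_2 \vee \ell_3$'' is not directly available; the trick above pushes the disjunction into the existence of a nonnegative slack value, but one must verify carefully that (a) the slack ranges are wide enough that an unsatisfied clause genuinely \emph{blocks} all solutions rather than merely costing length, and (b) the bookkeeping makes \emph{every} satisfying assignment produce a word of exactly one and the same length $L'$ — otherwise $L_{NE}(\alpha,\ell_\alpha)$ is a set of several lengths and cannot equal the singleton $\{\ta^{L'}\}$. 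Reconciling these two demands — a per-clause free slack variable, yet a globally fixed total — is the delicate part, and I would resolve it by reserving one ``global slack'' variable whose length absorbs exactly the combined surplus of all clause slacks, pinned by the total-length equation, so that consistency of the whole system is equivalent to simultaneous satisfiability of all clauses at the unique length $L'$. Checking that this reduction is polynomial-time and that the constants $K,L,L'$ have polynomial size is then routine.
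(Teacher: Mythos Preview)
Your proposal is essentially the paper's proof: a reduction from $\mathtt{3SAT}$ using a pair of pattern variables per propositional variable constrained to sum to $3$, a per-clause slack variable, a total-length equation, and $\beta$ a single variable pinned to the target length. The one place you diverge is precisely the spot you flag as the ``main obstacle'': you encode each clause by an \emph{inequality} $s_i+\cdots\le K$, which lets $s_i$ range over several values depending on how many literals are satisfied, and then you propose a global slack to re-absorb this variability. The paper avoids this entirely by using a per-clause \emph{equality} instead: writing the literal-variables of clause $i$ directly (each is some $u_j$ or $v_j$, taking value $1$ or $2$) and imposing $f(X_{i,1})+f(X_{i,2})+f(X_{i,3})+y_i=7$ together with $y_i\le 3$. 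The equality fixes every clause block to length exactly $7$, so the total length is automatically $7n$ with no global slack needed, while $1\le y_i\le 3$ is feasible iff the three literal-values sum to at least $4$, i.e.\ iff at least one literal is true. Your route would work once the global-slack bookkeeping is written out, but the per-clause equality makes the whole delicate part you anticipated disappear.
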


For all other problems, we obtained an answer and notice that the most prominent open problems, i.e., the equivalence problem for erasing pattern languages and the inclusion problem for terminal-free non-erasing pattern languages for alphabets $\Sigma$ with $|\Sigma|\geq 3$ are indeed undecidable for patterns with length constraints. For patterns with regular and length constraints, we have even seen the undecidability of the equivalence problem of non-erasing pattern languages, concluding all open problems there. A final overview of the current state of results for decision problems on patterns over various constraints can be found in Table \ref{tab:complexity-full}. We propose the following open question for which we have no definite conjecture so far.

\begin{question}
    Given $(\alpha,\ell_\alpha),(\beta,\ell_\beta)\in\PatLen$, is it generally decidable to answer whether $L_{NE}(\alpha,\ell_\alpha) = L_{NE}(\beta,\ell_\beta)$?
\end{question}

\begin{table}[h!]
    \centering
    \setlength\tabcolsep{5.5pt} % ggf. anpassen
    \renewcommand{\arraystretch}{1.2}
    \begin{tabular}{ | C | C C | C C | C C | C C | }
        \hline
         & \multicolumn{2}{c|}{No Constraints} 
         & \multicolumn{2}{c|}{Len-Constraints} 
         & \multicolumn{2}{c|}{Reg-Constraints} 
         & \multicolumn{2}{c|}{RegLen-Constraints} \\
        \hline
         & Gen. & T.F.
         & Gen. & T.F. 
         & Gen. & T.F. 
         & Gen. & T.F. \\
        \hline \hline
        E ($\in$) & NPC & NPC & NPC & NPC& NPC & NPC & NPC & NPC \\
        \hline
        E ($\subseteq$)  & UD & NPC & UD & UD & UD & UD & UD & UD \\
        \hline
        E ($=$) & \textbf{Open} & NPC & UD & UD & UD & UD & UD & UD \\
        \hline
        NE ($\in$) & NPC & NPC & NPC & NPC & NPC & NPC & NPC & NPC \\
        \hline
        NE ($\subseteq$)  & UD & UD\tablefootnote{Undecidable in the binary case by \cite{saarela:LIPIcs.ICALP.2020.140}, open for larger alphabets.} & UD & UD & UD & UD & UD & UD \\
        \hline
        NE ($=$) & P & P & \textbf{Open} & \textbf{Open}\tablefootnote{Gen. and T.F. NP-hard as described above. An upper bound is unknown (may as well be undecidable).} & \textbf{Open} & \textbf{Open}\tablefootnote{Depending on regular languages representation, at least PSPACE-hard \cite{Nowotka2024}.} & UD & UD \\
        \hline
    \end{tabular}
    \caption{Summary of the current state of results regarding the main decision problems for pattern languages with different constraints (NPC = NP-Complete, UD = Undecidable, E = Erasing, NE = Non-Erasing, T.F. = Terminal Free, Gen. = General, ($\in$) means membership problem, ($\subseteq$) means inclusion problem, and ($=$) means equivalence problem.). The results for No Constraints and Regular Constraints are based on previous research mentioned in the introduction. The results for Length Constraints as well as Regular and Length Constraints summarize the results of this paper.}
    \label{tab:complexity-full}
\end{table}

%%%%%%%%%%%%%%%%%%%%%%%%%%%%
% ##### BIBLIOGRAPHY ##### %
%%%%%%%%%%%%%%%%%%%%%%%%%%%%
%% Please use bibtex, 
\bibliography{patlang-len-constraints-main}

\begin{thebibliography}{10}

\bibitem{DBLP:journals/jcss/Angluin80}
Dana Angluin.
\newblock Finding patterns common to a set of strings.
\newblock {\em J. Comput. Syst. Sci.}, 21(1):46--62, 1980.

\bibitem{10.1145/2389241.2389250}
Pablo Barcel\'{o}, Leonid Libkin, Anthony~W. Lin, and Peter~T. Wood.
\newblock Expressive languages for path queries over graph-structured data.
\newblock {\em ACM Trans. Database Syst.}, 37(4), December 2012.

\bibitem{BREMER201215}
Joachim Bremer and Dominik~D. Freydenberger.
\newblock Inclusion problems for patterns with a bounded number of variables.
\newblock {\em Information and Computation}, 220-221:15--43, 2012.

\bibitem{Day2018}
Joel~D. Day, Pamela Fleischmann, Florin Manea, and Dirk Nowotka.
\newblock {Local Patterns}.
\newblock In Satya Lokam and R.~Ramanujam, editors, {\em FSTTCS 2017}, volume~93 of {\em LIPIcs}, pages 24:1--24:14, Dagstuhl, Germany, 2018. Schloss Dagstuhl -- Leibniz-Zentrum f{\"u}r Informatik.

\bibitem{DBLP:journals/ipl/EhrenfeuchtR79a}
Andrzej Ehrenfeucht and Grzegorz Rozenberg.
\newblock Finding a homomorphism between two words is {NP}-complete.
\newblock {\em Inf. Process. Lett.}, 9(2):86--88, 1979.

\bibitem{FERNAU201844}
Henning Fernau, Florin Manea, Robert Mercaş, and Markus~L. Schmid.
\newblock Revisiting {S}hinohara's algorithm for computing descriptive patterns.
\newblock {\em TCS}, 733:44--54, 2018.
\newblock Special Issue on Learning Theory and Complexity.

\bibitem{Fleischmann2023}
Pamela Fleischmann, Sungmin Kim, Tore Ko{\ss}, Florin Manea, Dirk Nowotka, Stefan Siemer, and Max Wiedenh{\"o}ft.
\newblock Matching patterns with variables under {S}imon's congruence.
\newblock In Olivier Bournez, Enrico Formenti, and Igor Potapov, editors, {\em Reachability Problems}, pages 155--170, Cham, 2023. Springer Nature Switzerland.

\bibitem{Freydenberger2018}
Dominik~D. Freydenberger.
\newblock A logic for document spanners.
\newblock {\em Theory of Computing Systems}, 63(7):1679–1754, September 2018.

\bibitem{Freydenberger2017}
Dominik~D. Freydenberger and Mario Holldack.
\newblock Document spanners: From expressive power to decision problems.
\newblock {\em Theory of Computing Systems}, 62(4):854–898, May 2017.

\bibitem{FreydenbergerP21}
Dominik~D. Freydenberger and Liat Peterfreund.
\newblock The theory of concatenation over finite models.
\newblock In Nikhil Bansal, Emanuela Merelli, and James Worrell, editors, {\em {ICALP} 2021, Proceedings}, volume 198 of {\em LIPIcs}, pages 130:1--130:17. Schloss Dagstuhl - Leibniz-Zentrum f{\"{u}}r Informatik, 2021.

\bibitem{Freydenberger2010}
Dominik~D. Freydenberger and D.~Reidenbach.
\newblock Bad news on decision problems for patterns.
\newblock {\em Information and Computation}, 208(1):83--96, January 2010.

\bibitem{FREYDENBERGER20191}
Dominik~D. Freydenberger and Markus~L. Schmid.
\newblock Deterministic regular expressions with back-references.
\newblock {\em Journal of Computer and System Sciences}, 105:1--39, 2019.

\bibitem{FREYDENBERGER2013892}
Dominik~D. Freydenberger and Nicole Schweikardt.
\newblock Expressiveness and static analysis of extended conjunctive regular path queries.
\newblock {\em Journal of Computer and System Sciences}, 79(6):892--909, 2013.
\newblock JCSS Foundations of Data Management.

\bibitem{gawrychowski_et_al:LIPIcs.MFCS.2021.48}
Pawe{\l} Gawrychowski, Florin Manea, and Stefan Siemer.
\newblock {Matching Patterns with Variables Under Hamming Distance}.
\newblock In Filippo Bonchi and Simon~J. Puglisi, editors, {\em MFCS 2021}, volume 202 of {\em Leibniz International Proceedings in Informatics (LIPIcs)}, pages 48:1--48:24, Dagstuhl, Germany, 2021. Schloss Dagstuhl -- Leibniz-Zentrum f{\"u}r Informatik.

\bibitem{Manea2022}
Pawe{\l} Gawrychowski, Florin Manea, and Stefan Siemer.
\newblock Matching patterns with variables under edit distance.
\newblock In Diego Arroyuelo and Barbara Poblete, editors, {\em String Processing and Information Retrieval}, pages 275--289, Cham, 2022. Springer International Publishing.

\bibitem{Geilke2011}
Michael Geilke and Sandra Zilles.
\newblock Learning relational patterns.
\newblock In Jyrki Kivinen, Csaba Szepesv{\'a}ri, Esko Ukkonen, and Thomas Zeugmann, editors, {\em Algorithmic Learning Theory}, pages 84--98, Berlin, Heidelberg, 2011. Springer Berlin Heidelberg.

\bibitem{Iberra1978}
Oscar~H. Ibarra.
\newblock Reversal-bounded multicounter machines and their decision problems.
\newblock {\em J. ACM}, 25(1):116–133, jan 1978.

\bibitem{Jiang1994}
Tao Jiang, Efim Kinber, Arto Salomaa, Kai Salomaa, and Sheng Yu.
\newblock Pattern languages with and without erasing.
\newblock {\em International Journal of Computer Mathematics}, 50(3-4):147--163, 1994.

\bibitem{Jiang1995}
Tao Jiang, Arto Salomaa, Kai Salomaa, and Sheng Yu.
\newblock Decision problems for patterns.
\newblock {\em Journal of Computer and System Sciences}, 50(1):53--63, February 1995.

\bibitem{Karp1972}
Richard~M. Karp.
\newblock {\em Reducibility among Combinatorial Problems}, pages 85--103.
\newblock Springer US, Boston, MA, 1972.

\bibitem{Koshiba1995}
Takeshi Koshiba.
\newblock Typed pattern languages and their learnability.
\newblock In Paul Vit{\'a}nyi, editor, {\em Computational Learning Theory}, pages 367--379, Berlin, Heidelberg, 1995. Springer Berlin Heidelberg.

\bibitem{Lin2018QuadraticWE}
Anthony~Widjaja Lin and Rupak Majumdar.
\newblock Quadratic word equations with length constraints, counter systems, and presburger arithmetic with divisibility.
\newblock In {\em Automated Technology for Verification and Analysis}, 2018.

\bibitem{lothaire1997}
M.~Lothaire.
\newblock {\em Combinatorics on Words}.
\newblock Cambridge Mathematical Library. Cambridge University Press, 2 edition, 1997.

\bibitem{Minsky1961}
Marvin~L. Minsky.
\newblock Recursive unsolvability of {P}ost's problem of "tag" and other topics in theory of {T}uring machines.
\newblock {\em Annals of Mathematics}, 74(3):437--455, 1961.

\bibitem{Nowotka2024}
Dirk Nowotka and Max Wiedenh{\"o}ft.
\newblock The equivalence problem of e-pattern languages with regular constraints is undecidable.
\newblock In Szil{\'a}rd~Zsolt Fazekas, editor, {\em Implementation and Application of Automata}, pages 276--288, Cham, 2024. Springer Nature Switzerland.

\bibitem{Ohlebusch1996}
Enno Ohlebusch and Esko Ukkonen.
\newblock On the equivalence problem for e-pattern languages.
\newblock In {\em MFCS 1996}, pages 457--468. Springer Berlin Heidelberg, 1996.

\bibitem{Reidenbach2004-2}
Daniel Reidenbach.
\newblock On the equivalence problem for e-pattern languages over small alphabets.
\newblock In {\em DLT}, pages 368--380. Springer Berlin Heidelberg, 2004.

\bibitem{Reidenbach2004-1}
Daniel Reidenbach.
\newblock On the learnability of e-pattern languages over small alphabets.
\newblock In {\em Learning Theory}, pages 140--154. Springer Berlin Heidelberg, 2004.

\bibitem{Reidenbach2007}
Daniel Reidenbach.
\newblock An examination of {O}hlebusch and {U}kkonen's conjecture on the equivalence problem for e-pattern languages.
\newblock {\em J. Autom. Lang. Comb.}, 12(3):407–426, jan 2007.

\bibitem{saarela:LIPIcs.ICALP.2020.140}
Aleksi Saarela.
\newblock {Hardness Results for Constant-Free Pattern Languages and Word Equations}.
\newblock In Artur Czumaj, Anuj Dawar, and Emanuela Merelli, editors, {\em 47th International Colloquium on Automata, Languages, and Programming (ICALP 2020)}, volume 168 of {\em Leibniz International Proceedings in Informatics (LIPIcs)}, pages 140:1--140:15, Dagstuhl, Germany, 2020. Schloss Dagstuhl -- Leibniz-Zentrum f{\"u}r Informatik.

\bibitem{Schmid2012}
Markus~L. Schmid.
\newblock {\em {On the membership problem for pattern languages and related topics}}.
\newblock PhD thesis, Loughborough University, 1 2012.
\newblock URL: \url{https://repository.lboro.ac.uk/articles/thesis/On_the_membership_problem_for_pattern_languages_and_related_topics/9407606}.

\bibitem{SchmidSchweikardtPODS2022}
Markus~L. Schmid and Nicole Schweikardt.
\newblock Document spanners - {A} brief overview of concepts, results, and recent developments.
\newblock In {\em {PODS} '22: International Conference on Management of Data}, pages 139--150. {ACM}, 2022.

\bibitem{Shinohara1983}
Takeshi Shinohara.
\newblock {\em Polynomial time inference of extended regular pattern languages}, page 115–127.
\newblock Springer Berlin Heidelberg, 1983.

\bibitem{Shinohara1995}
Takeshi Shinohara and Setsuo Arikawa.
\newblock {\em Pattern inference}, pages 259--291.
\newblock Springer Berlin Heidelberg, 1995.

\end{thebibliography}

%%%%%%%%%%%%%%%%%%%%%%%%
% ##### APPENDIX ##### %
%%%%%%%%%%%%%%%%%%%%%%%%
\appendix
\newpage

\section{Definition of Nondeterministic 2-Counter Automata without Input}
\label{section:defAut}
A \emph{nondeterministic 2-counter automaton without input} (see e.g. \cite{Iberra1978}) is a 4-tuple $A = (Q,\delta,q_0,F)$ which consists of a set of states $Q$, a transition function $\delta : Q \times \{0,1\}^2 \rightarrow \mathcal{P}(Q \times \{\-1,0,+1\}^2)$, an initial state $q_0\in Q$, and a set of accepting states $F \subseteq Q$. A \emph{configuration} of $A$ is defined as a triple $(q,m_1,m_2)\in Q\times\N\times\N$ in which $q$ indicates the current state and $m_1$ and $m_2$ indicate the contents of the first and second counter. We define the relation $\vdash_A$ on $Q\times\N\times\N$ by $\delta$ as follows. For two configurations $(p,m_1,m_2)$ and $(q,n_1,n_2)$ we say that $(p,m_1,m_2) \vdash_A (q,n_1,n_2)$ if and only if there exist $c_1,c_2\in\{0,1\}$ and $r_1,r_2\in\{-1,0,+1\}$ such that
\begin{enumerate}
	\item if $m_i = 0$ then $c_i = 0$, otherwise if $m_i > 0$, then $c_i = 1$, for $i\in\{1,2\}$,
	\item $n_i = m_i + r_i$ for $i\in\{1,2\}$,
	\item $(q,r_1,r_2)\in \delta(p,c_1,c_2)$, and
	\item we assume if $c_i = 0$ then $r_i \neq -1$ for $i\in\{1,2\}$.
\end{enumerate}
Essentially, the machine checks in every state whether the counters equal $0$ and then changes the value of each counter by at most one per transition before entering a new state. A \emph{computation} is a sequence of configurations. An \emph{accepting computation} of $A$ is a sequence $C_1,...,C_n\in (Q\times\N\times\N)^n$ with $C_1 = (q_0,0,0)$, $C_i \vdash_A C_{i+1}$ for all $i\in\{1,...,n-1\}$, and $C_n\in F\times\N\times\N$ for some $n\in\N$ with $n > 0$.

We \emph{encode} configurations of $A$ by assuming $Q = \{q_0,...,q_e\}$ for some $e\in\N$ and defining a function $\enc$ $:$ $Q\times\N\times\N \rightarrow \{0,\#\}^*$ by 
$$ \enc(q_i,m_1,m_2) := 0^{x+i}\#0^{c_1 + y_2\cdot m_1}\#0^{c_2 + y_2\cdot m_2} $$
for some numbers $x,c_1,y_2,c_1,y_2\in\N$. The values for these numbers depend on the construction of the respective proofs and are not specified here. Encodings of this kind are used to prove Theorem \ref{theorem:patlang-len-erasing-equivalence-undecidable} and Theorem \ref{theorem:patlang-reglen-nonerasing-equivalence-undecidability}. This is extended to encodings of computations by defining for every $n\geq 1$ and every sequence $C_1,...,C_n\in Q\times\N\times\N$
$$ \enc(C_1,...,C_n) := \#\#\ \enc(C_1)\ \#\#\ ...\ \#\#\ \enc(C_n)\ \#\#. $$

As mentioned before, for the proof of Theorem \ref{theorem:patlang-len-nonerasing-inclusion-tf-undecidability}, we use the notion of a specific universal Turing machine $U$ as it is used in \cite{BREMER201215}. Due to the lack of specifics needed to follow the proof, we omit its definition at this point and refer to \cite{BREMER201215} for details. Assume, however, that the encoding function $\enc$ is also defined for computations and configurations of $U$, resulting in words of other structure, though.

If $A$ is either a nondeterministic 2-counter automaton without input or the universal Turing machine $U$ mentioned before, define the set of encodings of accepting computations 
$$\mathtt{ValC}(A) := \{\enc(C_1,...,C_n)\ |\ C_1,...,C_n \text{ is an accepting computation of } A\}$$
and let $\mathtt{InvalC}(A) = \{0,\#\}^*\setminus \mathtt{ValC}(A)$. 
The emptiness problem for deterministic 2-counter-automata is undecidable (cf. e.g. \cite{Iberra1978,Minsky1961}), thus it is also undecidable whether a nondeterministic 2-counter automaton without input has an accepting computation \cite{Freydenberger2010, Jiang1995}. That the emptiness problem for universal Turing machines is undecidable is a known fact.
\newpage

\section{The Undecidability of the Equivalence Problem for Erasing Pattern Languages with Length Constraints}
\label{section:proof1}
This section is dedicated to show Theorem \ref{theorem:patlang-len-erasing-equivalence-undecidable}. The idea is roughly based on the idea for the proof of undecidability of the inclusion problem for erasing pattern languages without any constraints given in \cite{Jiang1995} or \cite{Freydenberger2010} and extended in a such way that it works for the equivalence problem for erasing pattern languages with length constraints. 

The structure of this section is as follows: We continue with an intuitive explanation of the proof \cite{Freydenberger2010}, as our construction uses certain key takeaways and results from that proof. After that, we provide a formal definition of our adapted construction using length constraints. Then, we show a series of results obtained by that construction. Finally, these results are put together to show the correctness of the reduction in a concise manner. In the end, a short note on how to extend this to any alphabet size larger or equal to $3$ is provided, concluding this section.

\textbf{Summary of the result in \cite{Freydenberger2010}:} In \cite{Freydenberger2010}, the authors reduce the question of whether some non-deterministic 2-counter automaton without input $A$ has some accepting computation to the question whether the erasing pattern languages of one pattern is a subset of the erasing pattern languages of some other pattern. For that, in the binary case, given some non-deterministic 2-counter automaton without input $A$, two patterns $\alpha$ and $\beta$ are constructed such that $L_E(\alpha) \subseteq L_E(\beta)$ if and only if $A$ has no accepting computation. For that, the pattern $\alpha$ is constructed in such a way that all words in $L_E(\alpha)$ are of the form $vv\ \#^4 v\ x\ v\ x\ v\ \#^4\ vuv$ with $v = 0\#^30$, $u = 0\#^20$, and $x,y\in\Sigma^*$, simply by setting
$$\alpha = vv\ \#^4\ v\ x\ v\ y\ x\ \#^4\ vuv$$
for variables $x,y\in X$.
The pattern $\beta$ is then defined by
$$ \beta = (x_1)^2 ... (x_\mu)^2\ \#^4\ \hat{\beta}_1 ... \hat{\beta}_\mu\ \#^4\ \ddot{\beta}_1 ... \ddot{\beta}_\mu$$
for some constant $\mu\in\N$, variables $x_1,...,x_\mu\in X$, and terminal free patterns $\hat{\beta}_1,...,\hat{\beta_\mu},\ddot{\beta}_1...\ddot{\beta}_\mu$ such that for all $i\in[\mu]$ we have $\hat{\beta}_i = x_i\ \gamma_i\ x_i\ \delta_i\ x_i$ and $\ddot{\beta}_i = x_i\ \eta_i\ x_i$ for terminal free patterns $\gamma_i,\delta_i\in X^*$ and $\eta_i = z_i(\hat{z}_i)z_i$ for variables $z_i,\hat{z_i}\in X$. It is important to mention that all variables occurring in $\gamma_i$, $\delta_i$ and $\eta_i$ are only shared between these three patterns and occur nowhere else, for each $i\in[\mu]$. For each $i\in[\mu]$, the triple $(\gamma_i,\delta_i,\eta_i)$ is called a \emph{predicate} $\pi_i$ which is \emph{satisfied} if and only if for any substitution $h$ on $\alpha$ we have that there exists some homomorphism $\tau$ with $\tau(\gamma_i) = h(x)$, $\tau(\delta_i) = h(y)$, and $\tau(\eta_i) = u = 0\#^20$. Notice that, from the sentence before, we know that each predicate basically works on its own set of variables, disjoint from any other predicate and the rest of the pattern. It is shown that, for any substitution $h$, in order to obtain $h(\alpha)\in L_E(\beta)$, $h$ must satisfy some predicate $\pi_i$ for $i\in[\mu]$. The patterns $\delta_i$ and $\gamma_i$ are then designed in such a way that $h$ can only satisfy some predicate if we do not have that, first, $h(x)$ is the encoding of a valid computation on $A$, and, second, $h(y)\in\{0\}^{\geq n}$ for $n$ being the sum of the longest three separate factors of $0$'s in $h(x)$. By that, it is obtained that $L_E(\alpha)\subseteq L_E(\beta)$ if and only if $A$ has no valid computation as otherwise such a substitution $h$ satisfying no predicate can always be found. 

As a key takeaway, note in particular that any substitution $h$ satisfying some predicate $\pi_i$ results in exactly some substitution $h'$ for $\beta$ with $h(\alpha) = h'(\beta)$ such that $x_i$ is substituted to $v$, $\eta_i$ substituted to $u$ (resulting in $z_i$ and $\hat{z_i}$ being substituted to $0$ and $\#$ respectively), $\hat{\beta}_i$ being substituted to some nonempty word $vx'vy'v$ for $x',y'\in\Sigma^*$ and all other parts of $\beta$ being substituted to the empty word. In order for the predicates $(\gamma_i,\delta_i,\eta_i)$ to work, it is necessary that $\eta_i$ contains two variables $z_i$ and $\hat{z}_i$ which are substituted to the letters $0$ and $\#$. These are used to construct the structural properties forced in $\gamma_i$ and $\delta_i$ to obtain words that have factors that cannot be part of any encoding of some valid computation. We will make use of that to be able to reuse the same predicates $\pi_1$ to $\pi_\mu$ in the construction of this proof while also adding new ones for new cases that arise.

The problem regarding the equivalence problem is the following. The pattern $\beta$ results in $L_E(\beta)$ having many more structurally different words than the ones in $L_E(\alpha)$ (e.g., by substituting $x_1$ to $x_\mu$ all with different and very long words). Hence, in the case of $A$ having no valid computation, $L_E(\beta)$ is actually always a proper superset of $L_E(\alpha)$. In contrast to the proof in \cite{Freydenberger2010}, we can make use of length constraints to restrict that. The main idea of the proof of Theorem \ref{theorem:patlang-len-erasing-equivalence-undecidable} is that we can construct two patterns with length constraints $(\alpha,\ell_\alpha)$ and $(\beta,\ell_\beta)$ such that $L_E(\beta,\ell_\beta)$ only contains words of a structure that can also be obtained in $L_E(\alpha,\ell_\alpha)$, effectively restricting the language $L_E(\beta,\ell_\beta)$ significantly. While the main idea of this proof stays the same compared to, e.g., \cite{Jiang1995} or \cite{Freydenberger2010}, we now obtain that $L_E(\alpha,\ell_\alpha) = L_E(\beta,\ell_\beta)$ if and only if $A$ has no accepting computation, otherwise we now have $L_E(\alpha,\ell_\alpha) \supset L_E(\beta,\ell_\beta)$. Whether such a restricted form of $\beta$ together with a fitting $\alpha$ can be constructed without any additional constraints, remains an open problem, but would show the undecidability of the equivalence problem of pattern languages without any constraints. In order for that to work with length constraints, the patterns were adapted significantly. So much so that this proof even works with terminal-free patterns, showing undecidability of a case that is decidable for patterns without any constraints. We continue, now, with the reduction of the emptiness problem for non-deterministic 2-counter automata without input to the equivalence problem of erasing pattern languages with length constraints. From now on, $\alpha$ and $\beta$ are redefined by the construction below and do not refer to $\alpha$ and $\beta$ from the the explanations above if not mentioned otherwise.

\textbf{Our Construction:} We continue with the construction to show Theorem \ref{theorem:patlang-len-erasing-equivalence-undecidable}. Let $A = (Q,\delta,q_0,F)$ be some non-deterministic 2-counter automaton without input. We construct two patterns with length constraints $(\alpha,\ell_\alpha),(\beta,\ell_\beta)\in\PatLen$ such that $L_E(\alpha,\ell_\alpha) = L_E(\beta,\ell_\beta)$ if and only if $\mathtt{ValC}(A) = \emptyset$. We start with the binary case and assume that $\Sigma = \{0,\#\}$. First, we construct $(\alpha,\ell_\alpha)$. We set the pattern $\alpha$ to
$$ \alpha = y_1\ x_v\alpha_1x_v\alpha_2x_v\ y_2\ x_vzz'z'zx_v $$
for new and independent variables $y_1,y_2,x_v,\alpha_1,\alpha_2,z,z'\in X$. We define the following length constraints $\ell_\alpha$:
\begin{itemize}
    \item $x_v = 5$
    \item $z = 1$
    \item $z' = 1$
\end{itemize}
Notice that any word in $L_E(\alpha,\ell_\alpha)$ has a suffix of fixed length with the form $v u v$ where $v\in\Sigma^5$ and $u\in\{0000, \#\#\#\#, 0\#\#0, \#00\#\}$. In comparison to the construction in \cite{Freydenberger2010} that made use of a specific prefix and a specific suffix to enforce predicate choice, only this suffix suffices for this purpose in this construction.

Next, we construct $(\beta,\ell_\beta)$. We set the pattern $\beta$ to
$$ \beta = y_1'\ \hat{\beta}_1\ \hat{\beta}_2\ \cdots\ \hat{\beta}_{\mu+1}\ y_2'\ \ddot{\beta}_1\ \ddot{\beta}_2\ \cdots\ \ddot{\beta}_{\mu+1}$$
for new and independent variables $y_1',y_2'\in X$ and terminal-free patterns $\hat{\beta}_i,\ddot{\beta}_i\in X^*$ for $i\in[\mu+1]$ with
\begin{itemize}
    \item $\hat{\beta}_i = x_i\ \gamma_i\ x_i\ \delta_i\ x_i$,
    \item $\ddot{\beta}_i = x_i\ \eta_i\ x_i$
\end{itemize}
for new and independent variables $x_i\in X$ and terminal-free patterns $\gamma_i,\delta_i,\eta_i\in X^*$. $\mu$ is a number to be defined later. We assume that
\begin{itemize}
    \item $\eta_i = z_iz_i'z_i'z_i$ and $z_i \neq z_i'$ for $i\in[\mu]$,
    \item $\eta_{\mu+1} = (z_{\mu+1})^4$,
    \item $\var(\gamma_i\delta_i\eta_i)\cap\var(\gamma_j\delta_j\eta_j) = \emptyset$ if $i \neq j$ for $i,j\in[\mu+1]$, and
    \item $x_k,y_1',y_2'\notin\var(\gamma_i\delta_i\eta_i)$ for all $i,k\in[\mu+1]$
\end{itemize}
for new and independent variables $z_i,z_i'\in X$.
Hence, all variables inside $\var(\gamma_i\delta_i\eta_i)$ do not occur anywhere else but in these factors. The length constraint $\ell_\beta$ is defined by the following and only the following system (no additional constraints will be added later on).
\begin{enumerate}
    \item $\sum_{i=1}^{\mu+1}z_i = 1$
    \item $\sum_{i=1}^{\mu}z_i' \leq 1$
    \item $\sum_{i=1}^{\mu+1}x_i = 5$
    \item $x_i - 5z_i = 0$ for all $i\in[\mu+1]$
    \item $z_i-z_i' = 0$ for all $i\in[\mu]$
\end{enumerate}

\textbf{Properties of the constructed pattern languages with length constraints:} We continue with a collection of results obtained by the structure of these patterns. We start by showing $L_E(\beta,\ell_\beta) \subseteq L_E(\alpha,\ell_\alpha)$ and then continue how we use the predicates defined in \cite{Freydenberger2010} to obtain a valid reduction in this adapted case.

First, the given length-constraints result in some very specific properties of words in $L_E(\beta,\ell_\beta)$. We consider the following properties to be the most important ones for the rest of this proof. They show that all $\ell_\beta$-valid substitutions $h$ effectively lead to a choice of a single $i\in[\mu+1]$ for which we have that $h(\ddot{\beta}_i) \neq \varepsilon$ while for all other $j\in[\mu+1]\setminus\{i\}$ we have that $h(\ddot{\beta}_j) = \varepsilon$. Out of that, the property $L_E(\beta,\ell_\beta) \subseteq L_E(\alpha,\ell_\beta)$ can be obtained, which is the key difference to \cite{Freydenberger2010} mentioned before. We begin with properties of all $\ell_\beta$-valid substitutions.

\begin{lemma}\label{lemma:patlen-e-equiv-undec-beta-length-property}
    Let $h\in H_{\ell_\beta}$ be a $\ell_\beta$-valid substitution. There exists some $i\in[\mu+1]$ such that 
    \begin{enumerate}
        \item $h(x_i) = v$ for some $v\in\Sigma^*$ with $|v| = 5$,
        \item $h(z_i) \in \Sigma$, i.e., $|h(z_i)| = 1$,
        \item $h(z_i') \in \Sigma$, i.e., $|h(z_i')| = 1$ if $i\neq \mu+1$,
        \item $h(x_j) = h(z_j) = h(z_{j'}') = \varepsilon$ for all $j\in[\mu+1]$ and $j'\in[\mu]$ with $i\neq j$ and $i\neq j'$.
    \end{enumerate}
\end{lemma}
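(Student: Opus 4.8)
The plan is to analyze the five constraints of $\ell_\beta$ directly as a system of linear Diophantine (in)equalities over the nonnegative integers $|h(x_i)|, |h(z_i)|, |h(z_i')|$, and to extract from them the claimed structural dichotomy. First I would observe that constraint (4), $x_i - 5z_i = 0$, forces $|h(x_i)| = 5|h(z_i)|$ for every $i\in[\mu+1]$; in particular each $|h(z_i)|$ is either $0$ or contributes at least $5$ to the sum in constraint (3), $\sum_{i=1}^{\mu+1} x_i = 5$. Combined with constraint (1), $\sum_{i=1}^{\mu+1} z_i = 1$, this means exactly one index $i$ has $|h(z_i)| = 1$ and all other $|h(z_j)| = 0$; consequently $|h(x_i)| = 5$ and $|h(x_j)| = 0$ for all $j\neq i$, which is conclusions (1) and (2), together with the $x_j$-part of conclusion (4). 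This is the routine core of the argument.

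Next I would handle the $z'$ variables. Constraint (5), $z_i - z_i' = 0$ for all $i\in[\mu]$, gives $|h(z_i')| = |h(z_i)|$ for $i\in[\mu]$. Hence for the distinguished index $i$: if $i\neq\mu+1$ then $|h(z_i')| = |h(z_i)| = 1$, which is conclusion (3); and for every other $j'\in[\mu]$ with $j'\neq i$ we get $|h(z_{j'}')| = |h(z_{j'})| = 0$, completing conclusion (4). I would note that constraint (2), $\sum_{i=1}^{\mu} z_i' \le 1$, is then automatically satisfied (the sum equals $0$ or $1$) and is really only needed in the borderline bookkeeping when $i = \mu+1$, where $z_{\mu+1}'$ does not exist; in that case all $z_j'$ with $j\in[\mu]$ are forced to length $0$ anyway by constraint (5), so the statement still holds with the convention that item (3) is vacuous. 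Finally, since $|h(z_i)| = 1$ and $h$ maps into $\Sigma^*$, we have $h(z_i)\in\Sigma$, and likewise $h(z_i')\in\Sigma$ when applicable; similarly $|h(x_i)| = 5$ means $h(x_i)$ is some word $v\in\Sigma^5$.

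The write-up is essentially just "read off the system", so there is no serious obstacle; the only point requiring a little care is the edge case $i = \mu+1$, where the predicate $\eta_{\mu+1} = (z_{\mu+1})^4$ has no companion variable $z_{\mu+1}'$, so one must be explicit that items (2)--(4) are to be interpreted with $z_{j'}'$ ranging only over $j'\in[\mu]$ (as the statement already does) and that item (3) is conditioned on $i\neq\mu+1$. I would also remark, for use later in the proof, that this lemma does not yet pin down \emph{which} index $i$ is chosen nor the \emph{content} of $v$, $h(z_i)$, $h(z_i')$ beyond their lengths; those are constrained further by the terminal-free patterns $\gamma_i,\delta_i,\eta_i$ and the requirement $h(\alpha)=h'(\beta)$, and will be addressed in subsequent lemmas.
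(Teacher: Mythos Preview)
Your proposal is correct and follows essentially the same approach as the paper's own proof, which also just reads off the constraints in order (property~1 gives item~(2), property~4 then gives item~(1), property~5 gives item~(3), and the remaining properties together give item~(4)). Your version is more explicit, particularly in its treatment of the edge case $i=\mu+1$ and the observation that constraint~(2) is redundant once constraint~(5) is in play, but the underlying argument is identical.
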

\begin{proof}
    We get (2) by property 1 of $\ell_\beta$. By property 4 of $\ell_\beta$ we then obtain (1). By property 5 of $\ell_\beta$ we obtain (3). By the previous three attributes and properties 1, 2, 3, 4, and 5 combined, (4) immediately follows.
\end{proof}

By Lemma \ref{lemma:patlen-e-equiv-undec-beta-length-property}, we obtain the following structure of all words in $L_E(\beta,\ell_\beta)$.

\begin{lemma}\label{lemma:patlen-e-equiv-undec-beta-subst-property}
    Let $h\in H_{\ell_\beta}$. Then there exists some $v\in\Sigma^*$ with $|v| = 5$ and potentially equal $\ta,\tb\in\Sigma$ such that $h(\beta) = h(y_1'\hat{\beta}_1...\hat{\beta}_{i-1})\ v\ h(\gamma_i)\ v\ h(\delta_i)\ v\ h(\hat{\beta}_{i+1}...\hat{\beta}_{\mu+1}y_2')\ v\ta\tb\tb\ta v.$
\end{lemma}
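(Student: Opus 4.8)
The plan is to read off the claimed shape of $h(\beta)$ by a direct, factor‑by‑factor evaluation, using Lemma~\ref{lemma:patlen-e-equiv-undec-beta-length-property} as the only nontrivial ingredient. First I would invoke that lemma to fix the index $i\in[\mu+1]$ it provides and set $v:=h(x_i)$, so that $|v|=5$ while $h(x_j)=\varepsilon$ for every $j\in[\mu+1]\setminus\{i\}$, and $h(z_j)=h(z_{j'}')=\varepsilon$ for all $j\in[\mu+1]$, $j'\in[\mu]$ with $j\neq i\neq j'$. Then I would split $\beta=(y_1'\hat{\beta}_1\cdots\hat{\beta}_{\mu+1}y_2')\,(\ddot{\beta}_1\cdots\ddot{\beta}_{\mu+1})$ and evaluate the two blocks separately. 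In the first block every variable other than the $x_j$'s is left untouched, so substituting $h(x_j)=\varepsilon$ for $j\neq i$ and using $\hat{\beta}_i=x_i\gamma_i x_i\delta_i x_i$ (hence $h(\hat{\beta}_i)=v\,h(\gamma_i)\,v\,h(\delta_i)\,v$) gives exactly
$$h(y_1'\hat{\beta}_1\cdots\hat{\beta}_{i-1})\ v\ h(\gamma_i)\ v\ h(\delta_i)\ v\ h(\hat{\beta}_{i+1}\cdots\hat{\beta}_{\mu+1}y_2').$$

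For the second block I would observe that for each $j\neq i$ we have $h(\ddot{\beta}_j)=h(x_j)h(\eta_j)h(x_j)=h(\eta_j)$, and $h(\eta_j)=\varepsilon$: if $j\leq\mu$ this is because $\eta_j=z_jz_j'z_j'z_j$ with $h(z_j)=h(z_j')=\varepsilon$, and if $j=\mu+1$ because $\eta_{\mu+1}=(z_{\mu+1})^4$ with $h(z_{\mu+1})=\varepsilon$. So the whole second block collapses to $h(\ddot{\beta}_i)=v\,h(\eta_i)\,v$, and it only remains to identify $h(\eta_i)$, which is the single place a case split is needed. If $i\leq\mu$, then $h(\eta_i)=h(z_i)h(z_i')h(z_i')h(z_i)=\ta\tb\tb\ta$ with $\ta:=h(z_i)\in\Sigma$ and $\tb:=h(z_i')\in\Sigma$ by parts~(2)--(3) of Lemma~\ref{lemma:patlen-e-equiv-undec-beta-length-property}; if $i=\mu+1$, then $h(\eta_i)=h(z_{\mu+1})^4=\ta\ta\ta\ta$ with $\ta:=h(z_{\mu+1})\in\Sigma$, and setting $\tb:=\ta$ again yields $h(\eta_i)=\ta\tb\tb\ta$ for a possibly‑equal pair of letters. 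Concatenating the two evaluated blocks then gives the asserted identity.

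I do not expect a real obstacle here: the statement is essentially an unfolding of the preceding lemma. The only things that need care are the bookkeeping — making sure the $\ddot{\beta}_j$ with $j\neq i$ genuinely vanish (which is exactly where part~(4) of Lemma~\ref{lemma:patlen-e-equiv-undec-beta-length-property} is used) — and treating the boundary index $i=\mu+1$ separately, since its $\eta_i$ has a different syntactic form, so that the middle factor $\ta\tb\tb\ta$ is produced uniformly in both cases.
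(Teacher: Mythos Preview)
Your proposal is correct and follows essentially the same approach as the paper's proof: both invoke Lemma~\ref{lemma:patlen-e-equiv-undec-beta-length-property} to fix the distinguished index $i$, expand $h(\hat{\beta}_i)=v\,h(\gamma_i)\,v\,h(\delta_i)\,v$ from $h(x_i)=v$, collapse every $h(\ddot{\beta}_j)$ with $j\neq i$ to $\varepsilon$, and treat $i\le\mu$ versus $i=\mu+1$ separately to read off the middle factor $\ta\tb\tb\ta$. Your write-up is slightly more explicit about why each $h(\ddot{\beta}_j)$ vanishes and about unifying the $i=\mu+1$ case via $\tb:=\ta$, but the argument is the same.
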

\begin{proof}
    By Lemma, \ref{lemma:patlen-e-equiv-undec-beta-length-property} we immediately obtain that there exists some $i\in[\mu+1]$ such that $h(\ddot{\beta}_i) = h(x_iz_iz_i'z_i'z_ix_i) = v\ta\tb\tb\ta v$ if $i \leq \mu$, or, if $i = \mu+1$, then $h(\ddot{\beta}_i) = h(x_iz_i^4x_i) = v\ta^4v$ for some $v\in\Sigma^*$ with $|v| = 5$ and $\ta,\tb\in\Sigma^*$. Also, by Lemma \ref{lemma:patlen-e-equiv-undec-beta-length-property}, we obtain that for all other $j\in[\mu+1]$ with $j\neq i$ we have $h(\ddot{\beta}_i) = \varepsilon$. By the fact that $h(x_i) = v$, we also obtain that $h(\hat{\beta}_i) = h(x_i \gamma_i x_i \delta_i x_i) = v\ h(\gamma_i)\ v\ h(\delta_i)\ v$. So, applying $h$ to $\beta$ results in
    \begin{align*}
        h(\beta) &= h(y_1'\ \hat{\beta}_1\ \cdots\ \hat{\beta}_{\mu+1}\ y_2'\ \ddot{\beta}_1\ \cdots\ \ddot{\beta}_{\mu+1}) \\
                 &= h(y_1'\ \hat{\beta}_1\ ...\ \hat{\beta}_i)\ h(\hat{\beta}_i)\ h(\hat{\beta}_{i+1}\ ...\ \hat{\beta}_{\mu+1}\ y_2')\ h(\ddot{\beta}_1\ ...\ \ddot{\beta}_{\mu+1}) \\
                 &= h(y_1'\ \hat{\beta}_1 ... \hat{\beta}_i)\ v\ h(\gamma_i)\ v\ h(\delta_i)\ v\ h(\hat{\beta}_{i+1}...\hat{\beta}_{\mu+1}\ y_2')\ v\ta\tb\tb\ta v
    \end{align*}
    by which the lemma holds.
\end{proof}

Using Lemma \ref{lemma:patlen-e-equiv-undec-beta-subst-property}, we can show that the language $L_E(\beta,\ell_\beta)$ is contained in $L_E(\alpha,\ell_\alpha)$. Obtaining this property effectively allows this proof technique to be used for the equivalence problem instead of only the inclusion problem for pattern languages with length constraints.

\begin{proposition}\label{prop:proof1-langbeta-in-langalpha}
    We have $L_E(\beta,\ell_\beta)\subseteq L_E(\alpha,\ell_\alpha)$.
\end{proposition}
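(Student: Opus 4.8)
The plan is to show that an arbitrary word $w \in L_E(\beta,\ell_\beta)$ lies in $L_E(\alpha,\ell_\alpha)$ by taking the structural description of $w$ from Lemma~\ref{lemma:patlen-e-equiv-undec-beta-subst-property} and reading off an explicit $\ell_\alpha$-valid substitution $g$ of $\alpha$ with $g(\alpha) = w$. Recall $\alpha = y_1\, x_v\alpha_1 x_v\alpha_2 x_v\, y_2\, x_v z z' z' z x_v$ with the only constraints being $|g(x_v)| = 5$, $|g(z)| = 1$, $|g(z')| = 1$. By Lemma~\ref{lemma:patlen-e-equiv-undec-beta-subst-property}, for some $i \in [\mu+1]$ and some $v \in \Sigma^5$ and (possibly equal) $\ta,\tb \in \Sigma$ we have
\[
  w = h(y_1'\hat\beta_1\cdots\hat\beta_{i-1})\; v\; h(\gamma_i)\; v\; h(\delta_i)\; v\; h(\hat\beta_{i+1}\cdots\hat\beta_{\mu+1}y_2')\; v\ta\tb\tb\ta v .
\]
(When $i = \mu+1$ the suffix is $v\ta^4 v$, which is just the case $\ta = \tb$.) So I would set $g(x_v) := v$, $g(z) := \ta$, $g(z') := \tb$; these are all of length $5$, $1$, $1$ respectively, so $g$ is $\ell_\alpha$-valid since $\ell_\alpha$ imposes nothing else. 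Then the trailing block $x_v z z' z' z x_v$ of $\alpha$ maps to $v\ta\tb\tb\ta v$, exactly matching the suffix of $w$.

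It remains to choose $g(y_1), g(\alpha_1), g(\alpha_2), g(y_2)$ so that the prefix $y_1\, x_v\alpha_1 x_v\alpha_2 x_v\, y_2$ of $\alpha$ maps to the prefix $h(y_1'\hat\beta_1\cdots\hat\beta_{i-1})\, v\, h(\gamma_i)\, v\, h(\delta_i)\, v\, h(\hat\beta_{i+1}\cdots\hat\beta_{\mu+1}y_2')$ of $w$. Since $g(x_v) = v$ already, the natural assignment is
\[
  g(y_1) := h(y_1'\hat\beta_1\cdots\hat\beta_{i-1}),\quad
  g(\alpha_1) := h(\gamma_i),\quad
  g(\alpha_2) := h(\delta_i),\quad
  g(y_2) := h(\hat\beta_{i+1}\cdots\hat\beta_{\mu+1}y_2').
\]
These are arbitrary words over $\Sigma$ and $y_1, \alpha_1, \alpha_2, y_2$ carry no length constraints in $\ell_\alpha$, so $g$ remains $\ell_\alpha$-valid (and, since we are in the erasing setting, any of these may legitimately be $\varepsilon$). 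A direct computation then gives $g(\alpha) = g(y_1)\, v\, g(\alpha_1)\, v\, g(\alpha_2)\, v\, g(y_2)\, v\ta\tb\tb\ta v = w$, establishing $w \in L_E(\alpha,\ell_\alpha)$.

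The only genuine subtlety — and the step I would write out most carefully — is verifying that the three occurrences of $x_v$ inside $\alpha$'s prefix are all assigned the \emph{same} word $v$, which is exactly what makes this work: Lemma~\ref{lemma:patlen-e-equiv-undec-beta-subst-property} is stated precisely so that the three ``middle'' copies of $v$ appearing between $h(\gamma_i)$, $h(\delta_i)$ and the bracketed tails are literally one fixed word $v$ of length $5$ (coming from the single index $i$ selected in Lemma~\ref{lemma:patlen-e-equiv-undec-beta-length-property}), and likewise the trailing $v\ta\tb\tb\ta v$ uses that same $v$. Since $\alpha$ reuses $x_v$ four times and $g(x_v)$ must be a single word, this coincidence is what is needed; it is guaranteed by Lemma~\ref{lemma:patlen-e-equiv-undec-beta-subst-property} and requires no further argument. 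I expect no real obstacle here beyond bookkeeping: the proposition is essentially the payoff of having engineered $\ell_\beta$ in items 1--5 so tightly, and the proof is a short case-free substitution-matching argument (the case $i = \mu+1$ being subsumed by allowing $\ta = \tb$).
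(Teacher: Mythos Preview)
Your proposal is correct and essentially identical to the paper's own proof: both invoke Lemma~\ref{lemma:patlen-e-equiv-undec-beta-subst-property} to obtain the decomposition of $w$ and then define the obvious $\ell_\alpha$-valid substitution (the paper calls it $h'$, you call it $g$) by matching the pieces term-for-term. The only discrepancy is a harmless miscount --- $\alpha$ actually contains five occurrences of $x_v$, not four --- but this does not affect the argument.
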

\begin{proof}
    Let $w\in L_E(\beta,\ell_\beta)$. Then, there exists some $h\in H_{\ell_\beta}$ such that $h(\beta) = w$. We know by Lemma \ref{lemma:patlen-e-equiv-undec-beta-subst-property} that there exists some $v\in\Sigma^*$ with $|v| = 5$ and $\ta,\tb\in\Sigma$ such that
    $$h(\beta) = h(y_1'\hat{\beta}_1\ ...\ \hat{\beta}_{i-1})\ v\ h(\gamma_i)\ v\ h(\delta_i)\ v\ h(\hat{\beta}_{i+1}\ ...\ \hat{\beta}_{\mu+1}y_2')\ v\ta\tb\tb\ta v.$$

    Now, let $h'\in H$ and set $h'(y_1) = h(y_1'\hat{\beta}_1...\hat{\beta}_{i-1})$, $h'(y_2) = h(\hat{\beta}_{i+1}...\hat{\beta}_{\mu+1}y_2')$, $h'(x_v) = v$, $h'(\alpha_1) = h(\gamma_i)$, $h'(\alpha_2) = h(\delta_i)$, $h'(z) = \ta$, and $h'(z') = \tb$. Then, as $|h'(x_v)| = |v| = 5$ and $|h'(z)| = |h(z_i)| = 1 = |h(z_i')| = |h'(z')|$ we know that $h'$ is $\ell_\alpha$-valid. We obtain
    \begin{align*}
        h'(\alpha) &= h'(y_1\ x_v\alpha_1x_v\alpha_2x_v\ y_2\ x_vzz'z'zx_v) \\
                   &= h(y_1'\hat{\beta}_1\ ...\ \hat{\beta}_{i-1})\ v\ h(\gamma_i)\ v\ h(\delta_i)\ v\ h(\hat{\beta}_{i+1}...\hat{\beta}_{\mu+1}y_2')\ v\ta\tb\tb\ta v \\
                   &= h(\beta) = w.
    \end{align*}
    So, there exists some $h'\in H_{\ell_\alpha}$ such that $h'(\alpha) = w$, thus $w\in L_E(\alpha,\ell_\alpha)$.
\end{proof}

Now, we have to show that words $w\in L_E(\alpha,\ell_\alpha)$ are also in $L_E(\beta,\ell_\beta)$ if and only if certain conditions are fulfilled. In particular, we use the previously mentioned notion of predicates from \cite{Freydenberger2010} and show that for some $w\in L_E(\alpha,\ell_\alpha)$ we have $w\in L_E(\beta,\ell_\beta)$ if either some trivial property is fulfilled or at least one predicate is satisfied. As we will see, this will only not be the case if, for some $h\in H_{\ell_\alpha}$, we have that $h(x)\in\ValC(A)$ and $h(y)\in\{0\}^*$ for $|h(y)|$ being at least as long as the sum of the longest three factors of $0$'s in $h(x)$. We continue with a collection of results that, in the end, will be used to conclude the correctness of the reduction in a concise manner.

As in \cite{Freydenberger2010}, we interpret each triple $(\gamma_i,\delta_i,\eta_i)$ as a predicate $\pi_i : H_{\ell_\alpha} \rightarrow \{true,false\}$. We say that a substitution $h\in H_{\ell_\alpha}$ \emph{satisfies} the predicate $\pi_i$, i.e., $\pi_i(h) = true$, if there exists some morphism $\tau : \var(\gamma_i\delta_i\eta_i)^* \rightarrow \Sigma^*$ with $\tau(\gamma_i) = h(\alpha_1)$, $\tau(\delta_i) = h(\alpha_2)$ and $\tau(\eta_i) = h(zz'z'z)$. Otherwise, we have that $\pi_i$ is not satisfied, i.e., $\pi_i(h) = false$. 

From now on, we assume the predicates $\pi_1$ to $\pi_\mu$ to be exactly constructed as in \cite{Freydenberger2010} by setting $\gamma_i$ and $\delta_i$ accordingly for each $i\in[\mu]$, and by using the same $z_i$ and $z_i'$ variables to encode the structural properties of encodings of invalid computations for $A$ in the case of $z$ being substituted to $0$ and $z_i'$ being substituted to $\#$, respectively, or vice versa. Due to the extent of their construction, we refer to \cite{Freydenberger2010} for the definitions of $(\gamma_i,\delta_i,\eta_i)$ for $i\in[\mu]$. We, however, mention the following conclusive result from \cite{Freydenberger2010}, adapted to our construction.

\begin{proposition}\label{prop:proof1sourceresult}\cite{Freydenberger2010}
    Let $h\in H_{\ell_\alpha}$ and assume $h(z) = 0$ and $h(z') = \#$. Then, $h$ satisfies no predicate $\pi_i$ for $i\in[\mu]$ if and only if $h(\alpha_1)\in\ValC(A)$ and $h(\alpha_2)\in\{0\}^{\geq n}$ for $n$ being the length of the three longest non-overlapping factors of $0$'s in $h(\alpha_1)$ summed up.
\end{proposition}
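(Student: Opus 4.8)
Up to a renaming of variables, the statement is exactly the central technical lemma of \cite{Freydenberger2010}, so the plan is not to reprove it from scratch but to verify that our construction instantiates theirs faithfully and then to quote it. First I would fix the dictionary between the two settings: the slot $\alpha_1$ here plays the role of the ``computation word'' $x$ of \cite{Freydenberger2010}, the slot $\alpha_2$ plays the role of the ``padding word'' $y$, and for each $i\in[\mu]$ the terminal-free patterns $\gamma_i,\delta_i$ are taken to be literally the ones constructed there, with the letter-variables $z_i,z_i'$ playing the role of their $z_i,\hat z_i$. The only syntactic deviation is the shape of $\eta_i$, which here is $z_iz_i'z_i'z_i$ rather than $z_i\hat z_iz_i$; under the hypothesis $h(z)=0$ and $h(z')=\#$, the definition of $\pi_i(h)=true$ demands a morphism $\tau$ with $\tau(\eta_i)=h(zz'z'z)=0\#\#0$.

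The single point that genuinely requires argument is that this reshaping of $\eta_i$ does not weaken the predicate. Setting $s=\tau(z_i)$ and $t=\tau(z_i')$, any witnessing $\tau$ gives a factorisation $s\,t\,t\,s=0\#\#0$; from $2|s|+2|t|=4$ we get $|s|+|t|=2$, and $|s|=0$ would force $0\#\#0=t^2$ while $|s|=2$ would force $0\#\#0=s^2$, both impossible since $0\#\#0$ is not a square. Hence $|s|=|t|=1$, and comparing letters yields $s=0$, $t=\#$; that is, every witnessing $\tau$ satisfies $\tau(z_i)=0$ and $\tau(z_i')=\#$, which are precisely the letter values to which $z_i$ and $\hat z_i$ are pinned in \cite{Freydenberger2010}. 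On the remaining variables of $\gamma_i\delta_i$ the morphism $\tau$ is unconstrained in both settings, so ``$\pi_i(h)=true$'' is equivalent to solvability of the word-equation system of \cite{Freydenberger2010} under $x\mapsto h(\alpha_1)$, $y\mapsto h(\alpha_2)$. Finally, although $\pi_i$ is quantified over $h\in H_{\ell_\alpha}$, the constraint $\ell_\alpha$ only fixes $|h(x_v)|=5$ and $|h(z)|=|h(z')|=1$: it mentions no variable of $\var(\gamma_i\delta_i\eta_i)$ (these are disjoint from everything else by construction) and it does not constrain $\alpha_1$ or $\alpha_2$. So as $h$ ranges over $H_{\ell_\alpha}$ with $h(z)=0$, $h(z')=\#$ fixed, the pair $(h(\alpha_1),h(\alpha_2))$ ranges over all of $\Sigma^*\times\Sigma^*$, exactly as $(h(x),h(y))$ does in \cite{Freydenberger2010}.

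With the dictionary and these two reductions in hand, the proposition is literally the conclusion of the corresponding lemma in \cite{Freydenberger2010}, which I would invoke as a black box. The mathematical substance sitting inside that lemma --- and the step I would expect to be the main obstacle were one to redo it --- is twofold: \emph{soundness}, that each $\pi_i$, once satisfied, exhibits a concrete defect incompatible with $h(\alpha_1)\in\ValC(A)$ together with $h(\alpha_2)$ being long enough; and, more delicately, \emph{completeness}, that whenever $h(\alpha_1)\notin\ValC(A)$ or $h(\alpha_2)\notin\{0\}^{\geq n}$ some $\pi_i$ is satisfied. Completeness is obtained there by rewriting ``$h(\alpha_1)$ is not the encoding of an accepting computation of $A$'' as a finite disjunction of local violations --- malformed block structure, wrong initial configuration, wrong final configuration, or a single step inconsistent with $\delta$ --- and by reducing the ``inconsistent step'' case to a length comparison in which $h(\alpha_2)$ and the quantity $n$ (the sum of the three longest $0$-runs of $h(\alpha_1)$) are exactly what is needed to make the matching go through. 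Re-deriving that disjunction and its implementation as the patterns $\gamma_i,\delta_i$ is the only genuinely laborious part, and it is inherited verbatim from \cite{Freydenberger2010}.
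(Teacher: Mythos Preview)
Your proposal is correct and follows essentially the same approach as the paper: both argue that, once $h(z)=0$ and $h(z')=\#$ are fixed, the predicate condition $\tau(\eta_i)=0\#\#0$ pins $\tau(z_i)=0$, $\tau(z_i')=\#$ and thereby collapses our notion of ``$\pi_i(h)=\text{true}$'' to the identical condition in \cite{Freydenberger2010}, after which the result is quoted as a black box. Your write-up is in fact more careful than the paper's on the one nontrivial point---the square-freeness argument showing that $\tau(z_iz_i'z_i'z_i)=0\#\#0$ forces $|\tau(z_i)|=|\tau(z_i')|=1$---which the paper only gestures at.
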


Indeed, this follows from the fact the predicates are defined independent of structural variables like $x_v$, $y_1$ or $y_2$, and by the fact that $h(z) = 0$ and $h(z') = \#'$, resulting in the factor $h(zz'z'z) = 0\#\#0$. By that, $h$ satisfies some predicate $\pi_i$, for $i\in[\mu]$, if and only if there exists some homomorphism $\tau$ such that $h(\alpha_1) = \tau(\gamma_i)$, $h(\alpha_2) = \tau(\delta_i)$ and $\tau(\eta_i) = 0\#\#0$, following the exact structure from \cite{Freydenberger2010} as mentioned in the beginning of this section. As Freydenberger and Reidenbach show that the predicates $\pi_1$ to $\pi_\mu$ are only not satisfied for the properties of $h$ mentioned in proposition \ref{prop:proof1sourceresult}, we obtain that result for our construction in that case as well.

In summary, these predicates allow for the differentiation of substitutions that contain encoded accepting computations of $A$ and those that do not. However, we have to assume $h(z) \neq h(z')$ to fix positions of the letters $0$ and $\#$. So, before showing that these predicates can be embedded in our construction, we need to exclude the case of $h(z) = h(z')$ by forcing such substitutions to result in words that are always in $L_E(\beta,\ell_\beta)$. We define the additional predicate $\pi_{\mu+1}$ by using the previously defined $\eta_{\mu+1} = (z_{\mu+1})^4$ and setting
\begin{itemize}
    \item $\gamma_{\mu+1} = y_{\mu+1}$, and
    \item $\delta_{\mu+1} = y_{\mu+1}'$
\end{itemize}
for new and independent variables $y_{\mu+1}$ and $y_{\mu+1}'$. We can see that, if for some substitution $h\in H_{\ell_\alpha}$ we have $h(z) = h(z')$, then the predicate $\pi_{\mu+1}$ is always satisfied. This results in $h(\alpha)\in L_E(\beta,\ell_\beta)$ in these cases.

\begin{lemma}\label{lemma:patlen-e-equiv-undec-equal-z-always-in-beta}
    Let $h\in H_{\ell_\alpha}$. We have $h(z) = h(z')$ if and only if $\pi_{\mu+1}(h) = true$.
\end{lemma}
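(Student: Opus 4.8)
The plan is to prove the biconditional of Lemma~\ref{lemma:patlen-e-equiv-undec-equal-z-always-in-beta} directly, unwinding the definition of ``$h$ satisfies $\pi_{\mu+1}$'' against the specific shapes of $\gamma_{\mu+1}$, $\delta_{\mu+1}$, and $\eta_{\mu+1} = (z_{\mu+1})^4$. Recall that $\pi_{\mu+1}(h) = true$ means there is a morphism $\tau : \var(\gamma_{\mu+1}\delta_{\mu+1}\eta_{\mu+1})^* \to \Sigma^*$ with $\tau(\gamma_{\mu+1}) = h(\alpha_1)$, $\tau(\delta_{\mu+1}) = h(\alpha_2)$, and $\tau(\eta_{\mu+1}) = h(zz'z'z)$. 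Since $\gamma_{\mu+1} = y_{\mu+1}$ and $\delta_{\mu+1} = y_{\mu+1}'$ are single fresh variables occurring nowhere else, the only real constraint that $\tau$ must meet is $\tau((z_{\mu+1})^4) = \tau(z_{\mu+1})^4 = h(z)h(z')h(z')h(z)$; the values of $\tau(y_{\mu+1})$ and $\tau(y_{\mu+1}')$ can then be set freely to $h(\alpha_1)$ and $h(\alpha_2)$ respectively, with no conflict because these variables are independent.

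For the ``if'' direction ($\pi_{\mu+1}(h) = true \Rightarrow h(z) = h(z')$): suppose such a $\tau$ exists and write $c := \tau(z_{\mu+1}) \in \Sigma^*$. Then $h(z)h(z')h(z')h(z) = c^4$. Here I would invoke that $h \in H_{\ell_\alpha}$ forces $|h(z)| = |h(z')| = 1$ by the length constraints in $\ell_\alpha$, so $h(z), h(z') \in \Sigma$ and the word $h(z)h(z')h(z')h(z)$ has length $4$, hence $|c| = 1$ and $c \in \Sigma$. Comparing positions, the first letter gives $h(z) = c$ and the second gives $h(z') = c$, so $h(z) = h(z')$. (Even without the length constraint one can argue purely combinatorially: if $u v v u = c^4$ with $|u|,|v|$ arbitrary, a primitivity/synchronization argument on $c$ forces $u$ and $v$ to be powers of the same primitive root and in fact $u = v$; but the length-$1$ observation is the cleanest route and is available to us.)

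For the ``only if'' direction ($h(z) = h(z') \Rightarrow \pi_{\mu+1}(h) = true$): let $\ta := h(z) = h(z') \in \Sigma$. Define $\tau$ on the (pairwise disjoint) variables of $\gamma_{\mu+1}\delta_{\mu+1}\eta_{\mu+1}$ by $\tau(z_{\mu+1}) := \ta$, $\tau(y_{\mu+1}) := h(\alpha_1)$, and $\tau(y_{\mu+1}') := h(\alpha_2)$. Then $\tau(\gamma_{\mu+1}) = \tau(y_{\mu+1}) = h(\alpha_1)$, $\tau(\delta_{\mu+1}) = \tau(y_{\mu+1}') = h(\alpha_2)$, and $\tau(\eta_{\mu+1}) = \tau(z_{\mu+1})^4 = \ta^4 = h(z)h(z')h(z')h(z) = h(zz'z'z)$, so all three defining equalities hold and $\pi_{\mu+1}(h) = true$. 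This direction is essentially a one-line verification once the right $\tau$ is exhibited.

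I do not expect a serious obstacle here; the statement is a bookkeeping lemma. The only point requiring mild care is making explicit that $y_{\mu+1}, y_{\mu+1}'$ and $z_{\mu+1}$ are genuinely independent (so that prescribing $\tau$ on them independently is legitimate) and that $z_{\mu+1} \notin \var(\gamma_{\mu+1}\delta_{\mu+1})$, which is guaranteed by the construction; and, for the forward direction, citing the $\ell_\alpha$-constraints $z = 1$, $z' = 1$ to pin $|h(z)| = |h(z')| = 1$. After establishing this lemma, it should be combined with Proposition~\ref{prop:proof1sourceresult} to conclude that whenever $h(z) \neq h(z')$ one is in the setting where the predicates $\pi_1,\dots,\pi_\mu$ do the work, which is presumably the content of the next lemma.
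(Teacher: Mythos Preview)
Your proof is correct and follows essentially the same approach as the paper: for the direction $\pi_{\mu+1}(h)\Rightarrow h(z)=h(z')$ you use the $\ell_\alpha$-constraints $|h(z)|=|h(z')|=1$ to force $|\tau(z_{\mu+1})|=1$ and compare positions, and for the converse you exhibit the obvious $\tau$ on the independent variables $z_{\mu+1}, y_{\mu+1}, y_{\mu+1}'$. The only differences are cosmetic (the order of the two directions and your optional combinatorial aside).
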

\begin{proof}
    First assume $h(z) = h(z')$. Then
    \begin{align*}
        h(\alpha) &= h(y_1x_v\alpha_1x_v\alpha_2x_vy_2x_vzz'z'zx_v) \\
                  &= h(y_1x_v\alpha_1x_v\alpha_2x_vy_2x_vz^4x_v).
    \end{align*}
    Select $\tau\in H$ such that $\tau(z_{\mu+1}) = h(z)$, $\tau(y_{\mu+1}) = h(\alpha_1)$, and $\tau(y_{\mu+1}') = h(\alpha_2)$. Then $\tau(\gamma_{\mu+1}) = \tau(y_{\mu+1}) = h(\alpha_1)$, $\tau(\delta_{\mu+1}) = \tau(y_{\mu+1}') = h(\alpha_2)$, and $\tau(\eta_{\mu+1}) = \tau(z_{\mu+1}^4) = \tau(z_{\mu+1})^4 = h(z)^4 = h(zz'z'z)$. So, by the existence of $\tau$ we know that $h$ satisfies $\pi_{\mu+1}$.

    For the other direction, assume that $h$ satisfies $\pi_{\mu+1}$. Then there exists some morphism $\tau$ with $\tau(\gamma_{\mu+1}) = h(\alpha_1)$, $\tau(\delta_{\mu+1}) = h(\alpha_2)$, and $\tau(\eta_{\mu+1}) = h(zz'z'z)$. We know $\eta_{\mu+1} = z_{\mu+1}^4$. We also know $|h(z)| = |h(z')| = 1$. Hence, $|h(zz'z'z)| = 4$ and by that $|\tau(\eta_{\mu+1})| = |\tau(z_{\mu+1}^4)| = 4$. By that, we have $|\tau(z_{\mu+1})| = 1$, resulting in $\tau(z_{\mu+1}) = h(z) = h(z')$.
\end{proof}

Using this result, we can immediately conclude that all substitutions $h\in H_{\ell_\alpha}$ with $h(z) = h(z')$ result in $h(\alpha) \in L_E(\beta,\ell_\beta)$, as mentioned before.

\begin{corollary}\label{corollary:proof1-z-equal-z}
    Let $h\in H_{\ell_\alpha}$. If $h(z) = h(z')$, then $h(\alpha) \in L_E(\beta,\ell_\beta)$.
\end{corollary}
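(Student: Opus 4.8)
The plan is to upgrade Lemma~\ref{lemma:patlen-e-equiv-undec-equal-z-always-in-beta} from the level of predicates to the level of languages: given $h\in H_{\ell_\alpha}$ with $h(z)=h(z')$, I will explicitly construct an $\ell_\beta$-valid substitution $h'$ of $\beta$ that selects the block $\hat\beta_{\mu+1},\ddot\beta_{\mu+1}$, erases all of $\hat\beta_j,\ddot\beta_j$ for $j\in[\mu]$, and satisfies $h'(\beta)=h(\alpha)$.

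First I would apply Lemma~\ref{lemma:patlen-e-equiv-undec-equal-z-always-in-beta} to get, from $h(z)=h(z')$, that $\pi_{\mu+1}(h)=true$; that is, there is a morphism $\tau\colon\var(\gamma_{\mu+1}\delta_{\mu+1}\eta_{\mu+1})^*\to\Sigma^*$ with $\tau(\gamma_{\mu+1})=h(\alpha_1)$, $\tau(\delta_{\mu+1})=h(\alpha_2)$, and $\tau(\eta_{\mu+1})=h(zz'z'z)$ (and, as in the proof of that lemma, $\tau(z_{\mu+1})=h(z)$, though this refinement is not needed). Then I define $h'\in H$ by $h'(y_1'):=h(y_1)$, $h'(y_2'):=h(y_2)$, $h'(x_{\mu+1}):=h(x_v)$, letting $h'$ agree with $\tau$ on $\var(\gamma_{\mu+1}\delta_{\mu+1}\eta_{\mu+1})$, and setting $h'(x_j)=h'(z_j)=h'(z_j')=\varepsilon$ and $h'(w)=\varepsilon$ for every $w\in\var(\gamma_j\delta_j\eta_j)$ and every $j\in[\mu]$ (and arbitrarily, say to $\varepsilon$, on the remaining variables of $X$). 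This is a consistent definition precisely because the variable sets $\var(\gamma_i\delta_i\eta_i)$ are pairwise disjoint and disjoint from $\{x_1,\dots,x_{\mu+1},y_1',y_2'\}$, as fixed in the construction of $\beta$.

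Next I would verify $h'\in H_{\ell_\beta}$ by running through the five constraints: only $z_{\mu+1}$ and $x_{\mu+1}$ receive nonempty images, of lengths $1$ and $5$ respectively, so constraint~1 ($\sum_{i}z_i=1$) and constraint~3 ($\sum_i x_i=5$) hold; constraint~2 ($\sum_{i\le\mu}z_i'\le 1$) holds because that sum is $0$; constraint~4 ($x_i-5z_i=0$) holds coordinatewise ($5-5\cdot 1=0$ for $i=\mu+1$, $0-0=0$ otherwise); and constraint~5 ($z_i-z_i'=0$ for $i\in[\mu]$) holds because both sides are $0$. Finally I would compute $h'(\beta)$: every $\hat\beta_j$ and $\ddot\beta_j$ with $j\in[\mu]$ maps to $\varepsilon$, while $h'(\hat\beta_{\mu+1})=h(x_v)\,h(\alpha_1)\,h(x_v)\,h(\alpha_2)\,h(x_v)$ and $h'(\ddot\beta_{\mu+1})=h(x_v)\,h(z)^4\,h(x_v)$; hence $h'(\beta)=h(y_1)\,h(x_v)h(\alpha_1)h(x_v)h(\alpha_2)h(x_v)\,h(y_2)\,h(x_v)h(z)^4h(x_v)$, which equals $h(\alpha)$ since $h(z)=h(z')$ gives $h(zz'z'z)=h(z)^4$. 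Therefore $h(\alpha)=h'(\beta)\in L_E(\beta,\ell_\beta)$.

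There is no genuine obstacle here; the statement is essentially a repackaging of Lemma~\ref{lemma:patlen-e-equiv-undec-equal-z-always-in-beta} at the language level. The only point needing care is the bookkeeping that erasing the other $\mu$ predicate blocks is compatible with $\ell_\beta$ — which works exactly because $\ell_\beta$ couples $x_i$ and $z_i'$ to $z_i$ via constraints~4 and~5, so setting $z_i=\varepsilon$ forces, rather than conflicts with, $x_i=z_i'=\varepsilon$ — together with the disjointness of the predicates' variable sets, which makes the blockwise definition of $h'$ well defined.
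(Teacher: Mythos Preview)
Your proposal is correct and follows essentially the same approach as the paper: invoke Lemma~\ref{lemma:patlen-e-equiv-undec-equal-z-always-in-beta} to obtain the witnessing morphism $\tau$ for $\pi_{\mu+1}$, extend it to an explicit substitution on $\beta$ that activates only the $(\mu+1)$-st block and erases all others, verify the five constraints of $\ell_\beta$, and then compute $h'(\beta)=h(\alpha)$. Your check of the constraints is in fact more explicit than the paper's, but the argument is otherwise identical.
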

\begin{proof}
    By Lemma \ref{lemma:patlen-e-equiv-undec-equal-z-always-in-beta}, we know $h$ satisfies $\pi_{\mu+1}$. So, we find a morphism $\tau$ such that $\tau(\gamma_{\mu+1}) = \tau(y_{\mu+1}) = h(\alpha_1)$, $\tau(\delta_{\mu+1}) = \tau(y_{\mu+1}') = h(\alpha_2)$, and $\tau(\eta_{\mu+1}) = h(zz'z'z)$. Extend that to a substitution $\tau'$ by $\tau'(y_{\mu+1}) = h(\alpha_1)$, $\tau'(y_{\mu+1}') = h(\alpha_2)$, $\tau'(x_{\mu+1}) = h(x_v)$, $\tau'(z_{\mu+1}) = h(z) = h(z')$, $\tau'(y_1') = h(y_1)$, $\tau'(y_2') = h(y_2)$, and $\tau'(x) = \varepsilon$ for all other variables $x\in X$. Then, as $|\tau'(x_{\mu+1})| = |h(x_v)| = 5$, $|\tau'(z_{\mu+1})| = |h(z)| = |h(z')| = 1$, as well as $\tau'(x_i) = \tau'(z_i) = \tau'(z_i') = \varepsilon$ is satisfied, we know that $\tau'$ must be $\ell_\beta$-valid, i.e., $\tau'\in H_{\ell_\beta}$. We get

    \begin{align*}
        \tau'(\beta) &= \tau'(y_1'\ \hat{\beta}_1\ ...\ \hat{\beta}_{\mu+1}\ y_2'\ \ddot{\beta}_1\ ...\ \ddot{\beta}_{\mu+1}) \\
                     &= \tau'(y_1'\hat{\beta}_{\mu+1}y_2'\ddot{\beta}_{\mu+1}) \\
                     &= \tau'(y_1'x_{\mu+1}\gamma_{\mu+1}x_{u+1}\delta_{\mu+1}x_{\mu+1}y_2')\tau'(x_{\mu+1}z_{\mu+1}^4x_{\mu+1}) \\
                     &= h(y_1x_v)\tau'(y_{\mu+1})h(x_v)\tau'(y_{\mu+1}')h(x_vy_2)h(x_vzz'z'zx_v) \\
                     &= h(y_1x_v)h(\alpha_1)h(x_v)h(\alpha_2)h(x_vy_2x_vzz'z'zx_v) \\
                     &= h(\alpha).
    \end{align*}
    This concludes this lemma.
\end{proof}

By that, we only have to consider substitutions $h\in H_{\ell_\alpha}$ with $h(z) \neq h(z')$ to obtain words that are potentially not in $L_E(\beta,\ell_\beta)$. Without having to consider the specific construction of the predicates $\pi_1$ to $\pi_\mu$, we obtain the following result similar to \cite{Freydenberger2010}. It essentially shows that if any predicate $\pi_i$ for $i\in[\mu+1]$ is satisfied for some substitution $h$, then we have $h(\alpha) \in L_E(\beta,\ell_\beta)$. 

\begin{lemma}\label{lemma:palen-e-equiv-undec-predicate-language}
    Let $h\in H_{\ell_\alpha}$. If $\pi_i(h) = true$ for some $i\in[\mu+1]$, then $h(\alpha)\in L_E(\beta,\ell_\beta)$.
\end{lemma}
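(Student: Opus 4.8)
The plan is to take a satisfying assignment for $\pi_i$ and turn it into an $\ell_\beta$-valid substitution $\tau'$ for $\beta$ with $\tau'(\beta) = h(\alpha)$, mirroring the structure of the proof of Corollary~\ref{corollary:proof1-z-equal-z} but now for a general index $i \in [\mu+1]$ rather than just $i = \mu+1$. So suppose $\pi_i(h) = true$; then by definition there is a morphism $\tau : \var(\gamma_i\delta_i\eta_i)^* \to \Sigma^*$ with $\tau(\gamma_i) = h(\alpha_1)$, $\tau(\delta_i) = h(\alpha_2)$, and $\tau(\eta_i) = h(zz'z'z)$. I would split into the two cases $i \leq \mu$ and $i = \mu+1$, because the shape of $\eta_i$ differs; the case $i=\mu+1$ is exactly Corollary~\ref{corollary:proof1-z-equal-z} (note $\pi_{\mu+1}(h)=true$ forces $h(z)=h(z')$ by Lemma~\ref{lemma:patlen-e-equiv-undec-equal-z-always-in-beta}), so the real work is the case $i \leq \mu$.

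For $i \leq \mu$, I would extend $\tau$ to a substitution $\tau'$ on all of $X$ by keeping $\tau'$ equal to $\tau$ on $\var(\gamma_i\delta_i\eta_i)$, setting $\tau'(x_i) = h(x_v)$, $\tau'(y_1') = h(y_1)$, $\tau'(y_2') = h(y_2)$, and $\tau'(x) = \varepsilon$ for every remaining variable (in particular $\tau'(x_j) = \tau'(z_j) = \tau'(z_j') = \varepsilon$ for all $j \neq i$, and $\tau'(y_{\mu+1}) = \tau'(y_{\mu+1}') = \varepsilon$). Next I would verify $\ell_\beta$-validity: $|\tau'(x_i)| = |h(x_v)| = 5$ handles constraint~3 and the $i$-th instance of constraint~4; from $\tau(\eta_i) = \tau(z_iz_i'z_i'z_i) = h(zz'z'z)$ and $|h(zz'z'z)| = 4$ one gets $|\tau'(z_i)| = |\tau'(z_i')| = 1$ (using $z_i \neq z_i'$ to split the length $4$ word evenly — actually one argues $\tau(z_i)\tau(z_i')\tau(z_i')\tau(z_i)$ has length $4$ with $|\tau(z_i)| = |\tau(z_i')|$ forced by symmetry, hence each is $1$), which gives constraints~1,~2,~5 and the rest of~4 since all other $z_j, x_j$ map to $\varepsilon$. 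Then I would compute $\tau'(\beta)$: all $\hat\beta_j$ and $\ddot\beta_j$ with $j \neq i$ vanish, $\tau'(\hat\beta_i) = h(x_v)\,h(\alpha_1)\,h(x_v)\,h(\alpha_2)\,h(x_v)$, and $\tau'(\ddot\beta_i) = h(x_v)\,h(zz'z'z)\,h(x_v)$, so $\tau'(\beta) = h(y_1)\,h(x_v)\,h(\alpha_1)\,h(x_v)\,h(\alpha_2)\,h(x_v)\,h(y_2)\,h(x_v)\,h(zz'z'z)\,h(x_v) = h(\alpha)$.

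The main obstacle is the length bookkeeping around $\eta_i$: one must be careful that $\tau(\eta_i) = h(zz'z'z)$ together with $|h(z)| = |h(z')| = 1$ really does force $|\tau(z_i)| = |\tau(z_i')| = 1$ rather than, say, $|\tau(z_i)| = 2, |\tau(z_i')| = 0$; the palindrome-like arrangement $z_iz_i'z_i'z_i$ making a length-$4$ word pins down $2|\tau(z_i)| + 2|\tau(z_i')| = 4$, i.e.\ $|\tau(z_i)| + |\tau(z_i')| = 2$, and then one needs the additional observation that $\tau(z_i)$ is both a prefix and (by the mirrored occurrence) determined so that $|\tau(z_i)| = 1$ — this is where the $z_i z_i' z_i' z_i$ shape, as opposed to $z_i^4$, earns its keep. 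Everything else is the routine substitution-chasing already carried out in Corollary~\ref{corollary:proof1-z-equal-z}.
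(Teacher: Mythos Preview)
Your approach is essentially identical to the paper's: split off $i=\mu+1$ via Corollary~\ref{corollary:proof1-z-equal-z}, and for $i\le\mu$ extend the witnessing morphism $\tau$ to a substitution on all of $\var(\beta)$ by putting $h(x_v)$ on $x_i$, $h(y_1),h(y_2)$ on $y_1',y_2'$, and $\varepsilon$ elsewhere, then compute $\tau'(\beta)=h(\alpha)$. The paper simply asserts $\ell_\beta$-validity without checking the $z_i,z_i'$ lengths, whereas you try to justify this---which is a good instinct.

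There is, however, a small gap in your length argument for $\tau(z_i)$. Your reasoning that $|\tau(z_i)|=|\tau(z_i')|=1$ is ``forced by the mirrored shape $z_iz_i'z_i'z_i$'' only goes through when $h(z)\neq h(z')$: writing $a=h(z)$, $b=h(z')$, the target is $abba$, and $|\tau(z_i)|\in\{0,2\}$ would force $\tau(z_i)$ or $\tau(z_i')$ to equal both $ab$ and $ba$, impossible for $a\neq b$. But if $h(z)=h(z')=a$, then $\tau(z_i)=aa,\ \tau(z_i')=\varepsilon$ is a perfectly good solution to $\tau(\eta_i)=aaaa$, and your $\tau'$ would then violate constraints~1 and~5 of $\ell_\beta$. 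The fix is immediate: in the $i\le\mu$ branch you may additionally assume $h(z)\neq h(z')$, since otherwise Lemma~\ref{lemma:patlen-e-equiv-undec-equal-z-always-in-beta} gives $\pi_{\mu+1}(h)=\mathit{true}$ and Corollary~\ref{corollary:proof1-z-equal-z} already yields $h(\alpha)\in L_E(\beta,\ell_\beta)$. With that one extra line your argument is complete and in fact tighter than the paper's own proof, which glosses over this point entirely.
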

\begin{proof}
    Let $h\in H_{\ell_\alpha}$. If $i = \mu+1$, the result follows immedaitely by Corollary \ref{corollary:proof1-z-equal-z}. Hence, assume there exists some $i\in[\mu]$ such that $\pi_i$ is satisfied. Then we know there exists some morphism $\tau$ such that $\tau(\gamma_i) = h(\alpha_1)$, $\tau(\delta_i) = h(\alpha_2)$ and $\tau(\eta_i) = \tau(z_iz'_iz'_iz_i) = h(zz'z'z)$. Create a substitution $h'$ such that $h'(\gamma_i) = \tau(\gamma_i)$, $h'(\delta_i) = \tau(\delta_i)$ and $h'(\eta_i) = h(zz'z'z)$. As all variables in the predicate $\pi_i$ only appear in that predicate, we can assume such a substitution to exist. Additionally, set $h'(x_i) = h(x_v)$, $h'(y_1') = h(y_1)$ as well as $h'(y_2') = h(y_2)$. For all other variables $x\in\var(\beta)$ we set $h'(x) = \varepsilon$. By the length constraints $\ell_\beta$ we get that $h'$ is $\ell_\beta$-valid. We obtain that
    \begin{align*} 
        h'(\beta) &= h'(y_1'\ \hat{\beta}_1\ \hat{\beta}_2\ \cdots\ \hat{\beta}_{\mu+1}\ y_2'\ \ddot{\beta}_1\ \ddot{\beta}_2\ \cdots\ \ddot{\beta}_{\mu+1}) \\
                 &= h'(y_1')h'(\hat{\beta}_i)h'(y_2')h'(\ddot{\beta}_i) \\
                 &= h(y_1)h'(x_i\gamma_ix_i\delta_ix_i)h(y_2)h'(x_iz_iz_i'z_i'z_ix_i) \\
                 &= h(y_1x_v)h(\alpha_1)h(x_v)h(\alpha_2)h(x_vy_2)h(h_vzz'z'zx_v) \\
                 &= h(y_1x_v\alpha_1x_v\alpha_2x_vy_2x_vzz'z'zx_v) = h(\alpha).
    \end{align*}
    Thus, $h(\alpha)\in L_E(\beta,\ell_\beta)$, which concludes this lemma.
\end{proof}

So, using Lemma \ref{lemma:palen-e-equiv-undec-predicate-language} and following the results in \cite{Freydenberger2010}, we can conclude that if some predicate $\pi_i$ for $i\in[\mu+1]$ is satisfied by some $h\in H_{\ell_\alpha}$, then we cannot have that $h(\alpha_1)$ is the encoding of some valid computation of $A$ while $h(\alpha_2)$ is a word containing only $0$'s that is long enough. 

Showing the other direction, i.e., if $h(\alpha)\in L_E(\beta,\ell_\beta)$ then some predicate $\pi_i$ must be satisfied, would conclude this proof immediately, but is not generally possible and not generally necessary, here. Indeed, it suffices to show that there exist substitutions $h\in H_{\ell_\alpha}$ with a certain structure that force the use of some predicate $\pi_i$, $i\in[\mu]$, if we want $h(\alpha)\in L_E(\beta,\ell_\beta)$. This is then used to construct a counter example to show $L_E(\alpha,\ell_\alpha) \neq L_E(\beta,\ell_\beta)$ if $A$ actually has an accepting computation. The other case, i.e., $L_E(\alpha,\ell_\alpha) = L_E(\beta,\ell_\beta)$ if $A$ has no accepting computation, can then be shown afterwards.

Actually, this specific structure results in words of the form used in \cite{Freydenberger2010} and mentioned above. We begin by proving that this structure for substitutions of $\alpha$, i.e., substituting $x_v$ with $0\#\#\#0$, $z$ with $0$, and $z'$ with $\#$, results in us being forced to satisfy some predicate $\pi_i$, $i\in[\mu]$, to obtain $h(\alpha)\in L_E(\beta,\ell_\beta)$. Hence, the following result provides the other direction of Lemma \ref{lemma:palen-e-equiv-undec-predicate-language} in a restricted case.

\begin{lemma}\label{lemma:patlen-e-equiv-undec-language-to-predicate}
    Let $h\in H_{\ell_\alpha}$ such that $h(x_v) = 0\#\#\#0$, $h(z) = 0$, $h(z') = \#$, and $h(y_1) = h(y_2) = \varepsilon$. Then we know that $h(\alpha) \in L_E(\beta,\ell_\beta)$ if and only if there exists some $i\in[\mu]$ such that $h$ satisfies the predicate $\pi_i$.
\end{lemma}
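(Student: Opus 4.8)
The plan is to establish the two implications separately. The forward direction is immediate: if $\pi_i(h)=true$ for some $i\in[\mu]$, then in particular $\pi_i(h)=true$ for some $i\in[\mu+1]$, so Lemma~\ref{lemma:palen-e-equiv-undec-predicate-language} already yields $h(\alpha)\in L_E(\beta,\ell_\beta)$; here the special shape of $h$ plays no role. For the converse I would first observe that, under the hypotheses $h(x_v)=v:=0\#\#\#0$, $h(z)=0$, $h(z')=\#$, $h(y_1)=h(y_2)=\varepsilon$, the word $h(\alpha)$ takes the rigid form
\[
h(\alpha)=v\;h(\alpha_1)\;v\;h(\alpha_2)\;v\;v\;u\;v,\qquad u:=0\#\#0 .
\]
Assume $h(\alpha)=g(\beta)$ for some $g\in H_{\ell_\beta}$, and apply Lemma~\ref{lemma:patlen-e-equiv-undec-beta-subst-property} to get an index $j\in[\mu+1]$, a block $v'\in\Sigma^5$ and letters $\ta,\tb$ with
\[
g(\beta)=g(y_1'\hat\beta_1\cdots\hat\beta_{j-1})\;v'\;g(\gamma_j)\;v'\;g(\delta_j)\;v'\;g(\hat\beta_{j+1}\cdots\hat\beta_{\mu+1}y_2')\;v'\ta\tb\tb\ta\,v' .
\]

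Next I would compare the two words on their fixed-length suffix of length $14$. For $h(\alpha)$ this suffix is $v\,u\,v$; for $g(\beta)$ it is $v'\,\ta\tb\tb\ta\,v'$ when $j\le\mu$ and $v'\,c^4\,v'$ (where $c$ is the single letter $g(z_{\mu+1})$, recalling $\eta_{\mu+1}=z_{\mu+1}^4$ and $|g(z_{\mu+1})|=1$) when $j=\mu+1$. Since $u=0\#\#0$ is not a power of a single letter, the case $j=\mu+1$ cannot occur, so $j\in[\mu]$; matching the two suffixes letter by letter then gives $v'=v$, $\ta=0$, $\tb=\#$, hence $g(x_j)=v$, $g(z_j)=0$, $g(z_j')=\#$, and in particular $g(\eta_j)=g(z_jz_j'z_j'z_j)=0\#\#0=h(zz'z'z)$. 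Cancelling the common suffix $v\,u\,v$ from $h(\alpha)=g(\beta)$ leaves
\[
v\;h(\alpha_1)\;v\;h(\alpha_2)\;v \;=\; P\;v\;g(\gamma_j)\;v\;g(\delta_j)\;v\;S,
\]
where $P:=g(y_1'\hat\beta_1\cdots\hat\beta_{j-1})$ and $S:=g(\hat\beta_{j+1}\cdots\hat\beta_{\mu+1}y_2')$; since $g(x_k)=\varepsilon$ for all $k\neq j$ by Lemma~\ref{lemma:patlen-e-equiv-undec-beta-length-property}, we moreover have $P=g(y_1')\prod_{k<j}g(\gamma_k)g(\delta_k)$ and symmetrically for $S$.

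The core of the argument, and the step I expect to be the main obstacle, is to read off from this last equation that $P=S=\varepsilon$ and that the three interior $v$-blocks of the two sides sit at the same positions, so that $h(\alpha_1)=g(\gamma_j)$ and $h(\alpha_2)=g(\delta_j)$. Granting this, the morphism $\tau$ obtained by restricting $g$ to $\var(\gamma_j\delta_j\eta_j)$ satisfies $\tau(\gamma_j)=h(\alpha_1)$, $\tau(\delta_j)=h(\alpha_2)$ and $\tau(\eta_j)=h(zz'z'z)$, i.e.\ $\pi_j(h)=true$ with $j\in[\mu]$, which is exactly what the converse asks for. To carry out this alignment I would exploit the combinatorial rigidity deliberately built into the markers: $v=0\#\#\#0$ occurs in $vv$ only at its two obvious positions, $u=0\#\#0$ does not occur inside $vv$, and $v$ does not occur inside $uu$, so the positions at which these blocks can occur inside $h(\alpha)$ are tightly constrained; combining this with a length count on the displayed equation and with properties of the Freydenberger--Reidenbach predicate patterns $\gamma_k,\delta_k$ (so that their images cannot create spurious $v$-blocks in conflicting positions) should pin the factorisation down uniquely. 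In effect, the rigid suffix together with the length constraints $\ell_\beta$ take over the role that the hard-coded separators $\#^4$ play in the construction of \cite{Freydenberger2010}, which is precisely what makes the present terminal-free argument go through.
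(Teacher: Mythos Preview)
Your ``if'' direction and the suffix-matching step (ruling out $j=\mu+1$ and forcing $v'=v$, $g(z_j)=0$, $g(z_j')=\#$) are fine and essentially coincide with the paper. The gap is in the alignment step for the converse. You aim to show that the \emph{specific} index $j$ produced by the decomposition of $g(\beta)$ satisfies $\pi_j(h)=true$, by arguing that in
\[
v\,h(\alpha_1)\,v\,h(\alpha_2)\,v \;=\; P\,v\,g(\gamma_j)\,v\,g(\delta_j)\,v\,S
\]
one must have $P=S=\varepsilon$, $g(\gamma_j)=h(\alpha_1)$, $g(\delta_j)=h(\alpha_2)$. But $h(\alpha_1)$ and $h(\alpha_2)$ are completely unconstrained words; in particular they may themselves contain the block $v=0\#\#\#0$ as a factor, so the left-hand side can have many occurrences of $v$, and the three $v$'s coming from the right-hand side need not land on the three ``outer'' ones of the left. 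For a concrete failure, take $h(\alpha_1)=v$ and $h(\alpha_2)=\varepsilon$: then the left side is $v^4$, and $P=S=\varepsilon$, $g(\gamma_j)=v$, $g(\delta_j)=v$ is a perfectly valid factorisation with $g(\delta_j)\neq h(\alpha_2)$. Your appeal to ``properties of the Freydenberger--Reidenbach predicate patterns $\gamma_k,\delta_k$'' to rule out spurious $v$-blocks in $P,S$ is also misplaced: the variables inside $\gamma_k,\delta_k$ for $k\neq j$ carry no constraints in $\ell_\beta$ other than being mapped into $\Sigma^*$, so nothing prevents their $g$-images from containing $v$.

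The paper's proof avoids this by a case split that invokes the \emph{content} of the first two predicates from \cite{Freydenberger2010}: $\pi_1$ is constructed to be satisfied whenever $h(\alpha_1)$ contains the factor $\#^3$, and $\pi_2$ whenever $h(\alpha_2)$ contains the letter $\#$. So if either happens, some $\pi_i$ with $i\in[\mu]$ is already satisfied and we are done without any alignment. In the remaining case $h(\alpha_1)$ has no $\#^3$ and $h(\alpha_2)$ has no $\#$; then neither $h(\alpha_1)$ nor $h(\alpha_2)$ can contain $v$ (or even overlap a $v$ in a nontrivial way), the occurrences of $v$ on both sides are forced to coincide, and only then does one get $g(\gamma_j)=h(\alpha_1)$, $g(\delta_j)=h(\alpha_2)$. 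You should insert exactly this case distinction before attempting the alignment; the marker combinatorics you list are then sufficient for the residual case.
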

\begin{proof}
    The if direction follows by Lemma \ref{lemma:palen-e-equiv-undec-predicate-language}. For the other direction we know, as $h(\alpha) \in L_E(\beta,\ell_\beta)$, that there exists some substitution $h'\in H_{\ell_\beta}$ such that $h(\alpha) = h'(\beta)$. First, by the construction of $\pi_1$ and $\pi_2$ in \cite[Theorem 2]{Freydenberger2010}, we know that if $h(\alpha_1)$ has an occurrence of the factor $\#^3$ or if $h(\alpha_2)$ has an occurrence of the letter $\#$, that then one of these two predicates is satisfied. Assume that $h$ does not satisfy $\pi_1$ or $\pi_2$. Now, due to Lemma \ref{lemma:patlen-e-equiv-undec-beta-subst-property}, we know that there exists some $i\in[\mu]$ such that
    $$h(\beta) = h'(y_1'\hat{\beta}_1...\hat{\beta}_{i-1})\ v\ h(\gamma_i)\ v\ h(\delta_i)\ v\ h(\hat{\beta}_{i+1}...\hat{\beta}_{\mu+1}y_2')\ v0\#\#0 v$$ for $v = 0\#\#\#0$ as
    $$h(\alpha) = vh(\alpha_1)vh(\alpha_2)v\ v0\#\#0v.$$
    In particular, we obtain that $h(\alpha)$ and $h'(\beta)$ share the suffix $v0\#\#0v$. That leaves us with
    \begin{align*}
        h(\alpha) &= v h(\alpha_1) v h(\alpha_2) v\ v0\#\#0v \\
                  &= h'(y_1'\hat{\beta}_1...\hat{\beta}_{i-1})\ v\ h(\gamma_i)\ v\ h(\delta_i)\ v\ h(\hat{\beta}_{i+1}...\hat{\beta}_{\mu+1}y_2')\ v0\#\#0v \\
                  &= h'(\beta)
    \end{align*}
    As $h(\alpha_1)$ does not contain $\#^3$ and $h(\alpha_2)$ does not contain any $\#$, we know that the occurrences of $v$ must align in the substitutions. Hence, we get that $h(\alpha_1) = h'(\gamma_i)$ and $h(\alpha_2) = h'(\delta_i)$ must be the case. As $h(zz'z'z) = 0\#\#0 = h'(\eta_i)$ is also satisfied, we know that $h$ satisfies the predicate $\pi_i$. This concludes this lemma.
\end{proof}

The previous result covers already a significant number of cases regarding the validity of the reduction. However, we might have that $h(z) = \#$ and $h(z') = 0$, switching the substitution of $z$ and $z'$, so to speak. Hence, before using the previous Lemmas to show the correctness of the reduction, we continue with that specific case. Indeed, in the case of $h(z) = \#$ and $h(z') = 0$, we can just consider encodings of computations of $A$ where the letters of the alphabet are swapped, using $0$'s to provide structural boundries and $\#$'s to encode the state and the counter values. Hence, for the rest of this section, assume $\ValCP(A)$ to be the set of encodings of valid computations of $A$ where the occurrences of the letters $0$ and $\#$ are swapped. We immediately obtain the following by Proposition \ref{prop:proof1sourceresult}.

\begin{corollary}\label{corollary:patlen-e-equiv-undec-old-predicates-permuted}
    Let $h\in H_{\ell_\alpha}$ with $h(z) = \#$, and $h(z') = 0$. Then, $h$ satisfies no predicate $\pi_1$ to $\pi_\mu$ if and only if $h(\alpha_1)\in\ValCP(A)$ and $h(\alpha_2)\in\{\#\}^*$ with $|h(\alpha_2)| > k$ for $k$ being the length of the three longest non-overlapping factors of $\#$'s in $h(\alpha_1)$ summed up.
\end{corollary}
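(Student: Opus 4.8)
The plan is to mirror the structure of Proposition~\ref{prop:proof1sourceresult} exactly, observing that the whole construction of the predicates $\pi_1$ to $\pi_\mu$ in \cite{Freydenberger2010} is symmetric under the renaming $0 \leftrightarrow \#$ of the terminal alphabet $\Sigma = \{0,\#\}$. Indeed, the predicates only ever ``see'' the substitution $h$ through the three words $h(\alpha_1)$, $h(\alpha_2)$, and the length-$4$ word $h(zz'z'z)$; they do not touch the structural variables $x_v$, $y_1$, $y_2$, and they know nothing about which of the two letters plays the role of ``boundary symbol'' and which plays the role of ``counter symbol'' except through the content of $h(zz'z'z)$. In Proposition~\ref{prop:proof1sourceresult} the hypothesis $h(z)=0$, $h(z')=\#$ fixes $h(zz'z'z)=0\#\#0$, which is precisely the word $u$ on which the original argument of \cite{Freydenberger2010} operates.

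First I would make precise the renaming: let $\rho:\Sigma^*\to\Sigma^*$ be the morphism induced by $\rho(0)=\#$, $\rho(\#)=0$; it is an involutive automorphism of $\Sigma^*$, and by definition $\ValCP(A) = \rho(\ValC(A))$. Next I would note that a substitution $h\in H_{\ell_\alpha}$ with $h(z)=\#$, $h(z')=0$ satisfies $h(zz'z'z) = \#00\# = \rho(0\#\#0)$. Compose: given such an $h$, define $h'' := \rho \circ h$. Because $\rho$ is a length-preserving automorphism, $h''$ is again $\ell_\alpha$-valid, $h''(z)=0$, $h''(z')=\#$, $h''(\alpha_1)=\rho(h(\alpha_1))$, $h''(\alpha_2)=\rho(h(\alpha_2))$, and $h$ satisfies a predicate $\pi_i$ (via a witness morphism $\tau$) if and only if $h''$ satisfies $\pi_i$ (via the witness $\rho\circ\tau$) — the predicate's defining equations $\tau(\gamma_i)=h(\alpha_1)$, $\tau(\delta_i)=h(\alpha_2)$, $\tau(\eta_i)=h(zz'z'z)$ are preserved under applying $\rho$ to both sides, since $\rho$ is injective.

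Then I would apply Proposition~\ref{prop:proof1sourceresult} to $h''$: $h''$ satisfies no $\pi_i$ for $i\in[\mu]$ iff $h''(\alpha_1)\in\ValC(A)$ and $h''(\alpha_2)\in\{0\}^{\geq n}$ with $n$ the sum of the lengths of the three longest non-overlapping factors of $0$'s in $h''(\alpha_1)$. Translating back through $\rho$: $h''(\alpha_1)\in\ValC(A)$ iff $h(\alpha_1)\in\ValCP(A)$; $h''(\alpha_2)\in\{0\}^*$ iff $h(\alpha_2)\in\{\#\}^*$; and since $\rho$ maps maximal factors of $0$'s bijectively onto maximal factors of $\#$'s with the same lengths, the integer $n$ computed from $h''(\alpha_1)$ equals the integer $k$ computed from $h(\alpha_1)$ as the sum of the lengths of the three longest non-overlapping factors of $\#$'s. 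The length condition $|h''(\alpha_2)| \geq n$ becomes $|h(\alpha_2)| \geq k$; one should double-check whether the intended inequality in the statement, $|h(\alpha_2)| > k$, matches the ``$\geq n$'' of Proposition~\ref{prop:proof1sourceresult} up to the exact index conventions used for the encodings in \cite{Freydenberger2010}, and fix the off-by-one if needed. Combining the two chains of equivalences gives exactly the claim.

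The only real obstacle is a bookkeeping one rather than a conceptual one: I must be sure that the predicates $\pi_1,\dots,\pi_\mu$ were genuinely imported from \cite{Freydenberger2010} \emph{symbol-agnostically}, i.e.\ that their definitions refer only to the abstract variables $z_i, z_i'$ and not, say, to hard-coded occurrences of $0$ or $\#$ in $\gamma_i$ or $\delta_i$ that would break the $\rho$-symmetry. The discussion following Proposition~\ref{prop:proof1sourceresult} asserts precisely this — ``using the same $z_i$ and $z_i'$ variables to encode the structural properties \ldots in the case of $z$ being substituted to $0$ and $z_i'$ being substituted to $\#$, \emph{or vice versa}'' — so the symmetry is built in by design. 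Hence the corollary follows essentially for free from Proposition~\ref{prop:proof1sourceresult} by the automorphism $\rho$, and no new predicate construction is required.
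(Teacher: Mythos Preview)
Your proposal is correct and follows essentially the same approach as the paper: both argue that the predicates $\pi_1,\dots,\pi_\mu$ are terminal-free and see the letters $0$ and $\#$ only through $h(z)$ and $h(z')$, so the involution swapping $0\leftrightarrow\#$ transports Proposition~\ref{prop:proof1sourceresult} verbatim to the swapped setting. Your version is somewhat more explicit in naming the automorphism $\rho$ and the composed substitution $h''$, and you rightly flag the $\geq$ versus $>$ discrepancy between Proposition~\ref{prop:proof1sourceresult} and the corollary's statement, which the paper glosses over.
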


Indeed, by Proposition \ref{prop:proof1sourceresult}, we know that this result already holds for $h(z) = 0$ and $h(z') = \#$ regarding $\ValC(A)$. As each predicate $\pi_i$, $i\in[\mu]$, is terminal-free and as the only terminals fixed are the ones given by $h(z)$ and $h(z')$, we know that switching this part of the substitution will switch the letters of the structural constraints/boundaries defined by each predicate, i.e., each position of all fixed $0$ and $\#$ occurrences. As no other terminals are used to obtain any structural property for $\pi_1$ to $\pi_\mu$, we know that we can obtain the same property of Proposition \ref{prop:proof1sourceresult} regarding $\ValCP(A)$ instead of $\ValC(A)$ if we swap each of those positions, i.e., by setting $h(z) = \#$ and $h(z') = 0$.

\textbf{Proving the reduction:} Finally, using all previously shown properties, we can prove the reduction in a concise manner.

$(\Rightarrow):$ We begin with the case that $\ValC(A) \neq \emptyset$. Then, a counter example to show $L_E(\alpha,\ell_\alpha)\neq L_E(\beta,\ell_\beta)$ suffices. Let $h\in H_{\ell_\alpha}$ such that $h(x_v) = 0\#\#\#0$, $h(z) = 0$, $h(z') = \#$, $h(y_1) = \varepsilon = h(y_2)$, $h(\alpha_1)\in\ValC(A)$, and $h(\alpha_2)\in\{0\}^*$ such that $|h(\alpha_2)|$ is longer than the three longest non-overlapping factors of $0$'s in $h(\alpha_1)$. By the definition of $\ell_\alpha$, we know that $h$ is $\ell_\alpha$-valid. Then, we obtain by Proposition \ref{prop:proof1sourceresult} that $h$ satisfies no predicate $\pi_1$ to $\pi_\mu$. Also, as $h(z) \neq h(z')$, we know by Lemma \ref{lemma:patlen-e-equiv-undec-equal-z-always-in-beta} that $h$ does not satisfy $\pi_{\mu+1}$. By the choice of $h$ and by Lemma \ref{lemma:patlen-e-equiv-undec-language-to-predicate}, we immediately get that $h(\alpha)\notin L_E(\beta,\ell_\beta)$, and hence $L_E(\alpha,\ell_\alpha) \neq L_E(\beta,\ell_\beta)$.

$(\Leftarrow):$ Now, assume $\ValC(A) = \emptyset$. By Proposition \ref{prop:proof1-langbeta-in-langalpha}, we know that $L_E(\beta,\ell_\beta) \subseteq L_E(\alpha,\ell_\alpha)$. So, we continue by showing $L_E(\alpha,\ell_\alpha) \subseteq L_E(\beta,\ell_\beta)$. Let $h\in H_{\ell_\alpha}$. We consider three cases, two of which follow with analogous arguments. First, if $h(z) = h(z')$, then we know by Lemma \ref{lemma:patlen-e-equiv-undec-equal-z-always-in-beta} that $h$ always satisfies $\pi_{\mu+1}$, hence, by Lemma \ref{lemma:palen-e-equiv-undec-predicate-language}, we know that $h(\alpha)\in L_E(\beta,\ell_\beta)$. Next, assume $h(z) = 0$ and $h(z') = \#$. As $\ValC(A) = \emptyset$, we know by Proposition \ref{prop:proof1sourceresult} that $h$ must satisfy some predicate $\pi_i$ for $i\in[\mu]$. Thus, by Lemma \ref{lemma:palen-e-equiv-undec-predicate-language}, we know that $h(\alpha)\in L_E(\beta,\ell_\beta)$. Now, assume $h(z) = \#$ and $h(z') = 0$. As $\ValC(A) = \emptyset$, we also know that $\ValCP(A) = \emptyset$ must be the case, as for each word in $\ValC(A)$, a corresponding word in $\ValCP(A)$ must exist and vice versa. Hence, by Corollary \ref{corollary:patlen-e-equiv-undec-old-predicates-permuted}, we know that $h$ must satisfy some predicate $\pi_i$ for $i\in[\mu]$. So, again, by Lemma \ref{lemma:palen-e-equiv-undec-predicate-language}, we know that $h(\alpha)\in L_E(\beta,\ell_\beta)$. This covers all cases, so we have $L_E(\alpha,\ell_\alpha) = L_E(\beta,\ell_\beta)$, concluding the binary case of the reduction.

\textbf{Larger alphabets: }To extend this proof to arbitrary larger alphabets $\Sigma$ which we fix w.l.o.g. by $\Sigma = \{0,\#,\ta_1,\ta_2,...,\ta_\sigma\}$ for $\sigma \geq 1$, we can create adapted patterns with length constraints $(\alpha_\sigma,\ell_{\alpha_\sigma})$ and $(\beta_\sigma,\ell_{\beta_\sigma})$ in the following manner, also similar to the construction in \cite{Freydenberger2010}. First, we define $\alpha_\sigma$ by
$$ \alpha_\sigma = y_1\ x_v\alpha_1x_v\alpha_2x_v\ y_2\ x_v\ z\ (z')^2\ (z_{\ta_1})^2\ (z_{\ta_2})^2\ ...\ (z_{\ta_\sigma})^2\ z\ x_v .$$
Notice, that only the parts containing the $z$'s is changed. We construct the adapted length constraints $\ell_{\alpha_\sigma}$ in the following manner:
\begin{itemize}
    \item $x_v = 5$
    \item $z = 1$
    \item $z' = 1$
    \item $z_{\ta_i} = 1$ for all $i\in[\sigma]$
\end{itemize}
An immediate observation that can be made, as its done in \cite{Freydenberger2010}, is the following:

\begin{observation}\cite{Freydenberger2010}
    Let $n \geq 1$, $x_1,...,x_n\in X$, and $\ta_1,\ta_2,...,\ta_n\in\Sigma$. Consider the pattern
    $$ p = x_1\ x_2x_2\ x_3x_3\ ...\ x_nx_n\ x_1.$$
    Then, for each homomorphism $h$ with $h(p) = \ta_1\ \ta_2\ta_2\ \ta_3\ta_3\ ...\ \ta_n\ta_n\ \ta_1$ we have that $h(x_i) = \ta_i$ for all $i\in[n]$.
\end{observation}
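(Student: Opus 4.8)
The plan is a direct positional argument on the target word $w:=\ta_1\,\ta_2\ta_2\,\cdots\,\ta_n\ta_n\,\ta_1$, which has length $2n$, matching the $2n$ variable occurrences in $p=x_1\,x_2x_2\cdots x_nx_n\,x_1$. Throughout I would use that $\ta_1,\dots,\ta_n$ are pairwise distinct; this is the situation in which the observation is applied in the construction (there the $\ta_i$ range over distinct symbols of $\Sigma$), and it is genuinely needed, since for erasing homomorphisms with a repeated letter the conclusion fails (take $n=2$ with $\ta_1=\ta_2$, and $h(x_1)=\ta_1\ta_1$, $h(x_2)=\varepsilon$). For non-erasing $h$ the statement is instead immediate from the length count below.

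First I would classify the squares occurring as factors of $w$. By distinctness, each letter occurs in $w$ only at positions differing by at most $1$ (the two copies of $\ta_i$ for $2\le i\le n$), with the single exception of $\ta_1$, which occurs exactly at positions $1$ and $2n$. A square $uu$ with $|u|=m$ forces some letter to reoccur at distance exactly $m$; for $m\ge 2$ the only possibility is $\ta_1$ with $m=2n-1$, but then $|uu|=4n-2>2n$ for $n\ge 2$, so $uu$ does not fit inside $w$. Hence the only nonempty square factors of $w$ are the doubled blocks $\ta_i\ta_i$ for $i\in\{2,\dots,n\}$ (each of length $2$); the degenerate case $n=1$ is handled directly since then $w=\ta_1\ta_1$.

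Writing $\ell_i:=|h(x_i)|$, the identity $h(p)=h(x_1)\,h(x_2)h(x_2)\cdots h(x_n)h(x_n)\,h(x_1)=w$ exhibits each $h(x_i)^2$ (for $i\ge 2$) as a square factor of $w$, so $\ell_i\le 1$ by the previous step, while $2\sum_{i=1}^{n}\ell_i=|w|=2n$ gives $\sum_i\ell_i=n$. Next, $h(x_1)$ is both the first and the last factor of $h(p)$, hence a prefix and a suffix of $w$; comparing first letters yields $\ta_1=w[1]=w[2n-\ell_1+1]$, so $2n-\ell_1+1\in\{1,2n\}$, i.e.\ $\ell_1\in\{2n,1\}$; the value $2n$ is impossible because $h(p)$ contains two copies of $h(x_1)$ and $2\cdot 2n>2n$, so $\ell_1=1$. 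Combined with $\ell_i\le 1$ for $i\ge 2$ and $\sum_i\ell_i=n$, every $\ell_i=1$. All factors $h(x_i)$ are therefore single letters occupying consecutive positions, and reading them off gives $h(x_1)=w[1]=\ta_1$ and $h(x_i)=w[2i-2]=\ta_i$ for $i\ge 2$, which is the claim.

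The main obstacle is ruling out erasure — that some variable (in particular $x_1$, or all of $x_2,\dots,x_n$ simultaneously) is mapped to $\varepsilon$ while another variable absorbs the extra length. The border argument pinning $\ell_1=1$ together with the square-factor classification closes this off, but both steps rest on $\ta_1,\dots,\ta_n$ being pairwise distinct; making that hypothesis explicit (or, alternatively, restricting to non-erasing $h$) is the only care the statement really requires, and it is satisfied in every use of the observation in the construction.
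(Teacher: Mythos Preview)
The paper does not prove this observation; it is merely stated and attributed to \cite{Freydenberger2010}, so there is no proof in the paper to compare against. Your argument is correct and self-contained, and your remark that the letters $\ta_1,\dots,\ta_n$ must be taken pairwise distinct (or else $h$ non-erasing) is a genuine clarification of the stated hypothesis; your counterexample for $n=2$ with $\ta_1=\ta_2$ is valid, and the paper indeed only invokes the observation in the distinct-letter setting.

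One minor point: in the border step you compare first letters to conclude $w[2n-\ell_1+1]=\ta_1$, which tacitly assumes $\ell_1\ge 1$. This is harmless, since from $\ell_i\le 1$ for $i\ge 2$ and $\sum_i\ell_i=n$ you already have $\ell_1\ge n-(n-1)=1$ (the case $n=1$ being handled separately), but it is worth making explicit. With that filled in, the square-factor classification together with the border/length count gives a clean and complete proof.
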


Now, we explain the changes necessary to create $\beta_\sigma$ and $\ell_{\beta_\sigma}$. First, for each $\eta_i$ with $i\in[\mu]$, we change its form from $\eta_i = z_iz_i'z_i'z_i$ to
$$\eta_i = z_i\ z_i'z_i'\ z_{i,\ta_1}z_{i,\ta_1}\ ...\ z_{i,\ta_\sigma}z_{i,\ta_\sigma}\ z_i.$$
Now, instead of adding $\eta_{\mu+1}$ as before, we add a series of $n\in\N$ many predicates that cover the cases that two different $z$ variables are substituted equally in $\alpha_\sigma$. In each of these newly defined predicates $\pi_i$, we would set $\gamma_i$ and $\delta_i$ just so single, new, and independent variables to obtain any substitution $h(\alpha_1)$ and $h(\alpha_2)$. The number of those additional predicates becomes significantly for alphabets of larger size, hence we only give an example here. 

Consider the case that $h(z') = h(z_{\ta_2})$. To handle this case, we would add a predicate $\pi_{\mu+k}$ for some $k\in[n]$ such that
$$ \eta_{\mu+k} = z_{\mu+k}\ (z_{\mu+k}')^2\ (z_{\mu+k,\ta_1})^2\ (z_{\mu+k}')^2\ (z_{\mu+k,\ta_3})^2\ ...\ (z_{\mu+k,\ta_\sigma})^2\ z_{\mu+k}$$
and $\gamma_{\mu+k} = y_{\mu+k}$ as well as $\delta_{\mu+k} = y_{\mu+k}'$ where each new variable is new and independent from other parts of the pattern. In particular, notice, that in this example we have no occurrence of the variable $z_{\mu+k,\ta_2}$ and instead four occurrences of $z_{\mu+k}'$. The idea is similar to the previous construction of $\eta_{\mu+1}$ in the binary case to handle the case of $h(z) = h(z')$.

We observe, that if for some substitution $h\in H_{\ell_{\alpha_\sigma}}$ any two different variables with the base name $z$ are substituted equally, then there exists one predicate between $\pi_{\mu+1}$ to $\pi_{\mu+n}$ that has the same variables equalled out, hence resulting in that predicate being satisfied, as $\gamma$ and $\delta$ are simply just independent variables in that case.

Additionally, we add $\sigma$ many predicates $\pi_{\mu+n+1}$ to $\pi_{\mu+n+\sigma}$ that handle the occurrence of any of the letters $h(z_{\ta_1})$ to $h(z_{\ta_\sigma})$ in $h(\alpha_1)$ or $h(\alpha_2)$. These predicates work exactly as in \cite{Freydenberger2010}, just with the same addition as before that $h(z)$ and $h(z')$ determine the letters that need to be used for the encodings of valid computations and for the substitutions ob $x_v$. All other arguments stay the same. Hence, we also conclude for larger alphabets that the equivalence problem for erasing pattern languages with length constraints is undecidable. This concludes the proof of Theorem \ref{theorem:patlang-len-erasing-equivalence-undecidable}.
\newpage

\section{The Undecidability of the Inclusion Problem for Terminal-Free Non-Erasing Pattern Languages with Length Constraints}
\label{section:proof2}
This section is dedicated to show Theorem \ref{theorem:patlang-len-nonerasing-inclusion-tf-undecidability}. The proof is based on the construction of the proof of undecidability of the inclusion problem of general non-erasing pattern languages without any constraints done by Bremer and Freydenberger in \cite{BREMER201215} and uses slight adaptations and length constraints to allow for this construction to work in the terminal-free case. The undecidability of the binary case immdediately follows by the result from Saarela \cite{saarela:LIPIcs.ICALP.2020.140}. To aid understanding the proof for general alphabets of size greater or equal to $3$ for patterns with length constraints, we begin by providing an alternative approach for such patterns based on the construction in \cite{BREMER201215} for binary alphabets and extend to general alphabet sizes after that.

We begin with the binary case and assume that $\Sigma = \{0,\#\}$. Similar to \cite{BREMER201215}, we reduce the emptiness problem of a specific universal Turing machine $U$ to the inclusion problem of non-erasing pattern languages, but restricted to terminal-free patterns. This section shows that for two given terminal free patterns with length constraints $(\alpha,\ell_\alpha)$ and $(\beta,\ell_\beta)$ we have $\ValC(U) = \emptyset$ if and only if $L_{NE}(\alpha,\ell_\alpha) \subseteq L_{NE}(\beta,\ell_\beta)$.

We begin by giving the adapted versions of the patterns and start with the pattern $(\beta,\ell_\beta)$ first, as the construction of $(\alpha,\ell_\alpha)$ is based on it. We define $\beta$ by
$$ \beta = x_0\ x_a x_b\ x_\#^5\ x_ax_1...x_{\mu+1}x_b\ x_\#^5\ r_1\hat{\beta}_1r_2\hat{\beta}_2\ ...\ r_{\mu+1}\hat{\beta}_{\mu+1}r_{\mu+2} $$
for new and independent variables $x_0, x_\#, x_a, x_b, x_1, x_2,...,x_{\mu+1},r_1,r_2,...,r_{\mu+2}\in X$ and terminal-free patterns $\hat{\beta}_1,\hat{\beta_2},...,\hat{\beta}_{\mu+1}\in X^+$. Notice that $\mu\in\N$ is a number given by the construction in \cite{BREMER201215}. For all $i\in[\mu+1]$ we say that
$$ \hat{\beta}_i = x_0x_i^4x_0\ \gamma_i\ x_0x_i^4x_0\ \delta_i\ x_0x_i^4x_0$$
for terminal-free patterns $\gamma_i,\delta_i\in X^+$ that are defined later. The length constraints $\ell_\beta$ are defined by the system
\begin{itemize}
    \item $x_0 = 1$
    \item $x_\# = 1$
    \item $x_i = 1$ for all $i\in[\mu+1]$
    \item $x_a + x_b = \mu+2$
\end{itemize}
Let $\tau : (\Sigma\times X)^* \rightarrow X^*$ be a homomorphism defined by $\tau(0) = x_0$, $\tau(\#) = x_\#$, and $\tau(x) = x$ for all $x\in X$. Then we say that for all $i\in[\mu]$ we have $\gamma_i = \tau(\gamma_i')$ and $\delta_i = \tau(\delta_i')$ for $\gamma_i'$ and $\delta_i'$ being the patterns used in the construction in \cite{BREMER201215}. The specific construction of each such pattern is not relevant here and for details we refer to \cite{BREMER201215}. What is important is that we use the same patterns as in \cite{BREMER201215} but map them to terminal free patterns that have the variable $x_0$ instead of each occurrence of a terminal symbol $0$ and the variable $x_\#$ instead of each occurrence of $\#$. For $\mu+1$ we define
\begin{align*}
    \gamma_{\mu+1} &= x_0^{|I|+5} y_{\mu+1} \\
    \delta_{\mu+1} &= y_{\mu+1}'
\end{align*}
for new and independent variables $y_{\mu+1}, y_{\mu+1}'\in X$ and $I$ being the encoding of the initial configuration of $U$ as in \cite{BREMER201215}. This will be used to obtain all substitutions $h(\alpha)$ where we have $h(x_0) = h(x_\#)$ in $L_{NE}(\beta,\ell_\beta)$.

Now, we construct $(\alpha,\ell_\alpha)$. $\alpha$ is defined by
$$ \alpha = x_0^{\mu+3}\ x_\#^5\ x_0^{\mu+1}x_\#x_0^{\mu+1}\ x_\#^5\ tv\ x_0\alpha_1x_0\ v\ x_0\alpha_2x_0\ vt$$
for some patterns $\alpha_1$ and $\alpha_2$, $x_0$ and $x_\#$ defined as in $\beta$, $v = x_0x_\#x_\#x_\#x_0$ and $t = \psi(r_1\hat{\beta}_1r_2...r_{\mu+1}\hat{\beta}_{\mu+1}r_{\mu+2})$ with $\psi : X^* \rightarrow  X^*$ defined by $\psi(x_0) = x_0$, $\psi(x_\#) = x_\#$, and $\psi(x) = x_0$ for all other $x\in\var(\beta)$. We define $\alpha_1$ and $\alpha_2$ as $\alpha_1 = \tau(\alpha_1')$ and $\alpha_2 = \tau(\alpha_2')$ where $\alpha_1'$ and $\alpha_2'$ are the patterns given in \cite[Section 4.4]{BREMER201215} for the second case. 

The most important fact we take from there is that $\alpha_1' = \#\#I\#\#x\#0^60^10\#\#$ for some new and independent variable $x\in X$. By that, we see that $\tau(\alpha_1')$ starts with $x_\#x_\#\tau(I)x_\#x_\#$ which is used later on in this adaptation of the proof. The length constraints $\ell_\alpha$ are defined by the equations $x_0 = 1$ and $x_\# = 1$. Notice that the $x$ in $\alpha_1$ has no length constraint. With that, we can now show the correctness of the reduction for the binary case.

First, assume that $\ValC(U) \neq \emptyset$. Let $h\in H_{\ell_\alpha}$ be some substitution such that $h(x_0) = 0$, $h(x_\#) = \#$.
We know by the construction of $(\beta,\ell_\beta)$ that for some substitution $h'\in H_{\ell_\beta}$ we can have $h(\alpha) = h'(\beta)$ if and only if $h'(x_0) = h(x_0) = 0$ and $h'(x_\#) = h(x_\#) = \#$ by the fact that $\alpha$ starts with $x_0^{\mu+3}x_\#^5$ and $\beta$ starts with $x_0x_ax_bx_\#^5$ and $\ell_\beta$ as well as $\ell_\alpha$ require $h(x_0^{\mu+3}x_\#^5) = 0^{\mu+3}\#^5 = h'(x_0x_ax_bx_\#^5)$ with $|h(x_0)| = 1 = |h'{x_0}|$ and $|h'(x_ax_b)| = \mu+2$. Now, we know by \cite{BREMER201215} that $h(\alpha)\in L_{NE}(\beta,\ell_\beta)$ if and only if there exists some $i\in[\mu+1]$ such that $\tau'(\gamma_i) = h(0\alpha_10)$ and $\tau'(\delta_i) = h(0\alpha_20)$ for some morphism $\tau'$ with, in our case, $\tau'(x_0) = 0$ and $\tau'(x_\#) = \#$. For the case of $\gamma_{\mu+1}$ and $\delta_{\mu+1}$ we know that we cannot use those as $\tau'(\gamma_{\mu+1})$ always starts with $|I|+5$ many $0's$ but $h(\alpha)$ contains occurrences of $\#$, because $h(\alpha)$ starts with $h(x_\#x_\#I\#\#) = \#\#I\#\#$. So, we can restrict ourselves to $i\in[\mu]$. Then, by the previous restrictions and the choice of $\hat{\beta}_1$ to $\hat{\beta}_\mu$, we know by \cite{BREMER201215} that there exists some substitution $h'\in H_{\ell_\beta}$ with $h'(\beta) = h(\alpha)$ if and only if $h(\alpha_1)\notin \ValC(U)$ or $h(\alpha_2)\in\{0\}^*$ such that $|h(\alpha_2)|$ is short enough. Hence, as $\ValC(U) \neq \emptyset$, we can choose $h(\alpha_1)\in V$ and $h(\alpha_2)\in\{0\}^*$ long enough such that $h(\alpha,\ell_\alpha)\notin L_{NE}(\beta,\ell_\beta)$ and by that $L_{NE}(\alpha,\ell_\alpha)\not\subset L_{NE}(\beta,\ell_\beta)$.

Now, for the other direction, assume $\ValC(U) = \emptyset$. Let $h\in H_{\ell_\alpha}$. We consider three cases. First, assume $h(x_0) = h(x_\#)$. Then, we know that $h(\alpha_1)$ starts with $h(x_0)^{|I|+5}$. So, we can select $h'$ such that $h'(\gamma_{\mu+1}) = h(0\alpha_10)$, $h'(\delta_{\mu+1}) = h(0\alpha_20)$, $h'(r_{\mu+1}) = h(\psi(r_{\mu+1}\hat{\beta}_{\mu+1}r_{\mu+2}))$, $h'(r_{\mu+2}) = h(\psi(r_1\hat{\beta}_1...\hat{\beta}_{\mu+1}r_{\mu+2}))$, and $h'(x) = h(x_0)$ for any other variable $x\in\var(\beta)$. Notice that we then have $h'(r_1\hat{\beta}_1...r_{\mu+1}) = h'(\psi(r_1\hat{\beta}_1...\hat{\beta}_{\mu+1}r_{\mu+2})) = h'(t) = h(t)$, $h'(r_{\mu+2}) = h'(t) = h(t)$. We obtain 
\begin{align*}
    h(\alpha) &= h(x_0^{\mu+3}\ x_\#^5\ x_0^{\mu+1}x_\#x_0^{\mu+1}\ x_\#^5\ tv\ x_0\alpha_1x_0\ v\ x_0\alpha_2x_0\ vt) \\
              &= h(x_0^{3\mu+15})h(t)h(v)h(x_0\alpha_1x_0)h(v)h(x_0\alpha_2x_0)h(v)h(t) \\
              &= h'(x_0^{3\mu+15})h(t)h'(x_0x_\#^4x_\#)h'(\gamma_{\mu+1})h'(x_0x_\#^4x_\#)h(\delta_{\mu+1})h'(x_0x_\#^4x_\#)h(t) \\
              &= h'(x_0^{3\mu+15})h(t)h'(\hat{\beta}_{\mu+1})h(t) \\
              &= h'(x_0x_ax_bx_\#^5x_ax_1...x_{\mu+1}x_bx_\#^5)h'(r_1\hat{\beta}_1...r_{\mu+1})h'(\hat{\beta}_{\mu+1})h'(r_{\mu+2}) \\
              &= h'(\beta)
\end{align*}
and by that $h(\alpha) \in L_{NE}(\beta,\ell_\beta)$. 
For the second case, assume $h(x_0) = 0$ and $h(x_\#) = \#$. Then, notice the argumentation from the other direction of this proof. As $\ValC(U) = \emptyset$, we know by \cite{BREMER201215} that there always exists some $i\in[\mu]$ and $h'\in H_{\ell_\beta}$ such that $h'(\gamma_i) = h(0\alpha_10)$ and $h'(\delta_i) = h(0\alpha_20)$ if we set $h'(x_0) = 0$, $h'(x_\#) = \#$. By setting $h'(x_i) = \#$, $h'(x_a) =  0^{\mu-i+2}$,$h'(x_\beta) = 0^{i}$ , $h'(r_i) = h(\psi(r_i\hat{\beta}_i...r_{\mu+1}))$, $h'(r_{i+1}) = h(\psi(r_1\hat{\beta}_1...r_{i+1}))$, and $h'(x) = 0$ for all other $x\in\var(\beta)$, we obtain
\begin{align*}
    h(\alpha) &= h(x_0^{\mu+3}\ x_\#^5\ x_0^{\mu+1}x_\#x_0^{\mu+1}\ x_\#^5\ tv\ x_0\alpha_1x_0\ v\ x_0\alpha_2x_0\ vt) \\
              &= 00^{\mu+2}\ \#^5\ 0^{\mu+1}\#0^{\mu+1}\ \#^5\ h(t)\ 0\#^40\ h(0\alpha_10)\ 0\#^40\ h(0\alpha_20)\ 0\#^40\ h(t) \\
              &= 00^{\mu-i+2}0^{i}\ h'(x_\#^5)\ 0^{\mu-i+2}0^{i-1}\#0^{\mu+1-i}0^{i}\ h'(x_\#^5)\ h(t) h'(\hat{\beta}_i) h(t) \\
              &= h'(x_0x_ax_b\ x_\#^5\ x_ax_1...x_{i-1}x_ix_{i+1}...x_{\mu+1}x_b\ x_\#^5)\ h(t) h'(\hat{\beta}_i) h(t) \\
              &= h'(x_0x_ax_bx_\#^5x_ax_1...x_{i-1}x_ix_{i+1}...x_{\mu+1}x_bx_\#^5)h'(r_1\hat{\beta}_1...r_{i}) h'(\hat{\beta}_i) h'(r_{i+1}\hat{\beta}_{i+1}...r_{\mu+2}) \\
              &= h'(\beta)
\end{align*}
as $h(t) = h(\psi(r_1\hat{\beta}_1...r_{\mu+2})) = h'(r_1\hat{\beta}_1...r_{i}) = h'(r_{i+1}\hat{\beta}_{i+1}...r_{\mu+2})$ by the definition of $h'$. By that, in this case, we also have $h(\alpha)\in L_{NE}(\beta,\ell_\beta)$.
Now, for the final case, we notice that if $h(x_0) = \#$ and $h(x_\#) = 0$, then all previously proven properties for the case of $h(x_0) = 0$ and $h(x_\#) = \#$ work exactly in the same way, just with the letters $0$ and $\#$ flipped. As $\ValC(U) = \emptyset$, there cannot exist any encoding of a valid computation, even with occurrences of $0$'s and $\#$'s changed. Hence, in this case, we also have $h(\alpha)\in L_{NE}(\beta,\ell_\beta)$ by an analogous argument as in the second case. This concludes all cases of this direction, thus $L_{NE}(\alpha,\ell_\alpha) \subseteq L_{NE}(\beta,\ell_\beta)$.

\textbf{Larger Alphabets:} Now, for larger alphabets, assume w.l.o.g.  $\Sigma = \{0,\#,\ta_1,...,\ta_\sigma\}$ for some $\sigma\in\N$. In contrast to the construction in \cite{BREMER201215}, the adaptations we have to make to the original patterns $(\alpha,\ell_\alpha)$ and $(\beta,\ell_\beta)$ are now more substantial. The basic idea of the construction stays the same. For each new $\ta_i$, we introduce a new variable $x_{\ta_i}$. We adapt the patterns in such a way that for each valid substitution $h\in H_{\ell_\alpha}$ that maps two of those variables $x_{\ta_i}$ and $x_{\ta_j}$, representing single letters, to the same letter, i.e., $h(x_{\ta_i}) = h(x_{\ta_j})$, then $h(\alpha)$ is always in $L_{NE}(\beta,\ell_\beta)$. In addition to that, it is ensured that if any of the letters obtained by $h(x_{\ta_1})$ to $h(s_{\ta_\sigma})$ occur in $h(\alpha_1)$ or $h(\alpha_2)$, then $h(\alpha)$ can also always be found in $L_{NE}(\beta,\ell_\beta)$.

What follows is the adapted constructions of the two patterns with length constraints $(\alpha',\ell_{\alpha'})$ and $(\beta',\ell_{\beta'})$ and then the argumentation why the reduction also works for those. First, we redefine each of the patterns $\hat{\beta}_1$ to $\hat{\beta}_{\mu+1}$ by redefining the $\gamma_i$'s in those. For each $i\in[\mu+1]$ we redefine $\hat{\beta}_i$ by
$$ \hat{\beta}_i = x_0x_\#^4x_0\ \mathbf{x_0x_\#x_{\ta_1}x_{\ta_2}...x_{\ta_\sigma}\gamma_i}\ x_0x_\#^4x_0\ \delta_i\ x_0x_\#^4x_0$$
for new and independent variables $x_{\ta_1}$ to $x_{\ta_\sigma}$ (the letters in bold font represent the new $\gamma_i'$). Hence, for each $\gamma_i$ we just added the prefix $x_0x_\#x_{\ta_1}...x_{\ta_\sigma}$, but left everything else the same. Additionally, instead of having $\mu+1$ many patterns $\hat{\beta}_i$, we will now have $\mu+1+2\sigma+n$ for some $n\in\N$ to be defined later, hence resulting in the variables $r_i$ and $x_i$ to range between $r_1$ to $r_{\mu+1+2\sigma+n+1}$ (respectively $x_1$ to $x_{\mu+1+2\sigma+n})$ instead of $r_1$ to $r_{\mu+1}$ (respectively $x_1$ to $x_{\mu+1}$). The rest of the pattern $\beta$ is left untouched. Now, for this adapted definition of $\beta$, we can now define $\beta'$ by
$$ \beta' = x_0x_\#x_{\ta_1}x_{\ta_2}...x_{\ta_\sigma}\beta$$
while the variables in the front are the same that have been prepended to all patterns $\hat{\beta}_1$ to $\hat{\beta}_{\mu+1}$. The length constraints $\ell_{\beta'}$ are almost the same as the ones in $\ell_\beta$ but with one additional constraint and a change for the sum of $x_a$ and $x_b$:
\begin{itemize}
    \item $x_0 = 1$ 
    \item $x_\# = 1$
    \item $x_a + x_b = \mu+1+2\sigma+n+1$
    \item $x_i = 1$ for all $i\in[\mu+1+2\sigma+n]$
    \item $x_{\ta_i} = 1$ for all $i\in[\sigma]$
\end{itemize}
So, in total, the pattern with length constraints $(\beta',\ell_{\beta'})$ looks almost identical to $\beta$ but with the addition of the variables in the front, the variables in the beginning of each $\gamma_i'$ for $i\in[\mu+1]$, and the addition of more patterns $\hat{\beta}_j$ to be defined later.

To define $\alpha'$, we first have to extend the morphism $\psi$ to obtain adapted versions of the patterns $t$ as before. The extension $\psi' : X^* \rightarrow X^*$ is defined by $\psi'(x_{\ta_i}) = x_{\ta_i}$ for all $i\in[\sigma]$ and for all other cases $\psi'(x) = \psi(x)$ for all other variables $x\in X$. Additionally, we need adapted versions of the patterns $\alpha_1$ and $\alpha_2$ that occur in $\alpha$. We construct the adapted versions $\alpha_1'$ and $\alpha_2'$ by $\alpha_1' = x_0x_\#x_{\ta_1}x_{\ta_2}...x_{\ta_\sigma}\alpha_1$ and $\alpha_2' = x_0x_\#x_{\ta_1}x_{\ta_2}...x_{\ta_\sigma}\alpha_2$. Now, with $t = \psi'(r_1\hat{\beta}_1r_2\ ...\ r_{\mu+1+2\sigma+n+1}\hat{\beta}_{\mu+1+2\sigma+n+1}r_{\mu+1+2\sigma+n+2})$ and assuming that $\alpha$ now contains the adapted $\alpha_1'$ and $\alpha_2'$ instead of $\alpha_1$ and $\alpha_2$, we define $\alpha'$ by
$$ \alpha' = x_0x_\#x_{\ta_1}x_{\ta_2}...x_{\ta_\sigma}\alpha.$$
Notice, that the only difference to the pattern of the binary case is that in this case also just pattern $x_0x_\#x_{\ta_1}x_{\ta_2}...x_{\ta_\sigma}$ has been prepended to $\alpha'$, $\alpha_1$, and $\alpha_2$.

The following idea, which is to be shown, works as follows: First, we define $2\sigma$ additional patterns $\hat{\beta}_i$ for $i\in[\mu+1+2\sigma]\setminus[\mu+1]$ that capture the case that all variables $x_{\ta_1}$ to $x_{\ta_\sigma}$ as well as $x_0$ and $x_\#$ are substituted to different letters, but one of the letters obtained by $x_{\ta_1}$ to $x_{\ta_\sigma}$ occurs later on in any substitution of $\alpha_1$ or $\alpha_2$. For all other cases that any two of the variables $x_{\ta_1}$ to $x_{\ta_\sigma}$ as well as $x_0$ and $x_\#$ are substituted equally, similar to the proof of Theorem \ref{theorem:patlang-len-erasing-equivalence-undecidable}, we add $n$ additional patterns $\hat{\beta}_i$ for $i\in[\mu+1+2\sigma+n]\setminus[\mu+1+2\sigma]$ that can be used as wildcards for any substitution of $\alpha_1'$ or $\alpha_2'$. By that, if we substitute each of the variables $x_{\ta_1}$ to $x_{\ta_\sigma}$ as well as $x_0$ and $x_\#$ with different letters, we can still only use the patterns $\hat{\beta}_1$ to $\hat{\beta}_{\mu+1}$ and use the same results as before to obtain the reduction.

We begin with the construction of $\hat{\beta}_{\mu+2}$ to $\hat{\beta}_{\mu+2+2\sigma}$. For each $i\in[\sigma]$, we define $\gamma_{\mu+1+i}$ and $\delta_{\mu+1+i}$ as well as $\gamma_{\mu+1+\sigma+i}$ and $\delta_{\mu+1+\sigma+i}$ by
\begin{align*}
    \gamma_{\mu+1+i} &= x_0\ x_0x_\#x_{\ta_1}...x_{\ta_\sigma}\ y_{\mu+1+i,1}x_{\ta_i}y_{\mu+1+i,2}, \\
    \delta_{\mu+1+i} &= y_{\mu+1+i}', \\
    \gamma_{\mu+1+\sigma+i} &= y_{\mu+1+\sigma+i}, \text{ and} \\
    \delta_{\mu+1+\sigma+i} &= x_0\ x_0x_\#x_{\ta_1}...x_{\ta_\sigma}\ y_{\mu+1+\sigma+i,1}'x_{\ta_i}y_{\mu+1+\sigma+i}'
\end{align*}
for new and independent variables $y_{\mu+1+i,1}$, $y_{\mu+1+i,2}$, $y_{\mu+1+i}'$, $y_{\mu+1+\sigma+i}$, $y_{\mu+1+\sigma+i,1}'$, and $y_{\mu+1+\sigma+i,2}'$. We obtain the first property mentioned above.

\begin{lemma}\label{lemma:patlang-len-ne-tf-inclusion-large-alphabet-more-predicates-1}
    Let $h\in H_{\ell_{\alpha'}}$. If for all two variables $x,y\in\{x_0,x_\#,x_{\ta_1},...,x_{\ta_\sigma}\}$ with $x \neq y$ we have $h(x) \neq h(y)$, and we have for some $x_{\ta_i}$ with $i\in[\sigma]$ that $h(\alpha_1') = h(x_0x_\#x_{\ta_1}...x_{\ta_\sigma})uh(x_{\ta_i})v$ or $h(\alpha_2') = h(x_0x_\#x_{\ta_1}...x_{\ta_\sigma})uh(x_{\ta_i})v$ for some $u,v\in\Sigma^*$, then $h(\alpha')\in L_E(\beta',\ell_{\beta'})$.
\end{lemma}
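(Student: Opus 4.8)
The plan is to reuse the argument of the second case of the binary construction above (the case $h(x_0)\neq h(x_\#)$, where one of $\pi_1,\dots,\pi_\mu$ is used), but to route the matching through one of the freshly introduced predicates $\pi_{\mu+1+i}$ or $\pi_{\mu+1+\sigma+i}$ instead. I would in fact establish the sharper statement $h(\alpha')\in L_{NE}(\beta',\ell_{\beta'})$, which gives the claim since $L_{NE}(\beta',\ell_{\beta'})\subseteq L_E(\beta',\ell_{\beta'})$.

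First I would note that, since $\ell_{\alpha'}$ forces $|h(x_0)|=|h(x_\#)|=|h(x_{\ta_1})|=\dots=|h(x_{\ta_\sigma})|=1$ and these $\sigma+2$ images are assumed pairwise distinct, they list exactly the $\sigma+2=|\Sigma|$ letters of $\Sigma$. Write $a:=h(x_0)$, $b:=h(x_\#)$, $c_j:=h(x_{\ta_j})$, so $c_i\notin\{a,b\}$. Since $h(\alpha_1')=abc_1\cdots c_\sigma\,h(\alpha_1)$ and $h(\alpha_2')=abc_1\cdots c_\sigma\,h(\alpha_2)$, the hypothesis just says that $c_i$ occurs as a factor of $h(\alpha_1)$ or of $h(\alpha_2)$; and $h(\alpha_1),h(\alpha_2)$ each begin with one of the letters $a,b$ (inherited from the shape of $\alpha_1,\alpha_2$ taken over from \cite{BREMER201215} --- e.g.\ $\alpha_1=\tau(\#\#I\#\#\cdots)$ begins with $x_\# x_\#$), hence not with $c_i$. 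This last fact is what makes the argument go through.

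Assume w.l.o.g.\ that $c_i$ occurs in $h(\alpha_1)$; the case of $h(\alpha_2)$ is symmetric, using $\pi_{\mu+1+\sigma+i}$ with the roles of $\gamma$ and $\delta$ interchanged. Put $p:=\mu+1+i$. I would define $h'\in H_{\ell_{\beta'}}$ with $h'(\beta')=h(\alpha')$ as follows: on the shared prefix set $h'(x_0)=a$, $h'(x_\#)=b$, $h'(x_{\ta_j})=c_j$ (this is forced); set $h'(x_p)=b$ and $h'(x_k)=a$ for $k\in[M']\setminus\{p\}$, and pick $h'(x_a),h'(x_b)$ as non-empty blocks of $a$'s of total length prescribed by $\ell_{\beta'}$ so that $h'$ maps the prefix of $\beta$ up to its second $x_\#^5$ onto the prefix of $\alpha$ up to its second $x_\#^5$ (possible with both lengths $\ge1$, exactly as in the binary case; here $M'=\mu+1+2\sigma+n$). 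For the predicate $\pi_p$ recall $\gamma_p=x_0\,x_0x_\#x_{\ta_1}\cdots x_{\ta_\sigma}\,y_{p,1}x_{\ta_i}y_{p,2}$ and $\delta_p=y_p'$; since $h(x_0\alpha_1'x_0)=a\cdot abc_1\cdots c_\sigma\cdot h(\alpha_1)\cdot a$, pick an occurrence of $c_i$ inside $h(\alpha_1)$ and factor $h(\alpha_1)\,a=w_1\,c_i\,w_2$ there. Then $w_1\neq\varepsilon$ (by the previous paragraph the occurrence is not the first letter of $h(\alpha_1)$) and $w_2\neq\varepsilon$ (it ends with the trailing $a$), so setting $h'(y_{p,1}):=w_1$, $h'(y_{p,2}):=w_2$, $h'(y_p'):=h(x_0\alpha_2'x_0)$ gives $h'(\gamma_p)=h(x_0\alpha_1'x_0)$ and $h'(\delta_p)=h(x_0\alpha_2'x_0)$, all non-empty. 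On every remaining variable of every other predicate set $a$, so that $h'(\hat{\beta}_k)=h(\psi'(\hat{\beta}_k))$ for all $k\neq p$. Finally set $h'(r_k)=a$ for $k\notin\{p,p+1\}$, $h'(r_p):=h(\psi'(r_p\hat{\beta}_p\cdots r_{M'+1}))$ and $h'(r_{p+1}):=h(\psi'(r_1\hat{\beta}_1\cdots r_{p+1}))$; these are non-empty, and since $\psi'$ is a morphism they make $h'(r_1\hat{\beta}_1\cdots r_p)=h(t)=h'(r_{p+1}\hat{\beta}_{p+1}\cdots r_{M'+1})$. A routine concatenation check --- identical to the binary case --- now yields $h'(\beta')=h(\alpha')$, while $h'$ is non-erasing and $\ell_{\beta'}$-valid by construction, so $h(\alpha')\in L_{NE}(\beta',\ell_{\beta'})$.

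The step I expect to require care --- and precisely the reason for the leading $x_0$ and the block $x_0x_\#x_{\ta_1}\cdots x_{\ta_\sigma}$ in $\gamma_{\mu+1+i}$ (resp.\ in $\delta_{\mu+1+\sigma+i}$) --- is the non-emptiness of $h'(y_{p,1})$: one must know that the chosen occurrence of $c_i$ is neither the first nor the last letter of $h(\alpha_1)\,a$, which is exactly where both the hypothesis and the structural fact ``$h(\alpha_1)$ does not begin with $c_i$'' are used. Everything else --- the consistency of the prefix- and $r_k$-assignments with $\ell_{\beta'}$ and the identities $h'(\hat{\beta}_k)=h(\psi'(\hat{\beta}_k))$ for $k\neq p$ --- is verbatim the bookkeeping of the binary construction and of \cite{BREMER201215}, now with $a=h(x_0)$ playing the role that the letter $0$ played there.
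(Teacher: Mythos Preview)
Your proposal is correct and follows essentially the same route as the paper: build an explicit $\ell_{\beta'}$-valid substitution $h'$ that selects the predicate $\pi_{\mu+1+i}$ (respectively $\pi_{\mu+1+\sigma+i}$), copy $h$ on the letter-variables $x_0,x_\#,x_{\ta_1},\dots,x_{\ta_\sigma}$, use the wildcard variables of that predicate to absorb $h(x_0\alpha_1'x_0)$ and $h(x_0\alpha_2'x_0)$, and set the $x_a,x_b,r_k$ and all remaining variables exactly as in the binary case so that the $t$-blocks line up.

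The one point where you go beyond the paper is worth noting: you explicitly argue that the factorisation of $h(\alpha_1)\cdot a$ around an occurrence of $c_i$ yields $w_1,w_2\neq\varepsilon$ (using that $h(\alpha_1)$ begins with $h(x_\#)$ and that the trailing $h(x_0)$ guarantees $w_2\neq\varepsilon$), which is precisely what is needed for $h'$ to be \emph{non-erasing}. The paper's proof simply sets $h'(y_{\mu+1+i,1})=u$ and $h'(y_{\mu+1+i,2})=v$ coming from the hypothesis without discussing their non-emptiness (and, incidentally, is off by the trailing $h(x_0)$ in the target $h(x_0\alpha_1'x_0)$, a slip your factoring of $h(\alpha_1)\cdot a$ silently repairs). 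So your version is a tightening rather than a different argument; the only caveat is that for the $\alpha_2'$ branch you appeal to the analogous structural fact ``$h(\alpha_2)$ does not begin with $c_i$'' by reference to \cite{BREMER201215}, which the paper likewise does not spell out.
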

\begin{proof}
    Assume all variables introduced as in the lemma. We construct $h'$ by the following. For all $x\in\{x_0,x_\#,x_{\ta_1},...,x_{\ta_\sigma}\}$ we set $h'(x) = h(x)$. Assume $h(\alpha_1') = h(x_0x_\#x_{\ta_1}...x_{\ta_\sigma})uh(x_{\ta_i})v$ (the same following arguments also holds for $\alpha_2'$ analogously by choosing the corresponding $\hat{\beta}_{\mu+1+\sigma+i}$). By the construction of $\gamma_{\mu+1+i}$ we know that we can set $h'(y_{\mu+1+i,1}) = u$ and $h'(y_{\mu+1+i,2}) = v$ to obtain $h'(\gamma_{\mu+1+i}) = h(x_0\alpha_1'x_0)$. Additionally, we can immediately set $h'(\delta_{\mu+1+i}) = h'(y_{\mu+1+i}') = h(x_0\alpha_2x_0)$. Now, set $h'(x_i) = h(x_\#)$, $h'(x_a) = h(x_0)^{2\sigma+n-i+1}$, $h'(x_b) = h(x_0)^{\mu+1+i}$, $h'(r_{\mu+1+i}) = h(\psi'(r_{\mu+1+i}\hat{\beta}_{\mu+1+i}...r_{\mu+1+2\sigma+n}))$ and $h'(r_{\mu+1+i+1}) = h(\psi(r_1\hat{\beta}_1...r_{\mu+1+i+1}))$. For all other variables $x$ we set $h'(x) = h(x_0)$. Now, using analogous derivations as in the second direction of the proof for the reduction in the binary case, we obtain that $h'(\beta') = h(\alpha')$ and, by that, as $h'$ is $\ell_{\beta'}$-valid, $h(\alpha') \in L_{NE}(\beta,\ell_\beta)$.
\end{proof}

Hence, we only have to construct the final $n$ patterns that cover the cases of two of the variables $x_{\ta_1}$ to $x_{\ta_\sigma}$ as well as $x_0$ and $x_\#$ being substituted equally. For that, we give a general outline of the construction of the patterns $\hat{\beta}_{\mu+1+2\sigma+1}$ to $\hat{\beta}_{\mu+1+2\sigma+n}$. Let $x,y\in\{x_0x_\#x_{\ta_1}...,x_{\ta_\sigma}\}$ with $x \neq y$. W.l.o.g. assume $x = x_{\ta_i}$ and $y = x_{\ta_j}$ for some $i < j$. For the cases of $x = x_0$ or $y = x_\#$, the same construction can be done. Also, w.l.o.g. assume this combination of variables is handled in $\hat{\beta}_{\mu+1+2\sigma+k}$ for some $k\in[n]$. Then, we define $\gamma_{\mu+1+2\sigma+k}$ and $\delta_{\mu+1+2\sigma+k}$ by 
\begin{align*}
    \gamma_{\mu+1+2\sigma+k} &= x_0\ x_0x_\#x_{\ta_1}...x_{\ta_i}...x_{\ta_{j-1}}x_{\ta_i}x_{\ta_{j+1}}...x_{\ta_\sigma}y_{\mu+1+2\sigma+k}, \text{ and}\\
    \delta_{\mu+1+2\sigma+k} &= x_0\ x_0x_\#x_{\ta_1}...x_{\ta_i}...x_{\ta_{j-1}}x_{\ta_i}x_{\ta_{j+1}}...x_{\ta_\sigma}y_{\mu+1+2\sigma+k}'
\end{align*}
for new and independent variables $y_{\mu+1+2\sigma+k}$ and $y_{\mu+1+2\sigma+k}'$. Notice that in these patterns $x_{\ta_j}$ does not occur and is replaced by another occurrence of $x_{\ta_i}$. For all other $k'\in[n]$ we construct all other possible combinations of 2 different variables in that way analogously. We obtain the following final result.

\begin{lemma}
    Let $h\in H_{\ell_{\alpha'}}$. If there exist two variables $x,y\in\{x_0,x_\#,x_{\ta_1},...,x_{\ta_\sigma}\}$ with $x \neq y$ such that $h(x) = h(y)$, then we have $h(\alpha')\in L_{NE}(\beta',\ell_\beta')$.
\end{lemma}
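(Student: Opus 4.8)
The plan is to argue exactly as in the proof of Lemma~\ref{lemma:patlang-len-ne-tf-inclusion-large-alphabet-more-predicates-1}, and in the spirit of Corollary~\ref{corollary:proof1-z-equal-z} from the first proof: given $h\in H_{\ell_{\alpha'}}$ with $h(x)=h(y)$ for two distinct $x,y\in\{x_0,x_\#,x_{\ta_1},\dots,x_{\ta_\sigma}\}$, I will exhibit a single $\ell_{\beta'}$-valid non-erasing substitution $h'$ of $\beta'$ with $h'(\beta')=h(\alpha')$, namely the one that activates the wildcard predicate $\hat\beta_{\mu+1+2\sigma+k}$ dedicated to the pair $\{x,y\}$. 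By the symmetric constructions sketched in the excerpt it is enough to treat the case $x=x_{\ta_i}$, $y=x_{\ta_j}$ with $i<j$ (the cases $x=x_0$ or $y=x_\#$ are handled by the same construction up to renaming); let $k\in[n]$ be the index for which $\gamma_{\mu+1+2\sigma+k}=x_0\,\rho\,y_{\mu+1+2\sigma+k}$ and $\delta_{\mu+1+2\sigma+k}=x_0\,\rho\,y_{\mu+1+2\sigma+k}'$, where $\rho$ denotes the word $x_0x_\#x_{\ta_1}\cdots x_{\ta_\sigma}$ with its single occurrence of $x_{\ta_j}$ replaced by $x_{\ta_i}$.

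The one genuinely new observation is that, since $h(x_{\ta_i})=h(x_{\ta_j})$, every $h'$ that agrees with $h$ on $x_0,x_\#,x_{\ta_1},\dots,x_{\ta_\sigma}$ maps $\rho$ to $h(x_0x_\#x_{\ta_1}\cdots x_{\ta_\sigma})$ -- precisely the prefix that $\alpha_1'$ and $\alpha_2'$ begin with. I therefore take $h'$ to agree with $h$ on $x_0,x_\#,x_{\ta_1},\dots,x_{\ta_\sigma}$ (so the length-$1$ constraints of $\ell_{\beta'}$ hold), put $h'(y_{\mu+1+2\sigma+k})=h(\alpha_1x_0)$ and $h'(y_{\mu+1+2\sigma+k}')=h(\alpha_2x_0)$ -- both non-empty, since $h(x_0)$ is a single letter, and chosen so that $h'(\gamma_{\mu+1+2\sigma+k})=h(x_0\alpha_1'x_0)$ and $h'(\delta_{\mu+1+2\sigma+k})=h(x_0\alpha_2'x_0)$ -- set $h'(x_{\mu+1+2\sigma+k})=h(x_\#)$ together with $h'(x_a),h'(x_b)$ equal to those powers of $h(x_0)$ that place the single $h(x_\#)$-letter in $\alpha'$'s central block at the position activating $\hat\beta_{\mu+1+2\sigma+k}$ (these powers sum to the value prescribed by the $x_a+x_b$-constraint of $\ell_{\beta'}$ and are each $\ge 1$, so $h'$ stays $\ell_{\beta'}$-valid and non-erasing), set the border variables $r_{\mu+1+2\sigma+k}$ and $r_{\mu+2+2\sigma+k}$ to the $\psi'$-images of the two parts of $\beta'$'s predicate-tail flanking the active block, and set every remaining variable of $\beta'$ (all other $x_\ell$, all other $r$'s, and all internal variables of the inactive predicates) to $h(x_0)$ -- all exactly as in the proof of Lemma~\ref{lemma:patlang-len-ne-tf-inclusion-large-alphabet-more-predicates-1}.

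What then remains is the standard telescoping check $h'(\beta')=h(\alpha')$: the prepended block $x_0x_\#x_{\ta_1}\cdots x_{\ta_\sigma}$ of $\beta'$ matches that of $\alpha'$ because $h'=h$ there; the leading $x_0x_ax_b$ and the two $x_\#^5$-blocks match their counterparts in $\alpha'$, and the selection segment $x_ax_1\cdots x_{\mu+1+2\sigma+n}x_b$ evaluates to the central $x_0$-run--$x_\#$--$x_0$-run block of $\alpha'$, by the length constraints and the choice of $h'(x_a),h'(x_b)$; the activated predicate $\hat\beta_{\mu+1+2\sigma+k}$ evaluates, using the prefix-collapse $h'(\rho)=h(x_0x_\#x_{\ta_1}\cdots x_{\ta_\sigma})$, to $h$ applied to the segment $v\,x_0\alpha_1'x_0\,v\,x_0\alpha_2'x_0\,v$ near the end of $\alpha'$; and, since $h'$ and $h\circ\psi'$ evaluate every inactive predicate identically while $h'(r_{\mu+1+2\sigma+k})$ and $h'(r_{\mu+2+2\sigma+k})$ are chosen to be $h(t)$, the rest of $\beta'$'s tail telescopes to the two copies of $h(t)$ surrounding that segment. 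Hence $h'(\beta')=h(\alpha')$ with $h'\in H_{\ell_{\beta'}}$ non-erasing, so $h(\alpha')\in L_{NE}(\beta',\ell_{\beta'})$.

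The main obstacle is concentrated entirely in the last step, and it is pure bookkeeping rather than a new idea: one must check that every variable of $\beta'$ set to $h(x_0)$ by default -- the internal variables of each inactive $\hat\beta_j$, all the $r$'s, and the copies of the letter variables occurring inside the bold prefixes of the $\gamma_j$ for $j\in[\mu+1]$ -- evaluates under $h'$ exactly as the corresponding factor evaluates under $h\circ\psi'$, so that the $v$-blocks around the $\gamma_j,\delta_j$ and the two copies of $t$ align with no length drift, and that none of the assigned images is empty (which, since $h(x_0)$ has length $1$, they are not). This verification is the same one already performed in the binary case and in Lemma~\ref{lemma:patlang-len-ne-tf-inclusion-large-alphabet-more-predicates-1}, so only the prefix-collapse enabled by the hypothesis $h(x)=h(y)$ is new.
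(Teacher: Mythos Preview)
Your argument is correct and follows essentially the same route as the paper: pick the wildcard predicate $\hat\beta_{\mu+1+2\sigma+k}$ built for the collapsing pair, use $h(x_{\ta_i})=h(x_{\ta_j})$ to make the modified prefix $\rho$ match $h(x_0x_\#x_{\ta_1}\cdots x_{\ta_\sigma})$, and then set up $h'$ exactly as in Lemma~\ref{lemma:patlang-len-ne-tf-inclusion-large-alphabet-more-predicates-1}. One small slip in your closing paragraph: the letter variables $x_0,x_\#,x_{\ta_1},\dots,x_{\ta_\sigma}$ appearing inside the bold prefixes of the inactive $\gamma_j$ are \emph{not} among those ``set to $h(x_0)$ by default'' --- they are the same global variables you already fixed to their $h$-values, and must stay that way for the $\psi'$-alignment to work --- but your actual construction earlier in the proof handles this correctly.
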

\begin{proof}
    Suppose w.l.o.g. that $h(x_{\ta_i}) = h(x_{\ta_j})$ for $i,j\in[\sigma]$ and $i < j$. The same approach works if any of the variable was $x_0$ or $x_\#$ as we assumed the previous construction to also be done for those variables as well. We know there exists some $k\in[n]$ such that $\gamma_{\mu+1+2\sigma+k} = x_0x_0x_\#x_{\ta_1}...x_{\ta_i}...x_{\ta_{j-1}}x_{\ta_i}x_{\ta_{j+1}}...x_{\ta_\sigma}y_{\mu+1+2\sigma+k}$ and $\delta_{\mu+1+2\sigma+k} = x_0x_0x_\#x_{\ta_1}...x_{\ta_i}...x_{\ta_{j-1}}x_{\ta_i}x_{\ta_{j+1}}...x_{\ta_\sigma}y_{\mu+1+2\sigma+k}'$. By the construction of $\alpha_1'$ and $\alpha_2'$ we know that both have the prefix $h(x_0x_\#x_{\ta_1}...x_{\ta_\sigma})$. So, as before, we can use $\gamma_{\mu+1+2\sigma+k}$ and $\delta_{\mu+1+2\sigma+k}$ to define some $h'\in H_{\ell_{\beta'}}$ with $h'(\gamma_{\mu+1+2\sigma+k}) = h(x_0\alpha_1'x_0)$ and $h'(\delta_{\mu+1+2\sigma+k}) = h(x_0\alpha_2'x_0)$. Now, by setting $h'(x_{\mu+1+2\sigma+k}) = h(x_\#)$ and everything else analogously to, e.g., Lemma \ref{lemma:patlang-len-ne-tf-inclusion-large-alphabet-more-predicates-1}, we obtain that we can find such a $h'$ with $h'(\beta') = h(\alpha')$ and by that $h(\alpha')\in L_{NE}(\beta',\ell_{\beta'})$.
\end{proof}

With all these properties, we know that for some $h\in H_{\alpha',\ell_{\alpha'}}$ to result in a word such that $h(\alpha')\notin L_{NE}(\beta',\ell_{\beta})$, we need that for all variables $x,y\in\{x_0,x_\#,x_{\ta_1},...,x_{\ta_\sigma}\}$ with $x \neq y$ we have $h(x) \neq h(y)$. Notice, that in this case, the general structure of $\beta'$ and $\alpha'$ are not very different from the binary case, except a constant length prefix with $\sigma+2$ different letters. By the arguments from the binary base, we see that the existence of some $\gamma_i$ and $\delta_i$ to obtain $h(x_0\alpha_1'x_0)$ and $h(x_0\alpha_2'x_0)$ is still a necessary condition to find some $h'\in H_{\ell_{\beta'}}$ for which we obtain $h'(\beta') = h(\alpha')$. Notice that none of the $n$ last $\hat{\beta}_{i}$ can be applied in these cases, as no two variables in $\{x_0,x_\#,x_{\ta_1},...,x_{\ta_\sigma}\}$ are substituted equally. Also notice, by Lemma \ref{lemma:patlang-len-ne-tf-inclusion-large-alphabet-more-predicates-1}, if any of the letters $h(x_{\ta_1})$ to $h(x_{\ta_\sigma})$ appear somewhere later on in $h(\alpha_1')$ or $h(\alpha_2')$, that $h(\alpha')$ is immediately in $L_{NE}(\beta,\ell_{\beta'})$. Hence, only binary words over $h(x_0)$ and $h(x_\#)$ obtained by $h(\alpha_1')$ and $h(\alpha_2')$ could result in substitutions $h(\alpha')$ that are not in $L_{NE}(\beta,\ell_{\beta'})$. Hence, only one of the initial patterns $\hat{\beta}_1$ to $\hat{\beta}_\mu$ may be applied to find some substitution $h'\in H_{\ell_{\beta'}}$ to get $h(\alpha') = h'(\beta')$. Hence, the reduction follows by the arguments done in the binary case for these initial $\mu$ predicates.

This concludes the case for all larger alphabets $\Sigma$ with $|\Sigma| \geq 3$ and by that we know that the inclusion problem of terminal-free pattern languages with length constraints is undecidable for all alphabets $\Sigma$ with $|\Sigma| \geq 2$.
\newpage

\section{The Undecidability of the Equivalence Problem for Non-Erasing Pattern Languages with Regular- and Length Constraints}
\label{section:proof3}
This section is dedicated to show Theorem \ref{theorem:patlang-reglen-nonerasing-equivalence-undecidability}. For this result, we also reduce the emptiness problem of nondeterministic 2-counter automata without input to the problem of deciding whether two pattern languages with regular and length constraints are equal. In contrast to the other two main results, this result uses a disjunction of systems of linear (diophantine) inequalities instead of just a single system as the considered length constraints. This makes this result somewhat weaker than the other two, but nonetheless serves as a first upper bound of undecidability for a problem which is so far only known to be trivially decidable in the case of no constraints. Notice, that this proof is partly based on ideas from \cite{Nowotka2024} and \cite{BREMER201215} in the way the patterns are structured and the regular constraints are obtained. Nontheless, certain significant changes were necessary to obtain the result.

We begin with the binary case and assume w.l.o.g. $\Sigma = \{0,\#\}$. Let $A = (Q,\delta,q_0,F)$ be some nondeterministic 2-counter automaton without input. For this proof, we assume that for any configuration $C_i = (q_j,m_1,m_2)$ of $A$ we have that its encoding is defined by $\enc(C_i) = 0^{1+j}\#0^{1+m_1}\#0^{1+m_2}$ for a state $q_j\in Q$ and counter values $m_1,m_2\in\N$. The encoding of a computation $C = (C_1,C_2,...,C_n)$ for some $n\in\N$ with $n \geq 1$ is still defined as $\enc(C) = \#\# \enc(C_1) \#\# \enc(C_2) \#\# ... \#\# \enc(C_n) \#\#$. Also, for this section, assume that, given some regular expression $R$, we have that $L(R)$ refers to the regular language obtained by that regular expression.

For the reduction, we will construct two patterns with regular and length constraints $(\alpha,r_\alpha,\ell_\alpha)$ and $(\beta,r_\beta,\ell_\beta)$ for which we have $L_{NE}(\alpha,r_\alpha,\ell_\alpha) = L_{NE}(\beta,r_\beta,\ell_\beta)$ if and only if $\ValC(A) = \emptyset$.

As we need $\beta$ to define $\alpha$, we will begin with the construction of $(\beta,r_\beta,\ell_\beta)$. We define $\beta$ by
$$ \beta = r_1\ \hat{\beta}_1\ r_2\ ... r_\mu\ \hat{\beta}_\mu\ r_{\mu+1}$$
for new and independent variables $r_1$ to $r_{\mu+1}$ for some $\mu\in\N$ to be specified later. For each $i\in[\mu]$, we define $\hat{\beta}_i = v\ \gamma_i\ v$ for $v = 0\#^30$ and $\gamma_i$ being some terminal-free pattern, also to be defined later. For each $i,j\in[\mu]$ with $i \neq j$ we assume by the following construction that $\var(\gamma_i) \cap \var(\gamma_j) = \varepsilon$, i.e., each variable occurring in one $\gamma_i$ does not occur anywhere else in the pattern. Also, for any word obtained by $\gamma_i$, we assume that it is either $0^{|\gamma_i|}$ or $0u0$ for some $u\in\Sigma^*$ by the construction that follows. Before giving the constraints $r_\beta$ and $\ell_\beta$, we continue with the construction of $\alpha$. We define $\alpha$ by
$$ \alpha = t\ v\ 0\ \alpha_1\ 0\ v\ t^2$$
for a new and independent variable $\alpha_1$, $v = 0\#^30$, and a word $t\in\Sigma^*$ that is obtained by the following morphism. Define $\psi : (\Sigma\times X)^* \rightarrow \Sigma^*$ by $\psi(0) = 0$, $\psi(\#) = \#$, and $\psi(x) = 0$ for all $x\in X$. Then, we say that $t = \psi(\beta)$. The constraints $r_\alpha$ and $\ell_\alpha$ are empty, hence $L_{NE}(\alpha,r_{\alpha},\ell_\alpha) = L_{NE}(\alpha) = \{tv\}\cdot\Sigma^+\cdot\{vt^2\}$. We proceed with the definition of the constraints $r_\beta$ and $\ell_\beta$. For each $i\in[\mu+1]$, we define the language of the variable $r_i$ by
$$ L(r_i) = \{0,\ \psi(r_i\hat{\beta}_ir_{i+1}...r_\mu\hat{\beta}_\mu r_{\mu+1}),\ t\psi(r_1\hat{\beta}_1r_2...r_{i-1}\hat{\beta}_{i-1}r_i)\}.$$
Thus, notice, that each $r_i$ may only be substituted to one of $3$ words: Either $0$, a specific suffix of $t$ or a specific prefix of $t^2$. Additionally, for all variables $x\in\var(\beta)$ in $\beta$ we add the constraint that $\#\notin L(x)$, i.e., no variable can be substituted to just the letter $\#$. Clearly, any language can be intersected with another regular language to obtain that constraint. The specific regular- and length-constraints for each $\gamma_i$ will be introduced later in the proof. For the length constraints $\ell_\beta$, we construct a disjunction of systems of linear (diophantine) inequalities. Assume w.l.o.g. that for each $i\in[\mu]$ we have $\var(\gamma_i) := \{x_{i,1},x_{i,2},...,x_{i,n_i}\}$ for some $n_i\in\N$. Then, we define for each $i\in[\mu]$, for each $j\in[n_i]$ of the resulting $n_i$ the system $\ell_{\beta,i,j}$ defined by:
\begin{itemize}
    \item $r_i = |\psi(r_i\hat{\beta_i}r_{i+1}...r_{\mu}\hat{\beta}_\mu r_{\mu+1})|$
    \item $r_{i+1} = |t\ \psi(r_1\hat{\beta}_1r_2...r_{i}\hat{\beta}_ir_{i+1})|$
    \item $r_k = 1$ for all $k\in[\mu+1]\setminus\{i,i+1\}$
    \item $x_{i,j} \geq 2$
    \item $x_{k,k'} = 1$ for all $k\in[\mu]$ and $x_k'\in\var(\gamma_k)$ for $k \neq i$
\end{itemize}
Additionally, for some $\ell_{\beta,i,j}$, there are additional length constraints on the specific variables in occurring in $\gamma_{i}$. These are defined later on in the proof and do not interfere with the constraint that $\gamma_i$ may be substituted to $0^{|\gamma_i|}$ in certain cases. We then define $\ell_\beta$ by $\ell_\beta = \ell_{\beta,1,1} \lor ... \lor \ell_{\beta,1,n_1} \lor \ell_{\beta,2,1} \lor ... \lor \ell_{\beta,2,n_2} \lor ...\ ... \lor \ell_{\beta,\mu_1,1} \lor ... \lor \ell_{\beta,\mu,n\mu}$. This constraint results in that exactly two subsequent $r_i$ and $r_{i+1}$ are substituted by words longer than the length of $1$ while all others have exactly length $1$. The other constraints result in that all variables occurring in some pattern $\gamma_j$ with $j\neq i$ have to be substituted by length $1$ and that at least one variable occurring in $\gamma_i$ has to be substituted to length $2$. These constraints serve selecting one $\gamma_i$ to obtain substitutions of $\alpha_1$ as seen in the following parts. We begin, again, with the important property of the containment of $L_{NE}(\beta,r_\beta,\ell_\beta)$ in $L_{NE}(\alpha,r_\alpha,\ell_\alpha)$.

\begin{lemma}
    We have $L_{NE}(\beta,r_\beta,\ell_\beta) \subseteq L_{NE}(\alpha,r_\alpha,\ell_\alpha)$.
\end{lemma}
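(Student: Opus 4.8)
The plan is to fix an arbitrary $w\in L_{NE}(\beta,r_\beta,\ell_\beta)$, write $w=h(\beta)$ for a non-erasing substitution $h\in H_{r_\beta,\ell_\beta}$, and show that $h(\beta)$ necessarily has the form $t\,v\,0u0\,v\,t^2$ with $u\in\Sigma^+$. Since the constraints $r_\alpha$ and $\ell_\alpha$ are empty, every substitution is $r_\alpha$-$\ell_\alpha$-valid, so the substitution $h'$ with $h'(\alpha_1)=u$ (and $h'$ arbitrary elsewhere) is admissible and yields $h'(\alpha)=t\,v\,0\,u\,0\,v\,t^2=w$; hence $w\in L_{NE}(\alpha,r_\alpha,\ell_\alpha)$, which is exactly what is claimed.

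To obtain the shape of $h(\beta)$, I would first use that $\ell_\beta$ is the disjunction of the systems $\ell_{\beta,i,j}$, so $h$ satisfies exactly one such system; fix $(i,j)$ with $i\in[\mu]$. Combining $\ell_{\beta,i,j}$ with the regular constraints — each $h(r_k)$ must lie in the three-element set $L(r_k)=\{0,\ \psi(r_k\hat{\beta}_k\cdots r_{\mu+1}),\ t\,\psi(r_1\hat{\beta}_1\cdots r_k)\}$ — I would argue that the length equations of $\ell_{\beta,i,j}$ pin down a unique candidate for every $r_k$. The point is that within each of these sets distinct words have distinct lengths: $0$ has length $1$; the ``suffix'' candidate $\psi(r_k\hat{\beta}_k\cdots r_{\mu+1})$ has length $\leq|t|$ and, for $k\leq\mu$, length $\geq|v\gamma_k v|>1$; the ``prefix'' candidate $t\,\psi(r_1\hat{\beta}_1\cdots r_k)$ has length $>|t|\geq1$. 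A few boundary indices where two of the candidates coincide as words ($L(r_{\mu+1})$ contains $0$ twice; $L(r_1)=\{0,t,t0\}$; the case $i=\mu$) need separate but routine checking. The outcome is $h(r_i)=\psi(r_i\hat{\beta}_i\cdots r_{\mu+1})$, a suffix of $t$; $h(r_{i+1})=t\,\psi(r_1\hat{\beta}_1\cdots r_i\hat{\beta}_i r_{i+1})$, a prefix of $t^2$; and $h(r_k)=0=\psi(r_k)$ for every other $k$.

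Next I would pin down the inner blocks. For each $k\neq i$, the system $\ell_{\beta,i,j}$ forces every variable of $\gamma_k$ to length $1$, and the global constraint $\#\notin L(x)$ rules out the single letter $\#$, so each such variable equals $0$; hence $h(\gamma_k)=0^{|\gamma_k|}=\psi(\gamma_k)$ and $h(\hat{\beta}_k)=v\,0^{|\gamma_k|}\,v=\psi(\hat{\beta}_k)$. For $k=i$, the system $\ell_{\beta,i,j}$ forces some variable of $\gamma_i$ to length $\geq2$, and — invoking the explicit construction of $\gamma_i$ given later in this section — it follows that $h(\gamma_i)=0u0$ for some $u\in\Sigma^+$, so $h(\hat{\beta}_i)=v\,0u0\,v$. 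Now assembling $h(\beta)=h(r_1)h(\hat{\beta}_1)\cdots h(r_\mu)h(\hat{\beta}_\mu)h(r_{\mu+1})$ and using $\psi(\beta)=t$: the part up to and including $h(r_i)$ equals $\psi(r_1\hat{\beta}_1\cdots r_{i-1}\hat{\beta}_{i-1})\cdot\psi(r_i\hat{\beta}_i\cdots r_{\mu+1})=\psi(\beta)=t$, and the part from $h(r_{i+1})$ onward equals $t\,\psi(r_1\hat{\beta}_1\cdots r_i\hat{\beta}_i r_{i+1})\cdot\psi(\hat{\beta}_{i+1}r_{i+2}\cdots r_{\mu+1})=t\cdot\psi(\beta)=t^2$. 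Hence $h(\beta)=t\,v\,0u0\,v\,t^2$, which closes the argument as above.

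I expect the main obstacle to be the bookkeeping in the second paragraph: establishing that each $\ell_{\beta,i,j}$ together with the regular constraints really forces a unique value of $h$ on every $r_k$, which rests on the three candidate words always having pairwise distinct lengths and on correctly handling the degenerate boundary indices. A secondary dependency, which the proof should flag explicitly, is that the clean form $h(\gamma_i)=0u0$ with $u\in\Sigma^+$ — in particular $u\neq\varepsilon$, which is precisely what makes the substitution $\alpha_1\mapsto u$ legal in the non-erasing setting — relies on the not-yet-specified construction of the $\gamma_i$.
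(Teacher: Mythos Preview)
Your proposal is correct and follows essentially the same approach as the paper: pick $h\in H_{r_\beta,\ell_\beta}$, use the chosen disjunct $\ell_{\beta,i,j}$ together with the three-element sets $L(r_k)$ and the constraint $\#\notin L(x)$ to force $h(r_1\hat{\beta}_1\cdots r_i)=t$, $h(r_{i+1}\hat{\beta}_{i+1}\cdots r_{\mu+1})=t^2$, and $h(\hat{\beta}_i)=v\,0u0\,v$ with $u\in\Sigma^+$, then read off $h(\beta)=t\,v\,0u0\,v\,t^2$ and map $\alpha_1\mapsto u$. The paper's proof is terser---it simply asserts that the length constraints together with $r_\beta$ pin down each $h(r_k)$---whereas you spell out why the three candidate words in $L(r_k)$ have pairwise distinct lengths and flag the boundary indices; both proofs explicitly rely on the standing assumption about the shape of $h(\gamma_i)$ that is verified only once the $\gamma_i$ are constructed.
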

\begin{proof}
    Let $w\in L_{NE}(\beta,r_\beta,\ell_\beta)$ such that $w = h(\beta)$ for some $h\in H_{r_\beta,\ell_\beta}$. We know by $\ell_\beta$ that there exists some $i\in[\mu]$ such that $|h(r_i)| = |\psi(r_i\beta_ir_{i+1}...r_\mu\hat{\beta}_\mu r_{\mu+1})|$, $|h(r_{i+1}) = |t\ \psi(r_1\hat{\beta}_1r_2...r_{i}\hat{\beta}_ir_{i+1})|$, and $|h(r_j)| = 1$ for all $j\in[\mu+1]\setminus\{i,i+1\}$. This, together with the languages defined by $r_\beta$, results in $h(r_i) =  \psi(r_i\hat{\beta}_ir_{i+1}...r_\mu\hat{\beta}_\mu r_{\mu+1})$, $h(r_{i+1}) = t\psi(r_1\hat{\beta}_1r_2...r_{i-1}\hat{\beta}_{i-1}r_i)$, and $h(r_j) = 0$ for all $j\in[\mu+1]\setminus\{i,i+1\}$. Also, by the constraint $|h(x_{j,k})| = 1$ for all $j\in[\mu]\setminus\{i\}$, $k\in\N$ given in $\ell_\beta$, we know that $h(x_{j,k}) = 0$ by $r_\beta$. Hence, all variables occurring outside $r_i\hat{\beta_i}r_{i+1}$ in $\beta$ are substituted by the word $0$. By that, we have $h(r_1\hat{\beta}_1r_2...r_{i-1}\hat{\beta}_{i-1}r_i) = t$ and $h(r_{i+1}\hat{\beta}_{i+1}r_{i+2}...r_\mu h(\beta)_\mu r_{\mu+1}) = t^2$. In total, this results in 
    \begin{align*}
        h(\beta) &= h(r_1\hat{\beta_1}r_2...r_{i-1\hat{\beta}_{i-1}r_i})h(\hat{\beta}_i)h(r_{i+1}\hat{\beta}_{i+1}r_{i+2}...r_\mu\hat{\beta}_\mu r_{mu+1}) \\
                 &= t\ v\ h(\gamma_i)\ v\ t^2
    \end{align*}
    Let $h'\in H$ and set $h'(0\alpha_10) = h(\gamma_i)$. This can be done by the assumption that all words obtained by $\gamma_i$ are either $0^{|\gamma_i|}$ or $0u0$ for some $u\in\Sigma^*$ and that there exists at least one variable in $\gamma_i$ which has to be substituted by length $2$, preventing the case of obtaining $0^{|\gamma_i|}$. Then, $h'(\alpha) = tvh'(\gamma_i)vt^2 = h(\beta)$ and by that $h(\beta)\in L_{NE}(\alpha,r_\alpha,\ell_\alpha)$.
\end{proof}

Next, we introduce the notion of predicates for this proof. Each $\gamma_i$ is interpreted as a predicate $\pi_i : H \rightarrow \{true,false\}$ for which we have that $\pi_i(h) = true$ for some $h\in h$ if and only if there exists some $r_\beta$-$\ell_\beta$-valid morphism $\tau$ for which we have that $\tau(\gamma_i) = h(0\alpha_10)$. Now, with that notion, we can show the following key-property.

\begin{lemma}\label{lemma:patlang-reglen-ne-predicates-iff}
    Let $h\in H$. We have $h(\alpha)\in L_{NE}(\beta)$ if and only if $\psi_i(h) = true$ for some $i\in[\mu]$, i.e., $h$ satisfies some predicate $\pi_i$.
\end{lemma}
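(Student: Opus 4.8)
The statement is a biconditional relating membership of $h(\alpha)$ in the constrained language $L_{NE}(\beta,r_\beta,\ell_\beta)$ to satisfaction of some predicate $\pi_i$. My plan is to prove both implications through the same structural decomposition of $\beta$ that was already used in the preceding lemma (the one giving $L_{NE}(\beta,r_\beta,\ell_\beta)\subseteq L_{NE}(\alpha,r_\alpha,\ell_\alpha)$), followed by cancellation in the free monoid $\Sigma^*$. Throughout, recall that $r_\alpha,\ell_\alpha$ are empty, so $h\in H=H_{r_\alpha,\ell_\alpha}$ is unconstrained and $h(\alpha)=t\,v\,0\,h(\alpha_1)\,0\,v\,t^2=t\,v\,h(0\alpha_1 0)\,v\,t^2$ since $h$ fixes terminals.

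\textbf{Direction $\pi_i(h)=\text{true}\Rightarrow h(\alpha)\in L_{NE}(\beta,r_\beta,\ell_\beta)$.} Fix the index $i$ and the witnessing morphism $\tau$ with $\tau(\gamma_i)=h(0\alpha_1 0)$ respecting the regular and length constraints imposed on $\var(\gamma_i)$. I would define $h'\in H$ by: $h'(x)=\tau(x)$ for $x\in\var(\gamma_i)$; $h'(r_i)=\psi(r_i\hat\beta_i r_{i+1}\cdots r_{\mu+1})$; $h'(r_{i+1})=t\,\psi(r_1\hat\beta_1\cdots r_i\hat\beta_i r_{i+1})$; $h'(r_j)=0$ for $j\in[\mu+1]\setminus\{i,i+1\}$; and $h'(x)=0$ for every variable of every $\gamma_j$ with $j\neq i$. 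Then $h'$ is $r_\beta$-valid ($0\in L(r_j)$; the two long values are exactly the remaining elements of $L(r_i)$ and $L(r_{i+1})$; the $\gamma_j$-variables take value $0$, which is allowed since $0^{|\gamma_j|}$ is obtainable from $\gamma_j$; and $\tau$ handles $\var(\gamma_i)$), and $h'$ satisfies the disjunct $\ell_{\beta,i,j}$ for whichever $j$ makes $\tau$ meet the length system on $\gamma_i$. Since $h'(\hat\beta_j)=v\,0^{|\gamma_j|}\,v=\psi(\hat\beta_j)$ for $j\neq i$ and $h'(r_j)=0=\psi(r_j)$ for $j\notin\{i,i+1\}$, the prefix of $\beta$ up through $r_i$ evaluates to $\psi(r_1\hat\beta_1\cdots r_{i-1}\hat\beta_{i-1})\cdot\psi(r_i\hat\beta_i\cdots r_{\mu+1})=\psi(\beta)=t$, and the suffix from $r_{i+1}$ on evaluates to $t\cdot\psi(\beta)=t^2$. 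Hence $h'(\beta)=t\,v\,h'(\gamma_i)\,v\,t^2=t\,v\,h(0\alpha_10)\,v\,t^2=h(\alpha)$.

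\textbf{Direction $h(\alpha)\in L_{NE}(\beta,r_\beta,\ell_\beta)\Rightarrow\pi_i(h)=\text{true}$.} Take $h'\in H_{r_\beta,\ell_\beta}$ with $h'(\beta)=h(\alpha)$. Since $\ell_\beta=\bigvee_{i,j}\ell_{\beta,i,j}$, the substitution $h'$ satisfies one system $\ell_{\beta,i,j}$, so $|h'(r_i)|,|h'(r_{i+1})|>1$, $|h'(r_j)|=1$ for $j\notin\{i,i+1\}$, and $|h'(x)|=1$ for every variable $x$ of every $\gamma_j$ with $j\neq i$. Together with the three-element languages $L(r_m)$ and the constraint $\#\notin L(x)$ for all variables, this forces $h'(r_i)=\psi(r_i\hat\beta_i\cdots r_{\mu+1})$, $h'(r_{i+1})=t\,\psi(r_1\hat\beta_1\cdots r_i\hat\beta_i r_{i+1})$, $h'(r_j)=0$ for $j\notin\{i,i+1\}$, and $h'(x)=0$ for all $\gamma_j$-variables with $j\neq i$ — this is verbatim the computation of the previous lemma, which I would simply invoke. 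Running it gives $h'(\beta)=t\,v\,h'(\gamma_i)\,v\,t^2$. Equating with $h(\alpha)=t\,v\,h(0\alpha_10)\,v\,t^2$ and cancelling the common prefix $t\,v$ and common suffix $v\,t^2$ in $\Sigma^*$ yields $h'(\gamma_i)=h(0\alpha_10)$, so the restriction of $h'$ to $\var(\gamma_i)$ is a valid witness and $\pi_i(h)=\text{true}$.

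\textbf{Main obstacle.} The only delicate point is the forced decomposition of $h'(\beta)$ in the converse direction: one must argue from the languages $L(r_m)$ and the length system $\ell_{\beta,i,j}$ that the unique consistent configuration is "two consecutive $r$'s carry the two halves of $t$ (resp.\ of $t^2$) and everything else collapses to the single letter $0$". Since this is exactly the argument already given for $L_{NE}(\beta,r_\beta,\ell_\beta)\subseteq L_{NE}(\alpha,r_\alpha,\ell_\alpha)$, I would reuse it rather than redo it; after that, both directions are pure free-monoid cancellation plus an unwinding of $\psi$, which is routine.
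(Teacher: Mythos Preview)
Your proposal is correct and follows essentially the same approach as the paper's proof: in both directions you construct (or extract) the substitution $h'$ with $h'(r_i)$ and $h'(r_{i+1})$ carrying the two long words from $L(r_i),L(r_{i+1})$, all other $r_j$ and all variables in $\gamma_j$ for $j\neq i$ sent to $0$, then use the resulting decomposition $h'(\beta)=t\,v\,h'(\gamma_i)\,v\,t^2$ together with free-monoid cancellation against $h(\alpha)=t\,v\,h(0\alpha_10)\,v\,t^2$. The only cosmetic difference is that in the converse direction the paper redoes the structural computation inline while you invoke the preceding lemma; and a minor phrasing point: the witness for $\pi_i(h)$ is $h'$ itself (which is already $r_\beta$-$\ell_\beta$-valid), not merely its restriction to $\var(\gamma_i)$.
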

\begin{proof}
    First, assume $\pi_i(h) = true$ for some $i\in[\mu]$. Then, there exists some $r_\beta$-$\ell_\beta$-valid $\tau$ such that $\tau(\gamma_i) = h(0\alpha_10)$. Let $h'$ be some substitution and set $h'(\gamma_i) = \tau(\gamma_i)$, $h'(r_i) = \psi(r_i\hat{\beta}_ir_{i+1}...r_\mu\hat{\beta}_\mu r_{\mu+1})$, $h'(r_{i+1}) = t\psi(r_1\hat{\beta}_1r_2...r_{i}\hat{\beta}_i r_{i+1})$, and $h'(x) = 0$ for all other variables $x\in\var(\beta)$. Then, notice that $h'$ is a $r_\beta$-$\ell_\beta$-valid substitution. We have
    \begin{align*}
        h'(\beta) &= h'(r_1\hat{\beta}_1r_2...r_{i-1}\hat{\beta}_{i-1})h'(r_i)h'(\hat{\beta}_i)h'(r_{i+1})h'(\hat{\beta}_{i+1}r_{i+1}...r_\mu\hat{\beta}_\mu r_{\mu+1}) \\
                  &= t\ h'(\hat{\beta}_i)\ t^2 \\
                  &= t\ v\ h'(\gamma_i)\ v\ t^2 \\
                  &= t\ v\ h(0\alpha_10)\ v\ t^2 \\
                  &= h(\alpha)
    \end{align*}
    and by that $h(\alpha)\in L_{NE}(\beta)$.

    Now, assume $h(\alpha)\in L_{NE}(\beta)$. Then, there exists $h'\in H_{r_\beta,\ell_\beta}$ such that $h'(\beta) = h(\alpha)$. We know by $\ell_\beta$ that there exists $i\in[\mu]$ such that $|h(r_i)| = |\psi(r_i\beta_ir_{i+1}...r_\mu\hat{\beta}_\mu r_{\mu+1})|$, $|h(r_{i+1}) = |t\ \psi(r_1\hat{\beta}_1r_2...r_{i}\hat{\beta}_ir_{i+1})|$, and $|h(r_j)| = 1$ for all $j\in[\mu+1]\setminus\{i,i+1\}$. This, together with the languages defined by $r_\beta$, results in $h(r_i) =  \psi(r_i\hat{\beta}_ir_{i+1}...r_\mu\hat{\beta}_\mu r_{\mu+1})$, $h(r_{i+1}) = t\psi(r_1\hat{\beta}_1r_2...r_{i-1}\hat{\beta}_{i-1}r_i)$, and $h(r_j) = 0$ for all $j\in[\mu+1]\setminus\{i,i+1\}$. We have
    \begin{align*}
        h(\alpha) &= t\ v\ 0\ h(\alpha_1)\ 0\ v\ t^2 \\
                  &= \psi(r_1\hat{\beta}_1r_2...r_{\mu}\hat{\beta}_{\mu}r_{\mu+1})\ v\ h'(\gamma_i)\ v\ t\ \psi(r_1\hat{\beta}_1r_2...r_{\mu}\hat{\beta}_{\mu}r_{\mu+1}) \\
                  &= \psi(r_1\hat{\beta}_1r_2...r_{i-1}\hat{\beta}_{i-1})h'(r_i)\ v\ h'(\gamma_i)\ v\ h'(r_{i+1})\psi(\hat{\beta}_{i+1}r_{i+2}...r_{\mu}\hat{\beta}_{\mu}r_{\mu+1}) \\
                  &= h'(r_1\hat{\beta}_1r_2...r_{i-1}\hat{\beta}_{i-1}r_i)\ v\ h'(\gamma_i)\ v\ h'(r_{i+1}\hat{\beta}_{i+1}r_{i+2}...r_{\mu}\hat{\beta}_{\mu}r_{\mu+1}) \\
                  &= h(\beta).
    \end{align*}
    By that, we directly see that we must have $h(0\alpha_1 0) = h(\gamma_i)$ and by that we know that $h$ satisfies $\pi_i$.
\end{proof}

With that, we have shown that for some word $h(\alpha)\in L_{NE}(\alpha,r_\alpha,\ell_\alpha)$ we have that it exists in $L_{NE}(\beta,r_\beta,\ell_\beta)$ if and only if there exist some predicate that is satisfied by $h$. Now, we give the specific construction of all those predicates $\gamma_i$ for $i\in[\mu]$ and show that those suffice to characterize all words what are not valid encodings of computations of $A$. By that, we obtain that $L_{NE}(\alpha,r_\alpha,\ell_\alpha) = L_{NE}(\beta,r_\beta,\ell_\beta)$ if and only if $\ValC(A) = \emptyset$. Notice, that the construction of these predicates is similar to the construction done in \cite{Nowotka2024}, just with slight adaptations to work in the non-erasing case. Additionally, the predicates handling invalid counter changes or invalid transitions could be simplified in this considered case.

We define the language of \emph{words of good structure} by $L_G = L((\#\#0^+\#00^+\#00^+)^+\#\#)$. A word is said to be of \emph{good structure} if $w\in L_G$. Otherwise, $w$ is of \emph{bad structure}. Notice that all encodings of computations of $A$ are words of good structure. From now on always assume $h\in H_{r_\alpha,\ell_\alpha}$ is a valid substitution for $\alpha$.

(1) The first predicate $\gamma_1$ is supposed to be used to obtain all words where $h(\alpha_1)$ is not of good structure or does not start with the encoding of the initial configuration $(q_0,0,0)$. Let $\gamma_1 = y_1$ for a new and independent variable $y_1\in X$ and set
$$ L(y_1) = \{0\} \cup (\{0\} \cdot L_{gs} \cdot \{0\}) $$
with
$$L_{gs} = \overline{L(\#\#0\#0\#0(\#\#0^+\#0^+\#0^+)^*\#\#)}. $$
If for some word $h(\alpha_1) = u$ we have that $u$ is not of good structure or does not start with the encoding of a computation with its first configuration being the initial configuration $\enc((q_0,0,0)) = 0\#0\#0$, then there exists some $r_\beta$-$\ell_\beta$-valid substitution $\tau$ such that $\tau(\gamma_1) = 0u0 = h(0\alpha_10)$, hence $\pi_1$ is satisfied. Otherwise, we can also substitute $\gamma_1$ to $0$, satisfying both constraints we set for all $\gamma$'s in the beginning.

(2) The second kind of predicates $\gamma_2$ to $\gamma_{|Q|-|F|+1}$ can be used to obtain all words that end with the suffix of a word of good structure where the last encoded configuration is not in a final state. We know there are $|Q|-|F|$ non-final states. Hence, for each $i\in[|Q|-|F|+1]\setminus[1]$ we define $\gamma_i = y_i$ for a new and independent variable $y_i\in X$ and set
$$ L(y_i) = \{0\} \cup (\{0\}\cdot\Sigma^*\cdot\{\ \#\#0^{1+j}\#0^{1+m_1}\#0^{1+m_2}\#\#0\mid q_j\in Q\setminus F, m_1,m_2\in\N\ \}). $$
Then, if for any $h(\alpha_1) = u$ we have that $u$ ends in an encoding of a configuration of $A$ which has a non-final state $q_j$, then we can use $\gamma_i$ to obtain a substitution $\tau$ such that $\tau(\gamma_i) = 0u0$, hence satisfying $\pi_i$. Also, we can obtain $\tau(\gamma_i) = 0$ by the definition of $\gamma_i$, also satisfying the second condition.

(3) By now, we handle all cases of bad structure, a wrong initial configuration, and an invalid final configuration. Next, we have to define $4$ predicates $\gamma_{|Q|-|F|+2}$ to $\gamma_{|Q|-|F|+5}$ that handle the case of invalid counter changes. In particular, if any of the counters $m_1$ or $m_2$ is either increased or decreased by more than $1$ in one step, the counter change is not valid and one of these $4$ predicates should handle that case. We construct the predicate $\gamma_{|Q|-|F|+2}$ that handles the case that the first counter is increased by more than one as an exemplary case. The constructions for the other $3$ predicates are analogue to this one. Notice that in this case, we also define length constraints for the variables added here. Let
$$ \gamma_{|Q|-|F|+2} = y_{|Q|-|F|+2,1}\ x_{|Q|-|F|+2,1}\ y_{|Q|-|F|+2,2}\ x_{|Q|-|F|+2,1}\ y_{|Q|-|F|+2,3}$$
for new and independent variables $y_{|Q|-|F|+2,1},y_{|Q|-|F|+2,2},y_{|Q|-|F|+2,3},x_{|Q|-|F|+2,1}\in X$ and set the regular constraints by
\begin{align*}
    L(y_{|Q|-|F|+2,1}) &:= \{0\} \cup\ \{\ 0u0\# \mid u\in\Sigma^*, |u|_{\#^3} = 0\ \}, \\
    L(y_{|Q|-|F|+2,2}) &:= \{0\} \cup\ L(\#0^+\#\#0^+\#\mathbf{0^2}), \\
    L(y_{|Q|-|F|+2,3}) &:= \{0\} \cup\ \{\ \#0u0 \mid u\in\Sigma^*, |u|_{\#^3} = 0\}, \\
    L(x_{|Q|-|F|+2,1}) &:= \{0\}^+.
\end{align*}
The bold letters mark the position where the property that two $0$'s are added to the counter is encoded. Additionally, we can define length constraints to simplify the cases we have to distinct. The additional length constraints $\ell_{\beta,|Q|-|F|+2,j}$ for all $j\in[|\var(\gamma_{|Q|-|F|+2,j})|]$ are given by the following disjunction of two systems:

\begin{itemize}
    \item $x = 1$ for all $x\in\var(\gamma_{|Q|-|F|+2})$
    \item[] $\qquad\qquad\qquad\lor$
    \item $x \geq 2$ for all $x\in\var(\gamma_{|Q|-|F|+2})\setminus\{x_{|Q|-|F|+2,1}\}$
\end{itemize}

Then, first of all, we can obtain $0^{|\gamma_i|} = 0^5$ from $\gamma_i$ by setting $\tau(x) = 0$ for all variables $x\in\var(\gamma_{|Q|-|F|+2})$. This satisfies the first constraint from before. Now, if and only if $h(0\alpha_10) = 0u0$ such that $u$ has a factor $\#0^{1+m_1}\#0^{1+m_2}\#\#0^{1+j}\#\mathbf{0^2}0^{1+m_1}\#$ for some $m_1,m_2\in\N$ and $q_j\in Q$, we find some $r_\beta$-$\ell_\beta$-valid substitution $\tau$ such that $\tau(\gamma_{|Q|-|F|+2}) = 0u0$. The only if direction follows by the definition of the regular constraints, the if direction follows by the fact, that all variables have to be either substituted to length $1$, resulting in $0^5$, or that every variable $y_{|Q|-|F|+2,k}$ for $k\in[3]$ has at least length $2$, resulting in a word with a factor as described above. By switching the position of the $\#\#$ and the additional $\mathbf{0^2}$ in $L(y_{|Q|-|F|+2,2})$, all other predicates for this case can be constructed analogously to this one.

(4) With $(1)$, $(2)$, and $(3)$ we have that some $h\in H_{\alpha}$ with $h(0\alpha_10) = 0u0$ does not satisfy any predicate, yet, if and only if $u$ is a word of good structure in which every subsequent encoded configuration has a counter change of at most $1$ per counter. By that, already all encodings of valid computations of $A$ are a subset of these words. However, the case of valid counter changes but invalid transitions is not handled yet. These are handled by the fourth and final kind of predicates. So, based on $\delta$ in $A$, for all $q_k,q_j\in Q$, $c_1,c_2\in\{0,1\}$, and $r_1,r_2\in\{-1,0,1\}$ with $(q_k,r_1,r_2)\notin\delta(q_j,c_1,c_2)$, we define a new and independent predicate $\gamma$ which can be used to obtain all encodings of computations that utilize this transition. The construction is demonstrated using an exemplary case and can be adapted, similarly to the predicates defined in (3), based on all other invalid transitions. Assume, for some invalid transition $(q_k,r_1,r_2)\notin\delta(q_j,c_1,c_2)$ we had $c_1 = 1$, $c_2 = 0$, $r_1 = +1$, $r_2 = 0$, and $q_j,q_k\in Q$. Assume $n$ is the index of the current predicate and set $\mu$ to be the total number of predicates overall. We set
$$ \gamma_n = y_{n,1}\ x_{n,1}\ y_{n,2}\ x_{n,2}\ y_{n,3}\ x_{n,1}\ y_{n,4}\ x_{n,2}\ y_{n,5}$$
for new and independent variables $y_{n,1},y_{n,2},y_{n,2},y_{n,4},y_{n,5},x_{n,1},x_{n,2}\in X$. The regular constraints for this predicate are defined by
\begin{align*}
    L(y_{n,1}) &:= \{0\} \cup \{\ u0\#\#0^{1+j}\# \mid u\in\{\varepsilon\}\cup (\{0\}\cdot\Sigma^*)\ \}, \\
    L(y_{n,2}) &:= \{0\} \cup \{\ 0\#\ \}, \\
    L(y_{n,3}) &:= \{0\} \cup \{\ \#\#0^{1+k}\#\ \},\\
    L(y_{n,4}) &:= \{0\} \cup \{\ 0\mathbf{0}\#\ \}, \\
    L(y_{n,5}) &:= \{0\} \cup \{\ \#\#0u \mid u\in\{\varepsilon\}\cup(\Sigma^*\cdot\{0\})\ \}, \\
    L(x_{n,1}) &:= \{0\}^+, \\
    L(x_{n,2}) &:= \{0\}^+.
\end{align*}
The bold letter marks the position where the $r_1 = +1$ is encoded. Again, we define length constraints to restrict the considered cases to be either that all variables are substituted to length $1$ or that all variables $y_{n,r}$ for $r\in[5]$ are substituted by at least length $2$. The additional length constraints for each $\ell_{\beta,n,r'}$ for $r'\in[|\var(\gamma_{n})|]$ are given by the following disjunction of two systems:

\begin{itemize}
    \item $x = 1$ for all $x\in\var(\gamma_n)$
    \item[] $\qquad\qquad\qquad\lor$
    \item $x \geq 2$ for all $x\in\var(\gamma_n)\setminus\{x_{n,1},x_{n,2}\}$
\end{itemize}

First, we can obtain $0^{|\gamma_{n}|} = 0^9$ from $\gamma_n$ by setting $\tau(x) = 0$ for all $x\in\var(\gamma_n)$. This satisfies the first assumption for $\gamma_n$ from before. Next, if and only if $h(0\alpha_10) = 0u0$ such that $u$ has a factor $$ \#\#0^{1+j}\#0^{2+m_1}\#0^{1+m_2}\#\#0^{1+k}\#0^{2+m_1}\textbf{0}\#0^{1+m_2}\#\# $$
for some $m_1,m_2\in\N$, we find some $r_\beta$-$\ell-\beta$-valid substitution $\tau$ such that $\tau(\gamma_n) = 0u0$. The only if direction follows by applying the definition of the regular constraints. The if direction follows by the fact that all variables are either substituted to length $1$, resulting in $0^9$, or that every variable $y_{n,r}$ for $r\in[5]$ has at least length 2, resulting in a word with the factor as described above. By switching the positions of $0$'s in $L(y_{n,2})$ and $L(y_{n,4})$ we can adapt this construction for all $c_1$, $c_2$, $r_1$, and $r_2$ values and construct all predicates for invalid transitions analogously. As there are only $2$ counters, $|Q|$ many states, two different values for $c_1$ and $c_2$ as well as three different values for $r_1$ and $r_2$, we know that the number of invalid transitions is finite. This concludes the construction of predicates of type (4).

Using all predicates defined in $(1)$, $(2)$, $(3)$, and $(4)$, given some $r_\alpha$-$\ell_\alpha$-valid $h\in H$ and applying Lemma \ref{lemma:patlang-reglen-ne-predicates-iff}, we can conclude that $h(\alpha)\notin L_{NE}(\beta,r_\beta,\ell_\beta)$ if and only if $h(\alpha) = 0u0$ and $u\in \ValC(A)$. As $L_{NE}(\beta,r_\beta,\ell_\beta) \subseteq L_{NE}(\alpha,r_\alpha,\ell_\alpha)$, we have that $L_{NE}(\alpha,r_\alpha,\ell_\alpha) = L_{NE}(\beta,r_\beta,\ell_\beta)$ if and only if $\ValC(A) = \emptyset$. This decides the emptiness problem of nondeterministic 2-counter automata without input, hence the equivalence problem for non-erasing pattern languages with regular- and length-constraints is undecidable.

Like in \cite{Nowotka2024}, for any larger alphabet, we may always restrict the alphabets of each variable using additional regular constraints. This allows for the reduction of the binary case to any larger alphabets. Additionally, also as shown in \cite{Nowotka2024}, we can use terminal-free patterns with additional regular constraints to obtain the exact same reduction by exchanging each terminal letter $0$ and $\#$ with an occurrence of a new variable $x_{'0'}$ and $x_{'\#'}$ respectively and setting the languages $L(x_{'0'}) = \{0\}$ and $L(x_{'\#'})$. Hence, this problem is undecidable for all alphabets $\Sigma$ of sizes $|\Sigma| \geq 2$ for all patterns, even if we are restricted to the terminal-free cases only. This concludes the proof of Theorem \ref{theorem:patlang-reglen-nonerasing-equivalence-undecidability}.

\newpage

\section{NP-hardness of the Unary Case for the Equivalence Problem for Non-Erasing Pattern Languages with Length Constraints}
\label{section:proof4}

    This section is dedicated to show Proposition~\ref{prop:pat-len-equiv-ne-hard}. W.l.o.g. assume $\Sigma = \{0\}$. We proceed with a reduction from $\mathtt{3SAT}$. Assume w.l.o.g. $\mathcal{X} = \{X_1,X_2,...\}$ to be a set of boolean variables and let $\overline{\mathcal{X}} = \{\overline{X_1}, \overline{X_2}, ...\}$ be the set of their negations. Let $\varphi = (\varphi_1,\varphi_2,...,\varphi_n)$ for $n\in\N$ with $n \geq 2$ be a $\mathtt{3SAT}$ formula in conjunctive normal form over $\mathcal{X}\cup\overline{\mathcal{X}}$ such that for each clause $\varphi_i$ with $i\in[n]$ we have w.l.o.g. $\varphi_i = (X_{i,1} \lor X_{i,2} \lor X_{i,3})$ with $X_{i,j}\in \mathcal{X}\cup\overline{\mathcal{X}}$ for $j\in\{1,2,3\}$. We define a function $f : \mathcal{X}\cup\overline{\mathcal{X}} \rightarrow X$ that maps each boolean variable to a pattern variable by
    $$ f(X) = \begin{cases}
                u_i    &\text{, if } x\in\mathcal{X} \text{ and } x = X_i \\
                v_i    &\text{, if } x\in\overline{\mathcal{X}} \text{ and } x = \overline{X_i}
              \end{cases}$$
    for new and independent pattern variables $u_i$ and $v_i$. Clearly, one of the two cases is always fulfilled for each boolean variable $X\in\mathcal{X}\cup\overline{\mathcal{X}}$. We proceed by defining the pattern $(\alpha,\ell_\alpha)$ by
    $$ \alpha = f(X_{1,1})f(X_{1,2})f(X_{1,3})y_1\ f(X_{2,1})f(X_{2,2})f(X_{2,3})y_2\ ...\ f(X_{n,1})f(X_{n,2})f(X_{n,3})y_n $$
    for new and independent variables $y_i\in X$ with $i\in[n]$. Notice that this pattern is terminal-free. The length constraints $\ell_\alpha$ are defined by the following system:
    \begin{itemize}
        \item $u_i + v_i = 3$ for all $i\in[n]$
        \item $y_i \leq 3$ for all $i\in[n]$
        \item $f(X_{i,1}) + f(X_{i,2}) + f(X_{i,3}) + y_i = 7$ for all $i\in[n]$
        \item $\sum_{i=1}^{n}(f(X_{i,1}) + f(X_{i,2}) + f(X_{i,3}) + y_i) = 7n$
    \end{itemize}
    Notice, that the final constraint results in that $L_{NE}(\alpha,\ell_\alpha)$ may only have the word $0^{7n}$ in it, as it restricts the length of all symbols in $\alpha$ together. Now, we set the second pattern with length constraints $(\beta,\ell_\beta)$ by
    $$ \beta = z$$
    for a new and independent variable $z\in X$ and define the length constraint $\ell_\beta$ by $z = 0^{7n}$. Hence, $L_{NE}(\beta,\ell_\beta) = \{0^{7n}\}$.
    
    To proof the reduction, first, assume there exists a satisfying assignment of variables $\phi$  for $\varphi$. Let $h$ be a substitution such that $h(u_i) = 00$ and $h(v_i) = 0$ if and only if $\phi(X_i) = true$ and $\phi(\overline{X_i}) = false$. Otherwise, set $h(u_i) = 0$ and $h(v_i) = 00$. As $\phi$ is satisfying $\varphi$, we know that $4 \leq |h(f(X_{i,1})f(X_{i,2})f(X_{i,3}))| \leq 6$ for all $i\in[n]$. Hence, we can always set $h(y_i) = 0^{7-|h(f(X_{i,1})f(X_{i,2})f(X_{i,3}))|}$ and obtain $|h(f(X_{i,1})f(X_{i,2})f(X_{i,3})y_i)| = 7$ for all $i\in[n]$. Hence, we $h$ must be $\ell_\alpha$-valid and we have $h(\alpha) = 0^7n$. As this is the only word we may obtain for $L_{NE}(\alpha,\ell_\alpha)$, we have $L_{NE}(\alpha,\ell_\alpha) = L_{NE}(\beta,\ell_\beta)$.

    For the other direction, assume $L_{NE}(\alpha,\ell_\alpha) = L_{NE}(\beta,\ell_\beta) = \{0^{7n}\}$. So, there exists some $h\in H_{\ell_\alpha}$ such that $h(\alpha) = 0^{7n}$. By $\ell_\alpha$, we know $h(f(X_{i,1})f(X_{i,2})f(X_{i,3})y_i) = 7$ for all $i\in[n]$. As $|h(y_i)| \leq 3$ must be the case, for each $i\in[n]$, there exists some $j\in\{1,2,3\}$ such that $|h(f(X_{i,j}))| > 1$, hence by $\ell_\alpha$, we must have $|h(f(X_{i,j}))| = 2$, resulting in $h(f(X_{i,j})) = 00$. We set an assignment of variables $\phi$ by $X_i = true$ and $\overline{X_i} = false$ if and only if $h(u_i) = 00$ and $h(x_i) = 0$, otherwise set $X_i = false$ and $\overline{X_i} = true$. As $|h(u_i)| + |h(v_i)| = 3$, we know such an assignment is a valid assignment for $\varphi$. As for each clause $\phi_i$ for $i\in[n]$ we find a variable that is set to $true$, we know that $\varphi$ is satisfied by $\phi$, concluding the reduction.

\end{document}